\let\rel@kern\relax
\let\widebar\relax
\newcolumntype{x}[1]{>{\centering\arraybackslash}p{#1}}
\titleformat{\section}{\centering\bfseries\boldmath\MakeUppercase}{\thesection.}{0.5em}{}
\titleformat{\subsection}{\centering\bfseries\boldmath}{\thesubsection.}{0.5em}{}
\newtheorem{thm}{Theorem}
\newtheorem*{thm*}{Theorem}
\newtheorem{prop}[thm]{Proposition}
\newtheorem*{prop*}{Proposition}
\newtheorem{lemma}[thm]{Lemma}
\newtheorem*{lemma*}{Lemma}
\newtheorem{cor}[thm]{Corollary}
\newtheorem*{cor*}{Corollary}
\newtheorem{cj}[thm]{Conjecture}
\newtheorem*{cj*}{Conjecture}
\newtheorem{Def}[thm]{Definition}
\newtheorem*{Def*}{Definition}
\newtheorem*{question*}{Question}
\newtheorem*{problem*}{Problem}
\def\thmhead@plain#1#2#3{%
  \thmname{#1}\thmnumber{\@ifnotempty{#1}{ }\@upn{#2}}%
  \thmnote{ {\the\thm@notefont#3}}}
\let\thmhead\thmhead@plain
\theoremstyle{definition}
\newtheorem{rem}[thm]{Remark}
\newtheorem*{note}{Note}
\newenvironment{manualthm}[1]{%
  \manualthminner
}{\endmanualthminner}
\newenvironment{manualprop}[1]{%
  \manualpropinner \it
}{\endmanualpropinner}
\newenvironment{manualcor}[1]{%
  \manualcorinner \it
}{\endmanualcorinner}
\newcommand{\bb}{\begin{equation}\begin{aligned}\hspace{0pt}}
\newcommand{\bbb}{\begin{equation*}\begin{aligned}}
\newcommand{\ee}{\end{aligned}\end{equation}}
\newcommand{\eee}{\end{aligned}\end{equation*}}
\newcommand\floor[1]{\lfloor#1\rfloor}
\newcommand\ceil[1]{\left\lceil#1\right\rceil}
\newcommand{\eqt}[1]{\stackrel{\mathclap{\mbox{\scriptsize #1}}}{=}}
\newcommand{\leqt}[1]{\stackrel{\mathclap{\mbox{\scriptsize #1}}}{\leq}}
\newcommand{\geqt}[1]{\stackrel{\mathclap{\mbox{\scriptsize #1}}}{\geq}}
\newcommand{\ketbra}[1]{\ket{#1}\!\!\bra{#1}}
\newcommand{\ketbraa}[2]{\ket{#1}\!\!\bra{#2}}
\newcommand{\sumno}{\sum\nolimits}
\newcommand{\e}{\varepsilon}
\renewcommand{\epsilon}{\varepsilon}
\newcommand{\tcr}[1]{{\color{red!85!black} #1}}
\newcommand{\tcb}[1]{{\color{blue!80!black} #1}}
\newcommand{\id}{\mathds{1}}
\newcommand{\R}{\mathds{R}}
\newcommand{\N}{\mathds{N}}
\newcommand{\C}{\mathds{C}}
\newcommand{\cptp}{\mathrm{CPTP}}
\newcommand{\locc}{\mathrm{LOCC}}
\newcommand{\ppt}{\mathrm{PPT}}
\newcommand{\PPT}{\pazocal{P\!P\!T}}
\DeclareMathOperator{\Tr}{Tr}
\DeclareMathAlphabet{\pazocal}{OMS}{zplm}{m}{n}
\newcommand{\HH}{\pazocal{H}}
\newcommand{\NN}{\pazocal{N}}
\newcommand{\XX}{\pazocal{X}}
\newcommand{\FF}{\pazocal{F}}
\newcommand{\lsmatrix}{\left(\begin{smallmatrix}}
\newcommand{\rsmatrix}{\end{smallmatrix}\right)}
\newcommand{\deff}[1]{\textbf{\emph{#1}}}
\newcommand*\rel@kern[1]{\kern#1\dimexpr\macc@kerna}
\newcommand*\widebar[1]{%
  \begingroup
  \def\mathaccent##1##2{%
    \rel@kern{0.8}%
    \overline{\rel@kern{-0.8}\macc@nucleus\rel@kern{0.2}}%
    \rel@kern{-0.2}%
  }%
  \macc@depth\@ne
  \let\math@bgroup\@empty \let\math@egroup\macc@set@skewchar
  \mathsurround\z@ \frozen@everymath{\mathgroup\macc@group\relax}%
  \macc@set@skewchar\relax
  \let\mathaccentV\macc@nested@a
  \macc@nested@a\relax111{#1}%
  \endgroup
}
\newcommand{\fakepart}[1]{
 \par\refstepcounter{part}
  \sectionmark{#1}
}
\tikzset{meter/.append style={draw, inner sep=10, rectangle, font=\vphantom{A}, minimum width=30, line width=.8, path picture={\draw[black] ([shift={(.1,.3)}]path picture bounding box.south west) to[bend left=50] ([shift={(-.1,.3)}]path picture bounding box.south east);\draw[black,-latex] ([shift={(0,.1)}]path picture bounding box.south) -- ([shift={(.3,-.1)}]path picture bounding box.north);}}}
\tikzset{roundnode/.append style={circle, draw=black, fill=gray!20, thick, minimum size=10mm}}
\tikzset{squarenode/.style={rectangle, draw=black, fill=none, thick, minimum size=10mm}}
\definecolor{Blues5seq1}{RGB}{239,243,255}
\definecolor{Blues5seq2}{RGB}{189,215,231}
\definecolor{Blues5seq3}{RGB}{107,174,214}
\definecolor{Blues5seq4}{RGB}{49,130,189}
\definecolor{Blues5seq5}{RGB}{8,81,156}
\definecolor{Greens5seq1}{RGB}{237,248,233}
\definecolor{Greens5seq2}{RGB}{186,228,179}
\definecolor{Greens5seq3}{RGB}{116,196,118}
\definecolor{Greens5seq4}{RGB}{49,163,84}
\definecolor{Greens5seq5}{RGB}{0,109,44}
\definecolor{Reds5seq1}{RGB}{254,229,217}
\definecolor{Reds5seq2}{RGB}{252,174,145}
\definecolor{Reds5seq3}{RGB}{251,106,74}
\definecolor{Reds5seq4}{RGB}{222,45,38}
\definecolor{Reds5seq5}{RGB}{165,15,21}
\renewcommand\onecolumngrid{
\do@columngrid{one}{\@ne}%
\def\set@footnotewidth{\onecolumngrid}
\def\footnoterule{\kern-6pt\hrule width 1.5in\kern6pt}%
}
\newcommand{\raisemath}[1]{\mathpalette{\raisem@th{#1}}}
\newcommand{\raisem@th}[3]{\raisebox{#1}{$#2#3$}}
\newcommand{\exact}{\mathrm{exact}}
\newcommand{\ecost}{E^{\raisemath{0.5pt}{\exact}}_{ c,\,\ppt}}
\newcommand{\LL}{\pazocal{L}}
\pgfplotsset{width=10cm,compat=1.9}
\algrenewcommand\algorithmicrequire{\textbf{Input:}}
\algrenewcommand\algorithmicensure{\textbf{Output:}}
\let\tcb\relax
\let\tcr\relax
\begin{document}

\title{Computable entanglement cost under positive partial transpose operations}

\author{Ludovico Lami}
\email{ludovico.lami@gmail.com}
\affiliation{Scuola Normale Superiore, Piazza dei Cavalieri 7, 56126 Pisa, Italy}
\affiliation{QuSoft, Science Park 123, 1098 XG Amsterdam, the Netherlands}
\affiliation{Korteweg--de Vries Institute for Mathematics, University of Amsterdam, Science Park 105-107, 1098 XG Amsterdam, the Netherlands}
\affiliation{Institute for Theoretical Physics, University of Amsterdam, Science Park 904, 1098 XH Amsterdam, the Netherlands}

\author{Francesco Anna Mele}
\email{francesco.mele@sns.it}
\affiliation{NEST, Scuola Normale Superiore and Istituto Nanoscienze, Consiglio Nazionale delle Ricerche, Piazza dei Cavalieri 7, IT-56126 Pisa, Italy}

\author{Bartosz Regula}
\email{bartosz.regula@gmail.com}
\affiliation{Mathematical Quantum Information RIKEN Hakubi Research Team, RIKEN Cluster for Pioneering Research (CPR) and RIKEN Center for Quantum Computing (RQC), Wako, Saitama 351-0198, Japan}

\begin{abstract}
Quantum information theory is plagued by the problem of regularisations, which require the evaluation of formidable asymptotic quantities. This makes it computationally intractable to gain a precise quantitative understanding of the ultimate efficiency of key operational tasks such as entanglement manipulation. Here we consider the problem of computing the asymptotic entanglement cost of preparing noisy quantum states under quantum operations with positive partial transpose (PPT). 
\tcr{By means of an analytical example, a previously claimed solution to this problem is shown to be incorrect. Building on a previous characterisation of the PPT entanglement cost in terms of a regularised formula, we construct instead a hierarchy of semi-definite programs that bypasses the issue of regularisation altogether, and converges to the true asymptotic value of the entanglement cost.}
Our main result establishes that this convergence happens exponentially fast, thus yielding an efficient algorithm that approximates the cost up to an additive error $\e$ in time $\operatorname{poly}\!\big(D,\,\log(1/\e)\big)$, where $D$ is the underlying Hilbert space dimension. To our knowledge, this is the first time that an asymptotic entanglement measure is shown to be efficiently computable despite no closed-form formula being available. 
\end{abstract}

\maketitle

\let\oldaddcontentsline\addcontentsline
\renewcommand{\addcontentsline}[3]{}
\fakepart{Main text}

\noindent \textbf{\em Introduction.}--- Quantum Shannon theory studies the fundamental limitations on the manipulation of quantum information in the presence of external noise. Calculating those limits often involves computing certain functions that encapsulate the ultimate capabilities of information carriers. Paradigmatic examples include the various capacities of quantum channels, such as the classical~\cite{Holevo-S-W, H-Schumacher-Westmoreland}, quantum~\cite{Lloyd-S-D, L-Shor-D, L-S-Devetak}, private~\cite{L-S-Devetak, CWY}, and entanglement-assisted~\cite{entanglement-assisted, Bennett2002} capacities, but also the operational entanglement measures that tell us how much entanglement can be extracted from a given bipartite quantum state, i.e.\ the distillable entanglement~\cite{Bennett-distillation, Bennett-distillation-mixed, Bennett-error-correction, devetak2005, DNE-distillable}, and vice versa, how much entanglement must be invested to \emph{create} that state~\cite{Hayden-EC, faithful-EC, Buscemi2011, EC-infinite}. This latter quantity, the entanglement cost, is the main focus of this work.

With the sole exception of the entanglement-assisted capacity, 
all of the above functions are expressed by \emph{regularised formulas}, i.e.\ formulas that involve an explicit limit $n\to\infty$ over the number of uses of the channel or \tcb{the} available copies of the state. For example, by the Lloyd--Shor--Devetak theorem~\cite{Lloyd-S-D, L-Shor-D, L-S-Devetak} the quantum capacity of a channel $\NN$ equals $Q(\NN) = \lim_{n\to\infty} \frac1n\, I_c\big(\NN^{\otimes n}\big)$, where $I_c(\NN)$ is the `coherent information' of $\NN$, and $\NN^{\otimes n}$ represents $n$ parallel uses of $\NN$. In stark contrast with classical information theory, for quantum channels it holds in general that $I_c(\NN^{\otimes n}) \neq n I_c(\NN)$, meaning that evaluating the limit cannot be avoided. Such non-additivity is a fundamental feature of most settings encountered in quantum information~\cite{Werner-symmetry, Shor2004, Hayden-p>1, Cubitt-p-->0, Hastings2008, superactivation}. Analogously, the entanglement cost of preparing a state $\rho=\rho_{AB}$ using local operations and classical communication (LOCC) is given~by 
\bb
E_{c,\,\locc}(\rho) = \lim_{n\to\infty} \frac1n\, E_f\big(\rho^{\otimes n}\big)\, ,
\label{cost_LOCC}
\ee
where $E_f$ is 
the `entanglement of formation'~\cite{Bennett-error-correction, Wootters1998}. The precise nature of these formulas is not so important here; what is important, however, is that the regularisation $n\to\infty$ makes them analytically hard to control and computationally intractable. Indeed, on the one hand the dimension of the quantum system on which $\NN^{\otimes n}$ acts, or to which $\rho^{\otimes n}$ pertains, is exponential in $n$, quickly rendering numerical calculations infeasible as $n$ grows; on the other, there is no guarantee on the quality of the approximation obtained by stopping at the $n^{\text{th}}$ level 
in any of these formulas --- for instance, an unbounded $n$ may be required to even check that the quantum capacity is non-zero~\cite{Cubitt2015}. The regularisation thus appears to be an omnipresent curse that stifles almost every attempt to quantitatively understand the ultimate limitations of quantum information manipulation. 


But is that really so? In this work we show how to overcome this fundamental obstacle in a specific case, by efficiently calculating a type of entanglement cost --- expressed as a regularised quantity --- on all quantum states. To this end, we look at a problem that has been studied by many authors~\cite{Martin-exact-PPT, Xin-exact-PPT, Wang2023, Gour2020, Gour2019}, but for which a full solution had not been found prior to our work: namely, zero-error asymptotic entanglement 
cost under 
`positive partial transpose' (PPT) operations. 
The roots of PPT transformations lie in the fundamental connection between entanglement and partial transposition identified by Peres~\cite{PeresPPT} and the Horodeckis~\cite{HorodeckiPPT, Horodecki-PPT-entangled}. In the context of entanglement manipulation, PPT operations were introduced in the pioneering works of Rains~\cite{Rains1999, Rains2001} as a mathematically natural relaxation of the much more complicated set of LOCC operations, providing a convenient way to gain some insights into the latter and, importantly, giving hope for an easier 
understanding of some of the fundamental limits of entanglement transformations. Because of this, they attracted significant attention in the operational study of quantum entanglement, but even in this technically simpler setting the fundamental questions in entanglement manipulation remained unsolved.

By introducing an efficient algorithm for the evaluation of the asymptotic PPT entanglement cost, we solve this long-standing problem and exhibit an entanglement measure that is both \emph{computable} and \emph{operationally meaningful} for general --- pure and mixed --- quantum states. This can be contrasted with other entanglement measures, which are either defined in terms of operational tasks and thus suffer from the problem of regularisation, making efficient evaluation impossible, or they are simply abstract mathematical constructions with no precise operational meaning. Prior to our work, the PPT entanglement cost was only known to be computable for a specific class of quantum states~\cite{Martin-exact-PPT} (including pure states, Gaussian states, Werner states, and two-qubit states~\cite{Ishizaka2004}), with its value given by a celebrated entanglement measure known as the \emph{logarithmic negativity}~\cite{Martin-exact-PPT,negativity, plenioprl}. The logarithmic negativity is one of the most widely employed entanglement measures, having found numerous applications in all corners of quantum physics~\cite{Ent_neg_Calabrese, Calabrese_2013, Calabrese_2014, Ent_neg_Lee, Hoogeveen_2015a, Eisler_2014, Ent_neg_random_spin, Dong2024}, from quantum field theory~\cite{Ent_neg_Calabrese} to condensed matter~\cite{Ent_neg_Lee}. The success of the logarithmic negativity is due to two main reasons: first, its efficient computability~\cite{negativity}, which makes it highly suitable for numerical applications, and second, its \emph{partial} operational interpretation, which guarantees that it coincides with the zero-error PPT entanglement cost for this specific class of quantum states~\cite{Martin-exact-PPT}. While encouraging, this partial operational interpretation is arguably not satisfactory if this measure is to be applied widely in quantum physics. Indeed, beyond this restricted class of states the negativity is not known to be an operational quantity, and it is unclear whether e.g.\ the states considered in~\cite{Calabrese_2013, Calabrese_2014, Dong2024} genuinely belong to the class in question.


It was recently claimed that the PPT entanglement cost can be computed exactly for all states~\cite{Xin-exact-PPT, Wang2023}, but --- as we show below (see also~\cite{Xin-Mark-errata}) --- this claim relies on some erroneous assertions. In this work, building on the partial results of~\cite{Xin-exact-PPT, Wang2023}, we find the correct generalisation of the logarithmic negativity that enjoys the operational interpretation of being the zero-error PPT entanglement cost for \emph{all} quantum states. Crucially, 
our generalisation is also efficiently computable, requiring almost the same computational time as the logarithmic negativity. This opens the door to numerous applications in quantum physics, serving as a fully operationally motivated alternative to the negativity as a computable entanglement quantifier. 

\medskip \noindent 
\textbf{\em Zero-error PPT entanglement cost.\,}--- 
The goal of 
zero-error PPT entanglement dilution is to prepare $n$ copies of a given bipartite quantum state $\rho = \rho_{AB}$ by consuming as few singlets $\Phi_2$ (i.e.\ two-qubit maximally entangled states) as possible and using PPT operations only. 
Here, a channel $\Lambda$ is called PPT if its partial transposition $\Gamma \circ \Lambda \circ \Gamma$ is also a valid quantum channel, where $\Gamma$ is the partial transpose operation~\cite{PeresPPT} defined as $\Gamma(X_A\otimes Y_B) = X_A \otimes Y_B^\intercal$.
We say that a number $R$ is an \emph{achievable rate} if for all sufficiently large $n$ there exists a PPT operation $\Lambda_n$ with the property that $\Lambda_n\big(\Phi_2^{\otimes \floor{Rn}}\big) = \rho^{\otimes n}$. By definition, the zero-error PPT entanglement cost of $\rho$, denoted as $\ecost(\rho)$, is the infimum of all achievable rates $R$. In this work, we show how to compute $\ecost(\rho)$ for any quantum state $\rho$. 

In a nutshell, the key reason why many authors~\cite{Martin-exact-PPT, Xin-exact-PPT, Wang2023, Gour2020, Gour2019} have been interested in the problem of entanglement dilution under PPT operations is that it provides a more tractable model for the fundamental problem of entanglement dilution under LOCC operations, whose underlying figure of merit, given by~\eqref{cost_LOCC}, is computationally inaccessible in most cases of interest. 
In general, PPT operations are a superset of LOCC, but any PPT channel can be implemented in a stochastic manner by LOCC together with the assistance of PPT states~\cite{Rains2001,cirac_2001}; since it is known that PPT states possess only a weak form of entanglement 
\tcb{that} cannot be distilled~\cite{HorodeckiBound}, such states can be considered as a `cheap' resource, thus providing intuition for the PPT operations being a prudent relaxation of the power of LOCC. 
Such PPT-based approaches have \tcb{also} attracted significant attention 
in other parts of entanglement theory~\cite{Rains1999, Rains2001, Eggeling2001, irreversibility-PPT}.

Importantly, the optimal performance achievable under PPT operations also establishes bounds and no-go limits on what can be achieved under LOCC in practice. These bounds are often the tightest available, as evidenced in the contexts of channel capacities~\cite{Wang_2018, Fang2019} and entanglement distillation~\cite{Rains2001}.  
Let us now consider the task of entanglement dilution under LOCC operations, which is defined as above but with LOCC operations replacing PPT operations. We denote the corresponding figure of merit by $E^{\exact}_{ c,\,\locc}(\rho)\vphantom{\Big|}$. Due to PPT being an outer approximation to LOCC, it holds that
\bb
\ecost(\rho) \leq E^{\exact}_{ c,\,\locc}(\rho)
\ee
for all bipartite states $\rho = \rho_{AB}$. Our main result allows us to calculate the left-hand side, which is particularly significant because establishing computable lower bounds on $E^{\exact}_{ c,\,\locc}(\rho)$ is a priori difficult, as to do that one needs to 
constrain \emph{all} possible LOCC entanglement dilution protocols. (To prove an upper bound, on the contrary, it suffices to exhibit an explicit dilution protocol.)



We also remark that, in our definition, `zero error' means that we require the transformation of $\Phi_2^{\otimes \floor{Rn}}$ into $\rho^{\otimes n}$ to be realised exactly. This is a model of entanglement dilution that has been studied before in several different contexts~\cite{Martin-exact-PPT,HAYASHI,Buscemi2011}. The setting contrasts with definitions that allow for an asymptotically vanishing error in the transformation~\cite{Bennett-error-correction,Hayden-EC,Buscemi2011}
However, in the \tcb{Supplemental Material (SM)}~\cite{Note1} we show that no substantial change occurs if we require that the error, instead of being exactly zero, decay to zero sufficiently fast --- the relevant figure of merit is then still $\ecost$.

\medskip \noindent 
\textbf{\em Prior work.\,}---
In a pioneering paper by Audenaert, Plenio, and Eisert~\cite{Martin-exact-PPT} it was shown that the PPT entanglement cost $\ecost(\rho)$ can be evaluated exactly for all bipartite states $\rho = \rho_{AB}$ that satisfy a condition known as `zero bi-negativity'~\cite{Audenaert2002, Martin-exact-PPT, Ishizaka2004}. Specifically, if $|\rho^{\raisebox{-2pt}{\scriptsize $\Gamma$}}|^\Gamma \geq 0$, where $X^\Gamma$ denotes the partial transpose $\Gamma(X)$ of an operator $X$ and 
$|X|\coloneqq\sqrt{X^\dagger X}$ is 
its absolute value, then
\bb 
\ecost(\rho) = \log_2 \big\|\rho^\Gamma\big\|_1 \eqqcolon E_N(\rho)\, .
\label{APE_result}
\ee
The expression on the right, $E_N$, is 
the \emph{
logarithmic negativity}~\cite{negativity, plenioprl}, obtained by simply evaluating the trace norm $\|\cdot\|_1\coloneqq \Tr|\cdot|$ of the partially transposed state. This framework thus provides a partial operational interpretation for $E_N$ through its 
equality with the zero-error PPT cost $\ecost$ for some states. 
At the same time, Eq.~\eqref{APE_result} effectively solves the problem of computing $\ecost$ for states with zero bi-negativity, because $E_N$ is efficiently computable via a \emph{semi-definite program} (SDP)~\cite{WATROUS,Skrzypczyk_2023,vandenberghe_1996}:
\bb
E_N(\rho) = \log_2 \min\left\{ \Tr S:\ -S\leq \rho^\Gamma \leq S\right\} .
\ee

But what to do for those states that have non-zero bi-negativity, i.e.\ satisfy $|\rho^{\raisebox{-2pt}{\scriptsize $\Gamma$}}|^\Gamma \not\geq 0$? In 
their recent works~\cite{Xin-exact-PPT, Wang2023,Xin-Mark-errata}, Wang and Wilde showed that $\ecost$ can be expressed as a regularisation of another quantity called $E_\kappa$. They then made an even stronger claim that $E_\kappa$ is in fact \emph{additive}, meaning that $E_\kappa \big(\rho^{\otimes n}\big) \eqt{?} n\, E_\kappa(\rho)$ holds true for all bipartite states $\rho$ and all positive integers $n$, thus completely eliminating the issue of regularisation.
This would imply a general computable solution for 
$\ecost$: it would simply coincide with 
$E_\kappa$, which is computable via an SDP~\cite{Xin-exact-PPT, Wang2023,Xin-Mark-errata}.
However, in Lemma~\ref{punch_card_lemma} below we construct a simple counterexample that disproves the claimed additivity of $E_\kappa$. Its existence shows that in general $E_\kappa \neq \ecost$, thus invalidating the computable solution claimed in~\cite{Xin-exact-PPT, Wang2023} and reopening the question of whether the asymptotic cost $\ecost$ 
can be efficiently evaluated. Further details concerning the claims of~\cite{Xin-exact-PPT,Wang2023} can be found in~\cite{Xin-Mark-errata}.

The equivalence between the regularisation of the quantity $E_\kappa$ and the PPT entanglement cost revealed in~\cite{Xin-exact-PPT,Wang2023} will still prove useful to us, albeit \emph{a priori} it is not clear how it could lead to a computable formula for $\ecost$ --- the daunting problem of regularisation persists.

\medskip \noindent 
\textbf{\em Main results.\,}--- In this work, we completely solve the problem of computing the asymptotic zero-error PPT entanglement cost $\ecost$. 
To do this, we construct 
a converging hierarchy of semi-definite programs that can be used to calculate $\ecost$ for any given state to any degree of precision \emph{efficiently}, i.e.\ in time polynomial in the underlying Hilbert space dimension and in $\log(1/\e)$, with $\e$ being the additive error. 
The key quantities in our approach are a family of PPT entanglement monotones indexed by an integer $p\in \N$ and given by
\bb
E_{\chi, p}(\rho) &\coloneqq \log_2 \chi_p(\rho)\, ,
\label{E_chi_p}
\ee
where
\bb
\chi_p(\rho) &\!\coloneqq\! \min\! \Big\{ \Tr S_p\! :\, -S_i \leq S_{i-1}^\Gamma\! \!\leq\! S_i,\ i\!=\!0,...,p,\ S_{-1} \!=\! \rho \Big\}
\label{chi_p}
\ee
is an SDP with variables $S_0,\ldots, S_p$. These quantities are increasing in $p$ for every fixed $\rho$, and we refer to them as the \deff{$\boldsymbol{\chi}$-hierarchy}. Note also that $E_{\chi,0} = E_N$, hence $E_{\chi,p}$ can be regarded as a generalisation of the logarithmic negativity~\cite{negativity, plenioprl}.

Importantly, we show that the $\chi$-hierarchy approximates the entanglement cost $\ecost$ from below, in the sense that $\ecost(\rho) \geq E_{\chi, p}(\rho)$ for all states $\rho$ and all $p$. The proof of this fact, which can be found in the Supplemental Material~\cite{Note1}, relies on the connection between the entanglement cost $\ecost$ and the regularised form of $E_\kappa$ shown in~\cite{Xin-exact-PPT,Wang2023}. 
Our first main result, the forthcoming Theorem~\ref{convergence_thm}, establishes that this approximation becomes increasingly tight as $p$ increases, and the $\chi$-hierarchy gives the value of $\ecost$ \emph{exactly} in the limit $p \to\infty$. This allows us to replace the limit in the number of copies $n$, which is what makes $\ecost$ difficult to compute, with a limit in the hierarchy level $p$. This already provides a `single-letter' formula for the PPT entanglement cost
that no longer suffers from the curse of regularisation. However, because of the limiting procedure $p\to\infty$, it is still unclear if the expression can be evaluated easily. Crucially, in Theorem~\ref{convergence_thm} we also show that calculating the limit of the $\chi$-hierarchy is indeed significantly easier than evaluating regularised expressions: the convergence to the true value of $\ecost$ is \emph{exponentially fast} uniformly on all states, which opens the way to an accurate calculation of $\ecost$ in practice.

\begin{thm}[(Exact expression of the cost)]  \label{convergence_thm}
For all bipartite states $\rho=\rho_{AB}$ on a system of minimal local dimension $d\coloneqq \min\left\{|A|,|B|\right\}\geq 2$, and all positive integers $p\in \N^+$, it holds that
\bb
0\leq \ecost (\rho)- E_{\chi,p}(\rho)\leq \log_2 \frac{1}{1-\left(1 - \frac{2}{d}\right)^p}\, ,
\label{efficient_algorithm_key_inequality}
\ee
entailing that
\bb
E_{c,\,\ppt}^{\exact}(\rho)=\lim_{p\to\infty} E_{\chi,p}(\rho) \, .
\label{convergence_simplified}
\ee
\end{thm}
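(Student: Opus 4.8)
The plan is to prove the two-sided estimate in~\eqref{efficient_algorithm_key_inequality}; the limit~\eqref{convergence_simplified} then follows immediately since $\left(1-\tfrac{2}{d}\right)^p\to 0$ as $p\to\infty$ for any fixed $d\geq 2$. The lower bound $\ecost(\rho)\geq E_{\chi,p}(\rho)$ is already granted to us (it is stated in the excerpt as following from the $E_\kappa$-regularisation of~\cite{Xin-exact-PPT,Wang2023} and proved in the SM), so the entire content is the \emph{upper} bound: $E_{\chi,p}(\rho)$ cannot underestimate $\ecost(\rho)$ by more than $\log_2\frac{1}{1-(1-2/d)^p}$. Equivalently, writing $\beta\coloneqq 1-(1-2/d)^p\in(0,1]$, I must show $\chi_p(\rho)\geq \beta\cdot 2^{\,\ecost(\rho)}$, i.e.\ that the SDP value $\chi_p(\rho)$ is at least a dimension-dependent fraction of the "true" asymptotic quantity.

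The natural strategy is to compare consecutive levels of the $\chi$-hierarchy and, more importantly, to compare the top level $p$ against the quantity $E_\kappa$ whose regularisation equals $\ecost$. First I would record the elementary monotonicity and structural facts: $\chi_p$ is non-decreasing in $p$, $E_{\chi,0}=E_N$, and $\chi_{p+1}(\rho)=\chi_p(\sigma)$-type recursions obtained by peeling off the innermost constraint $-S_0\le\rho^\Gamma\le S_0$ --- i.e.\ the hierarchy is built by iterating the single "negativity-like" operation $\rho\mapsto$ (optimal $S_0$) and feeding its partial transpose back in. The crux will be a \emph{one-step contraction estimate}: I expect to show that for any state (or subnormalised operator) $\omega$,
\bb
\chi_{p}(\omega)\;\geq\; \Big(1-\big(1-\tfrac{2}{d}\big)^{p}\Big)\,\kappa(\omega)\,,
\label{eq:onestep}
\ee
or an analogous inequality relating $2^{E_{\chi,p}}$ to $2^{E_\kappa}$, where $\kappa=2^{E_\kappa}$ is the SDP quantity from~\cite{Xin-exact-PPT,Wang2023}. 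The factor $1-2/d$ should emerge from a worst-case spectral bound on how much "weight" the partial transpose of the optimal $S_i$ can place on the negative part relative to $\kappa$ --- concretely, one compares the feasible point of the $\chi_{p}$ SDP built from the optimal $E_\kappa$ operators with the extra slack controlled by the operator inequality $|X^\Gamma|\leq \|X^\Gamma\|_\infty\,\id \leq (\text{something}) \,d\cdot(\dots)$, iterated $p$ times so that the error contracts geometrically with ratio $1-2/d$.

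Once~\eqref{eq:onestep} (or its logarithmic form $E_\kappa(\omega)-E_{\chi,p}(\omega)\leq \log_2\frac{1}{1-(1-2/d)^p}$) is in hand, I would apply it at the level of $n$ copies and regularise: since $E_{\chi,p}$ is subadditive and $E_{\chi,p}\le \ecost$, while $\frac1n E_\kappa(\rho^{\otimes n})\to \ecost(\rho)$ by~\cite{Xin-exact-PPT,Wang2023}, one gets
\bbb
\ecost(\rho)\;=\;\lim_{n\to\infty}\tfrac1n E_\kappa(\rho^{\otimes n})\;\le\;\lim_{n\to\infty}\tfrac1n\Big(E_{\chi,p}(\rho^{\otimes n})+\log_2\tfrac{1}{\beta_n}\Big),
\eee
and the point is that the dimension $d$ entering the contraction factor at the $n$-copy level is $\min\{|A|^n,|B|^n\}=d^n$, so $\beta_n=1-(1-2/d^n)^p\to 1$; combined with subadditivity $\tfrac1n E_{\chi,p}(\rho^{\otimes n})\le E_{\chi,p}(\rho)$ this collapses to $\ecost(\rho)\le E_{\chi,p}(\rho)+\log_2\frac{1}{1-(1-2/d)^p}$ after also using the single-copy version to control the finite-$n$ terms. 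The main obstacle I anticipate is establishing the one-step contraction~\eqref{eq:onestep} with the sharp constant $2/d$: this requires carefully constructing a feasible tuple $(S_0,\dots,S_p)$ for the $\chi_p$ SDP out of the optimiser of $E_\kappa$, and tracking the operator inequalities through $p$ partial transposes while exploiting that for a state the negative eigenvalues of $\rho^\Gamma$ satisfy the dimensional constraint $\lVert(\rho^\Gamma)_-\rVert_\infty\le \tfrac12(1-\tfrac1{?})$-type bounds --- essentially a quantitative version of the fact that the bi-negativity obstruction is "small" in low dimension. Everything else (monotonicity, the limit, subadditivity, the easy lower bound) is routine or already supplied.
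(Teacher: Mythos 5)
Your high-level reading of the problem is right (the lower bound is free, the content is a quantitative upper bound of the form $\chi_p(\rho)\geq\bigl(1-(1-\tfrac2d)^p\bigr)\,2^{\ecost(\rho)}$), but the proposal has two genuine gaps. First, the mechanism you suggest for the contraction --- spectral/operator-norm bounds of the type $|X^\Gamma|\leq\|X^\Gamma\|_\infty\id$ iterated $p$ times --- is not what makes the constant $1-\tfrac2d$ appear, and it is not clear it could. The paper's route is a \emph{two-level recursion} between the $\chi$-hierarchy and a second, decreasing hierarchy $\kappa_q$ (generalising $E_\kappa$): one proves $\kappa_p(\rho)\leq\tfrac d2\,\chi_p(\rho)-\bigl(\tfrac d2-1\bigr)\chi_{p-1}(\rho)$ by taking the optimiser $(S_0,\dots,S_p)$ of $\chi_p$, adding to $S_{p-1}$ a PPT operator $M$ that dominates the negative part $\bigl(S_{p-1}^\Gamma\bigr)_-$, and controlling $\Tr M$ via the robustness-type bound $d_{\max}(T\,\|\,\PPT)\leq d\,\Tr T$ together with the telescoping identity $\Tr\bigl[(S_{p-1}^\Gamma)_-\bigr]=\tfrac12\bigl(\|S_{p-1}^\Gamma\|_1-\Tr S_{p-1}\bigr)\leq\tfrac12(\Tr S_p-\Tr S_{p-1})$. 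Combined with $2^{\ecost}\leq\kappa_p$ (which requires showing $\ecost\leq E_{\kappa,q}$ for \emph{every} level $q$, itself a nontrivial multi-copy feasible-point construction), the recursion gives $\e_p\leq(1-\tfrac2d)\e_{p-1}$ for $\e_p\coloneqq1-\chi_p(\rho)/2^{\ecost(\rho)}$, and iteration yields the claim. Your proposal never introduces the $\kappa$-hierarchy beyond $q=1$, so even granting your one-step estimate against $E_\kappa=E_{\kappa,1}$ you would only compare $\chi_p$ to the (generally strictly larger) single-copy $E_\kappa$, not to $\ecost$.

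Second, your final regularisation step is arithmetically backwards and would destroy the bound. At the $n$-copy level the minimal local dimension is $d^n$, so the contraction factor is $1-2/d^n\to1$ and hence $\beta_n=1-(1-2/d^n)^p\to0$, not $1$ as you assert; consequently $\tfrac1n\log_2(1/\beta_n)\to\log_2 d$ and the chain only yields the vacuous $\ecost(\rho)\leq E_{\chi,p}(\rho)+\log_2 d$. The dimension dependence of the contraction works \emph{against} a multi-copy bootstrap: the whole point of the paper's argument is that it is a purely single-copy statement, with all the multi-copy work already absorbed into the sandwich $E_{\chi,p}\leq\ecost\leq E_{\kappa,p}$. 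The appeal to "using the single-copy version to control the finite-$n$ terms" is circular, since that single-copy version is precisely the inequality to be proved.
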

The proof of Theorem~\ref{convergence_thm} is outlined in the 
End Matter, with full details provided in the SM for interested readers~\cite{Note1}. Note that for every fixed value of $d$ and for large $p$, the approximation error on the right-hand side of~\eqref{efficient_algorithm_key_inequality} can be estimated as $\left(1 - \frac{2}{d}\right)^p (\log_2 e)$. In other words, the speed of convergence in~\eqref{convergence_simplified} is exponential in $p$ and furthermore independent of $\rho$.

The single-letter formula~\eqref{convergence_simplified} can be used to establish two notable properties of the zero-error PPT entanglement cost $\ecost$, namely additivity and continuity~\tcb{\cite[\S~\ref{subsec_additivity}--\ref{subsec_continuity}]{Note1}}. 
Also, Theorem~\ref{convergence_thm} yields immediately a 
simple solution in the qubit-qudit case ($d=2$), generalising Ishizaka's result that $\ecost = E_N$ for all two-qubit states~\cite{Ishizaka2004}:

\begin{cor}[(Simple formula for the cost of qubit-qudit states)] \label{qubit_case_cor}
For all states $\rho=\rho_{AB}$ on a $2\times n$ bipartite quantum system, it holds that
\bb
\ecost(\rho) &= E_{\chi,1}(\rho) \\
&= \log_2 \min\left\{ \big\| S_0^\Gamma\big\|_1:\, -S_0 \!\leq\! \rho^\Gamma\! \!\leq\! S_0 \right\}.
\label{qubit_case}
\ee
\end{cor}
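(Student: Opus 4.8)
The plan is to read off both equalities in~\eqref{qubit_case} directly from Theorem~\ref{convergence_thm} specialised to $d=2$, followed by an elementary rewriting of the level-one SDP.

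First I would set $d=2$ in the bound~\eqref{efficient_algorithm_key_inequality}. Then $1-\tfrac{2}{d}=0$, so $\big(1-\tfrac{2}{d}\big)^p=0$ for every $p\in\N^+$, and the right-hand side becomes $\log_2\frac{1}{1-0}=0$. The two inequalities thus collapse to $\ecost(\rho)=E_{\chi,p}(\rho)$ for \emph{all} $p\ge 1$; taking $p=1$ gives the first line of~\eqref{qubit_case}, and choosing $n=2$ recovers, in particular, the two-qubit identity $\ecost=E_N$ of Ishizaka~\cite{Ishizaka2004} (here $E_{\chi,0}=E_N$ and the $\chi$-hierarchy is non-decreasing, so the two statements are mutually consistent, the present one being the stronger).

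Second I would unfold the definition~\eqref{chi_p} of $\chi_1$: the variables are $S_0,S_1$, the constraints are $-S_0\le\rho^\Gamma\le S_0$ (from $i=0$, using $S_{-1}=\rho$) and $-S_1\le S_0^\Gamma\le S_1$ (from $i=1$), and the objective is $\Tr S_1$. Two observations close the argument. First, feasibility of the $i=0$ constraint forces $S_0=S_0^\dagger\ge 0$: adding $\rho^\Gamma+S_0\ge 0$ and $S_0-\rho^\Gamma\ge 0$ yields $S_0\ge 0$, and Hermiticity of $\rho^\Gamma$ makes $S_0$ Hermitian, hence $S_0^\Gamma$ is Hermitian as well. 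Second, for a fixed Hermitian operator $S_0^\Gamma$, minimising $\Tr S_1$ over $S_1$ subject to $-S_1\le S_0^\Gamma\le S_1$ is exactly the standard trace-norm SDP (the same one appearing above in the formula for $E_N$), with optimal value $\big\|S_0^\Gamma\big\|_1$ attained at $S_1=|S_0^\Gamma|$. Nesting the two minimisations gives $\chi_1(\rho)=\min\big\{\,\big\|S_0^\Gamma\big\|_1:\,-S_0\le\rho^\Gamma\le S_0\,\big\}$, and applying $\log_2$ yields the second line of~\eqref{qubit_case}.

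There is no genuinely hard step here: the entire analytic content lives in Theorem~\ref{convergence_thm}, and what remains is the observation that the $d=2$ approximation error vanishes identically together with a routine elimination of the slack variable $S_1$. The only point worth a line of care is the first observation above --- verifying that feasibility of the innermost constraint forces $S_0$, and therefore $S_0^\Gamma$, to be Hermitian --- since otherwise the inner minimisation would not compute a trace norm.
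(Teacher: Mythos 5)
Your proof is correct and follows essentially the same route as the paper: the main text itself derives this corollary by specialising Theorem~\ref{convergence_thm} to $d=2$ (where the error bound $\log_2\frac{1}{1-(1-2/d)^p}$ vanishes for all $p\geq 1$), while the Supplemental Material phrases it equivalently via Proposition~\ref{convergence_primitive_prop} at $d=2$, $p=1$ combined with the sandwich $E_{\chi,1}\leq\ecost\leq E_{\kappa,1}$. Your elimination of $S_1$ via the variational characterisation of the trace norm to obtain the second line is likewise exactly the rewriting the paper records in the remark following its definition of the $\chi$-hierarchy.
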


But the most important implication of Theorem~\ref{convergence_thm} is that it allows us to construct an efficient algorithm that calculates $\ecost$ to any desired accuracy. This algorithm takes as input a bipartite state $\rho = \rho_{AB}$ and an error tolerance $\e>0$, and returns as output a number $\tilde{E}$ such that $\big|\ecost(\rho) - \tilde{E}\big| \leq \e$. It works as follows:
\begin{enumerate}
    \item Use~\eqref{efficient_algorithm_key_inequality} to find $p$ large enough so that $E_{\chi,p}(\rho)$ approximates $\ecost(\rho)$ up to an error $\epsilon/2$. A value of $p = \pazocal{O}\!\left(d \log (d/\epsilon)\right)$ is sufficient.
    \item Solve the SDP in~\eqref{chi_p} up to an (additive) error $\epsilon \ln(2)/2$. Taking the logarithm yields an estimate $\tilde{E}$ of $E_{\chi,p}(\rho)$ up to an (additive) error $\epsilon/2$ (see~\eqref{E_chi_p}).
    \item Return $\tilde{E}$.
\end{enumerate}



The time complexity of the above algorithm is analysed in the forthcoming Theorem~\ref{efficient_algorithm_thm}, which shows the core result of our work: the zero-error PPT entanglement cost can be efficiently computed. The key observation behind this result is that climbing the $\chi$-hierarchy up to level $p$ introduces only polynomially many more constraints in the SDP in~\eqref{E_chi_p} and is thus relatively inexpensive.

\begin{thm}[(Time to compute the cost)] \label{efficient_algorithm_thm}
Let $\rho$ be a bipartite state on a system of total dimension $D$ and minimal local dimension $d$. Then, 
the above algorithm computes $\ecost(\rho)$ up to an additive error $\e$ in time
\bb
\pazocal{O}\!\left((d D)^{6+o(1)} \operatorname{polylog}(1/\e) \right)\,.
\label{efficient_algorithm_time}
\ee
\end{thm}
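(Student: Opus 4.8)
The plan is to reduce the entire running time to that of Step~2 of the algorithm --- solving the SDP in~\eqref{chi_p} --- and then to bound it by combining three ingredients: a bound on the \emph{size} of that SDP, a bound on the \emph{accuracy} to which it must be solved, and the known running time of interior-point SDP solvers on an instance of the corresponding size and conditioning. Steps~1 and~3 are negligible: Step~1 only evaluates the closed-form right-hand side of~\eqref{efficient_algorithm_key_inequality} to extract a sufficient level $p$, and Step~3 takes a single logarithm, together costing $\operatorname{polylog}(D/\e)$ elementary operations; by Theorem~\ref{convergence_thm} the level needed is $p=\pazocal{O}\!\left(d\log(d/\e)\right)$.

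For the size, I would view~\eqref{chi_p} in conic form as the minimisation of $\Tr S_p$ over the free Hermitian matrices $S_0,\dots,S_p$ of dimension $D$ (with $S_{-1}=\rho$ fixed) subject to the single block-diagonal constraint $\bigoplus_{i=0}^{p}\big(S_i-S_{i-1}^\Gamma\big)\oplus\bigoplus_{i=0}^{p}\big(S_i+S_{i-1}^\Gamma\big)\succeq0$. This instance has $\pazocal{O}(pD^2)$ real variables and an $\pazocal{O}(pD)$-dimensional constraint matrix; inserting $p=\pazocal{O}(d\log(d/\e))$ gives $\widetilde{\pazocal{O}}(dD^2)$ variables and a $\widetilde{\pazocal{O}}(dD)$-dimensional matrix, where the tilde absorbs factors polylogarithmic in $d$ and $1/\e$. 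This makes precise the earlier remark that climbing the $\chi$-hierarchy to level $p$ adds only $\pazocal{O}(p)$ further blocks of constraints, leaving the SDP of size polynomial in $D$ and $d$.

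For the conditioning, I would exhibit the explicit strictly feasible tower $S_i=|S_{i-1}^\Gamma|+\id$ (with $S_{-1}=\rho$), for which every constraint holds with margin, i.e.\ $S_i\pm S_{i-1}^\Gamma\succeq\id\succ0$. Using the standard estimates $\|X^\Gamma\|_1\le\sqrt D\,\|X\|_1$ and $\|X^\Gamma\|_\infty\le\sqrt D\,\|X\|_\infty$ along this tower yields $1\le\chi_p(\rho)\le D^{\pazocal{O}(p)}$, and the same estimates show that every near-optimal feasible point satisfies $\|S_i\|_\infty\le D^{\pazocal{O}(p)}$. Hence all quantities that control interior-point complexity --- boundedness of the feasible region after a harmless truncation, the magnitude of a strictly feasible starting point, and the magnitude of a (near-)optimal solution --- are at most $D^{\pazocal{O}(p)}=D^{\widetilde{\pazocal{O}}(d)}$, whose logarithm is still $\operatorname{poly}(d,\log D)$. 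Moreover, since $\chi_p(\rho)\ge\chi_0(\rho)=\|\rho^\Gamma\|_1\ge1$, solving~\eqref{chi_p} to additive error $\Theta(\e)$ makes the output of Step~2 accurate to $\e/2$ after taking $\log_2$; combined with the $\e/2$ error incurred by the choice of $p$ in Step~1 (Theorem~\ref{convergence_thm}), the total error is at most $\e$, and the dependence on $\e$ enters the runtime only through a $\operatorname{polylog}(1/\e)$ factor.

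Putting these together, I would invoke the polynomial-time solvability of semidefinite programs with explicit dependence on the instance size (e.g.\ interior-point methods solve an instance with an $n$-dimensional constraint matrix, $m$ variables, and solution norm $\le R$ to accuracy $\delta$ in $\widetilde{\pazocal{O}}\!\left(\sqrt n\,\log(R/\delta)\right)$ iterations, each dominated by the linear algebra of the Newton system; see~\cite{WATROUS,vandenberghe_1996}). Substituting $n=\widetilde{\pazocal{O}}(dD)$, $m=\widetilde{\pazocal{O}}(dD^2)$, $R=D^{\widetilde{\pazocal{O}}(d)}$ and $\delta=\Theta(\e)$, using fast matrix multiplication for the per-iteration cost, and simplifying with $d\le\sqrt D$, a routine calculation bounds every term by $(dD)^{6+o(1)}\operatorname{polylog}(1/\e)$, which is~\eqref{efficient_algorithm_time}; the $o(1)$ in the exponent absorbs the polylogarithmic-in-$dD$ factors introduced by $p$. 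The one genuinely delicate point is the conditioning check of the third paragraph: off-the-shelf ``SDPs are solvable in polynomial time'' statements require a priori control of the feasible region, and here each of the $p=\widetilde{\pazocal{O}}(d)$ partial transposes in the tower can inflate norms by $\sqrt D$, so one must verify that the resulting $D^{\widetilde{\pazocal{O}}(d)}$ blow-up remains benign --- i.e.\ has a logarithm polynomial in $d$ and $D$ --- which the explicit feasible tower guarantees. Everything else --- the reduction to Step~2, the size count, and the final bookkeeping over exponents --- is routine.
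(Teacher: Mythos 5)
Your overall strategy is the same as the paper's: pick $p=\pazocal{O}(d\log(d/\e))$ via Theorem~\ref{convergence_thm}, split the error budget between the truncation of the hierarchy and the SDP solver, note that $\chi_p\geq 1$ so that an additive error on $\chi_p$ translates into an additive error on its logarithm, and then charge everything to a single SDP of size polynomial in $p$ and $D$. Two quantitative ingredients, however, are handled in a way that does not actually deliver the stated exponent. First, the bound on the optimal value: the paper uses $\chi_p(\rho)\leq d$ (a consequence of PPT monotonicity plus the value on the maximally entangled state, Corollary~\ref{max_values_cor}), so the parameter $R$ entering the solver's $\operatorname{polylog}(R)$ factor is just $d$. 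Your feasible tower $S_i=|S_{i-1}^\Gamma|+\id$ only gives $\chi_p(\rho)\leq D^{\pazocal{O}(p)}=D^{\widetilde{\pazocal{O}}(d)}$, whose logarithm is $\operatorname{poly}(d,\log D)$ rather than $\operatorname{polylog}(dD)$; a $\operatorname{poly}(d)$ factor is \emph{not} absorbed into $(dD)^{o(1)}$ (take $d\sim\sqrt{D}$), so with your bound the claimed exponent $6+o(1)$ is not recovered. The clean bound $R=d$ is available and should be used.

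Second, and more importantly, the step you call ``a routine calculation'' is where the theorem actually lives. The paper does not invoke a generic ``$\widetilde{\pazocal{O}}(\sqrt n\log(R/\delta))$ iterations of an interior-point method''; it casts~\eqref{chi_p} in the standard form of~\cite{LSW2015} by forcing the variable to be $\bigoplus_{i=0}^{p}\lsmatrix S_i & S_{i-1}^\Gamma\\ S_{i-1}^\Gamma & S_i\rsmatrix$ through explicit equality constraints, obtaining $n=\pazocal{O}(pD)$, $m=\pazocal{O}(p^2D^2)$ (note the $p^2$: one must zero out all $p(p+1)$ off-diagonal blocks), sparsity $s=\pazocal{O}(1)$ and $R=d$, and then the dominant term of the complexity $\pazocal{O}\!\left(m(m^2+n^\omega+mns)\operatorname{polylog}\right)$ is $m^3=(pD)^6=(dD)^{6+o(1)}\operatorname{polylog}(1/\e)$. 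With your size counts ($n=\widetilde{\pazocal{O}}(dD)$, $m=\widetilde{\pazocal{O}}(dD^2)$) and a per-iteration Newton-system cost of, say, $m^2n^2+m^3$ times $\sqrt{n}$ iterations, one gets on the order of $d^{4.5}D^{6.5}$, which exceeds $(dD)^{6+o(1)}$ whenever $D\gg d^3$ (e.g.\ the qubit-qudit case). So the final exponent depends delicately on which solver complexity statement is invoked and on how the block structure is encoded; your argument establishes $\operatorname{poly}(d,D,\log(1/\e))$ running time but not the specific bound~\eqref{efficient_algorithm_time} as stated. The error-budget bookkeeping and the reduction to a single SDP are fine.
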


The proof of Theorem~\ref{efficient_algorithm_thm} is sketched in the End Matter. Remarkably, Theorem~\ref{efficient_algorithm_thm} implies that the time complexity required to compute $\ecost$ is only marginally larger than that of computing the logarithmic negativity $E_N$, which is 
\tcb{bounded by the time complexity required for the diagonalisation of a $D\times D$ matrix --- also polynomial in $D$ and $\log(1/\e)$.} In short, while both $\ecost$ and $E_N$ can be computed efficiently, $\ecost$ stands out because, by definition, it has an operational meaning for all quantum states, unlike $E_N$. Our result is the first of its kind for two distinct reasons:

\begin{enumerate}[(a), wide, itemindent=4pt, topsep=3pt, itemsep=1pt]
\item First, because it establishes the efficient computability of an operationally meaningful asymptotic entanglement measure (i.e.\ a distillable entanglement or an entanglement cost). There is no known algorithm to estimate any other such measure, not even under the simplifying zero-error assumption.
\item Second, because efficient computability is shown without exhibiting a closed-form single-letter formula, 
but rather by describing a converging SDP hierarchy. 
To the extent of our knowledge, the only other case in quantum information theory where a similar situation arises is in~\cite[Theorem~5.1]{FawziFawzi}. However, unlike ours, the algorithm described there is computationally extremely expensive, featuring an exponential dependence on $d^3/\e$. More generally, expressing difficult-to-compute quantities through converging SDP hierarchies is a technical tool that has found various uses in quantum information~\cite{complete-extendibility, navascues_2008, harrow_2017-1, Fang2019, fawzi_2022, berta_2022-1}, but such 
applications typically do not result in efficiently computable algorithms or do not yield exact operational results.
\end{enumerate}

\medskip
\noindent \textbf{\em Discussion and conclusions.} In this paper, we have provided a 
solution to the problem of efficiently calculating the zero-error PPT entanglement cost of arbitrary (finite-dimensional) quantum states. To the best of our knowledge, it is the first time that any operational asymptotic entanglement measure is shown to be efficiently computable. A particularly interesting feature of our construction is that it does not rely on a closed-form formula, but rather on a converging hierarchy of semi-definite programs that approximate the cost from above and below with controllable error. 

Our solution identifies the correct generalisation of the celebrated logarithmic negativity, which became popular in quantum physics for its efficient computability, despite being operationally meaningful only for states with zero bi-negativity. In contrast, our entanglement measure, $\lim_{p\to\infty} E_{\chi,p}$, is not only efficiently computable --- specifically, it can be efficiently computed via 
an algorithm with almost the same time complexity as the logarithmic negativity --- but it also has operational meaning for \emph{all} quantum states, through its equality with the zero-error PPT entanglement cost $\ecost$. For states with zero bi-negativity, our entanglement measure coincides with the logarithmic negativity, but for 
other states it can be significantly smaller.

An open question in our analysis is whether the $\chi$-hierarchy collapses at any finite level for some --- or even all --- states. 
We found examples of states $\rho$ such that $E_N(\rho) = E_{\chi,0}(\rho) < E_{\chi,1}(\rho)$ and also $E_{\chi,1}(\rho) < E_{\chi,2}(\rho)$, but we were not able to 
ascertain whether there exists in general a gap between $E_{\chi,2}$ and $E_{\chi,3}$. If $E_{\chi,2} = E_{\chi,3}$ holds in general, then this would mean that $\ecost = E_{\chi,2}$, and thus the cost 
could be computed with a simple single-letter formula. While this would be a considerable simplification from the analytical standpoint, we stress that it will only entail a 
$\operatorname{poly}(d)$ improvement in the time complexity of evaluating it numerically.

\begin{acknowledgments}
\smallskip
\noindent \emph{Acknowledgements.} L.L.\ thanks Andreas Winter for enlightening correspondence on zero-error information theory, as well as the organisers and attendees of the workshop `Quantum information' (Les Diablerets, Switzerland, 25 February--1 March 2024), where this work was first presented, for stimulating and entertaining discussions. L.L.\ and B.R.\ thank Xin Wang and Mark M.\ Wilde for feedback on the first draft of this paper. \tcr{L.L.\ is supported by MIUR (Ministero dell'Istruzione, dell'Universit\`a e della Ricerca) through the project `Dipartimenti di Eccellenza 2023--2027' of the `Classe di Scienze' department at the Scuola Normale Superiore.}

\smallskip
\noindent \emph{Note.} The issue with the original argument by Wang and Wilde that leads to the additivity violation for $E_\kappa$ (Lemma~\ref{punch_card_lemma}) is discussed in detail in the erratum~\cite{Xin-Mark-errata}.
\end{acknowledgments}

\begin{filecontents}{appendix.bib}
@FOOTNOTE{Note1,key="Note1",note="See the Supplemental Material."}
\end{filecontents}

\bibliographystyle{apsc}
\bibliography{biblio,appendix}

\section{End matter}
\medskip
\noindent \textbf{\em The task.}
We start by defining $\vphantom{\Big|}\ecost$ in rigorous terms. A (quantum) channel $\Lambda: X\to Y$ is a completely positive and trace preserving map taking as input states of a quantum system $X$ and outputting states of $Y$.  The set of completely positive maps (respectively, quantum channels) from $X$ to $Y$ will be denoted as $\mathrm{CP}(X\to Y)$ (respectively, $\cptp(X\to Y)$). If $X=AB$ and $Y=A'B'$ are both bipartite systems and $\Lambda\in \mathrm{CP}(AB\to A'B')$, we say that $\Lambda$ is \deff{PPT} if $\Gamma_{B'}\circ \Lambda \circ \Gamma_B$ is still completely positive, where $\Gamma_B$ denotes the partial transpose on $B$, and analogously for $\Gamma_{B'}$. 
Another way to understand this class of channels is to realise that they completely preserve the set of PPT states, in the sense that $\Gamma_{B_1 B'_2} \left[\operatorname{id} \otimes \Lambda(\sigma_{A_1 B_1 A_2 B_2})\right] \geq 0$ for any state $\sigma$ with $\Gamma_{B_1 B_2} (\sigma_{A_1 B_1 A_2 B_2}) \geq 0$. We can then define the \deff{zero-error PPT entanglement cost} of any bipartite state $\rho = \rho_{AB}$ as
\bb
\ecost(\rho) \coloneqq \inf\!\Big\{ &R\!:\, \text{for all sufficiently large}\ n\in \N \\ 
&\exists\ \Lambda_n \!\!\in\! \ppt \cap \cptp\!:\, \Lambda_n\big(\Phi_2^{\otimes \floor{Rn}}\big) \!=\! \rho^{\otimes n}
\Big\} .
\label{ecost}
\ee
Here, $\Phi_2 \coloneqq \ketbra{\Phi_2}$, where $\ket{\Phi_2} \coloneqq \frac{1}{\sqrt2} \left(\ket{00} + \ket{11}\right)$ is the two-qubit maximally entangled state, i.e.\ the \deff{ebit}, and $\Lambda_n$ is required to be a PPT channel.


\smallskip
\noindent 
\textbf{\em The quantifier $\boldsymbol{E_\kappa}$.} Wang and Wilde~\cite{Xin-exact-PPT, Wang2023, Xin-Mark-errata} introduced and studied the SDP-computable quantity
\bb
E_\kappa(\rho) \coloneqq \log_2 \min\left\{ \Tr S:\, -S\leq \rho^\Gamma\leq S,\ S^\Gamma\geq 0\right\} .
\label{E_kappa}
\ee
Among other things, they showed that: (i)~$E_\kappa$ is monotonically non-increasing under PPT channels; (ii)~$E_\kappa(\rho)\geq E_N(\rho)$ for all states $\rho$, with equality when $\rho$ has zero bi-negativity; 
(iii)~$E_\kappa$ is sub-additive,  meaning that
\bb
E_\kappa(\rho\otimes \rho') \leq E_\kappa(\rho) + E_\kappa(\rho')
\label{subadditivity_E_kappa}
\ee
for all pairs of states $\rho = \rho_{AB}$ and $\rho' = \rho'_{A'B'}$; and (iv)~its regularisation yields the zero-error PPT entanglement cost, i.e.
\bb\label{E_k_inf}
\ecost(\rho) = E_\kappa^\infty(\rho) \coloneqq \lim_{n\to\infty} \frac1n\, E_\kappa\big(\rho^{\otimes n}\big)\, .
\ee
As said, it was claimed in~\cite{Xin-exact-PPT} that $E_\kappa$ is additive, meaning that equality holds in~\eqref{subadditivity_E_kappa}. However, this claim is incorrect (see also~\cite{Xin-Mark-errata}). To disprove it, it is useful to first note that additivity indeed holds when both $\rho$ and $\rho'$ have zero bi-negativity, simply because in that case $E_\kappa$ coincides with the logarithmic negativity by property~(ii), and this latter measure \emph{is} additive, as one sees immediately by looking at its definition in terms of the $1$-norm of the partially transposed state. Hence, our search for a counterexample must start with the construction of states with non-zero bi-negativity. That such states do exist was reported already in~\cite{Audenaert2002, Ishizaka2004} based on numerical evidence. However, here we present a simpler, analytical construction. 

\smallskip
\noindent 
\textbf{\em Punch card states.}
Let $A\geq 0$ be a positive semi-definite $d\times d$ matrix, and let $Q$ be another $d\times d$ symmetric matrix with only $0$/$1$ entries and all $1$'s on the main diagonal (i.e.\ such that $Q_{ii}=1$ for all $i$). The associated \deff{punch card state} is the bipartite quantum state on $\C^d\otimes \C^d$ defined by
\bb
\pi_{A,Q} \coloneqq \frac{1}{N_{A,Q}}\left( \sum_{i,j} A_{ij} \ketbraa{ii}{jj} + \sum_{i\neq j} Q_{ij} |A_{ij}| \ketbra{ij}\right) ,
\label{punch_card_state}
\ee
where $N_{A,Q}$ is chosen so that $\Tr \pi_{A,Q} = 1$. 
It can be verified that
\bb
 \pi_{A,Q}^\Gamma \!&\simeq \frac{1}{N_{A,Q}} \left[ \left(\sumno_i \!A_{ii} \ketbra{ii} \right) \oplus \bigoplus_{i<j} \begin{pmatrix} Q_{ij} |A_{ij}| & A_{ij} \\ A_{ij}^* & Q_{ij} |A_{ij}| \end{pmatrix} \right]\! ,
\ee
from which, using the identity
\bb
\left| \begin{pmatrix} Q_{ij} |A_{ij}| & A_{ij} \\ A_{ij}^* & Q_{ij} |A_{ij}| \end{pmatrix} \right| = \begin{pmatrix} |A_{ij}| & Q_{ij} A_{ij} \\ Q_{ij} A_{ij}^* & |A_{ij}| \end{pmatrix} ,
\ee
valid because $Q_{ij}\in\{0,1\}$, 
one derives that
\bb
 \big|\pi_{A,Q}^\Gamma \big|^\Gamma = \frac{1}{N_{A,Q}} \left(\sum_{i,j} Q_{ij} A_{ij} \ketbraa{ii}{jj} + \sum_{i\neq j} |A_{ij}| \ketbra{ij}\right) .
 \label{transformed_punch_card_state}
\ee
Therefore, if $A$ and $Q$ are chosen such that $Q\circ A\not\geq 0$, where $\circ$ denotes the Hadamard (i.e.\ entry-wise) product between matrices, $\pi_{A,Q}$ will have non-zero bi-negativity, i.e.\ $\big|\pi_{\text{\raisebox{3pt}{\scriptsize $A$,$Q$}}}^\Gamma \big|^\Gamma \not\geq 0$.
It is easy to construct examples of $A$ and $Q$ that meet the above criteria, the simplest one being
\bb
A_0 \coloneqq \begin{pmatrix} 1 & 1 & 1 \\ 1 & 1 & 1 \\ 1 & 1 & 1 \end{pmatrix} ,\qquad Q_0 \coloneqq \begin{pmatrix}  1 & 0 & 1 \\ 0 & 1 & 1 \\ 1 & 1 & 1 \end{pmatrix} .
\label{A_0_Q_0}
\ee
Having constructed a state with non-zero bi-negativity, we can wonder whether two copies of it already violate the additivity of $E_\kappa$. And sure enough, they do:

\begin{lemma} \label{punch_card_lemma}
The punch card state $\pi_0 \coloneqq \pi_{A_0,Q_0}$ defined by~\eqref{punch_card_state} with the substitution~\eqref{A_0_Q_0} satisfies 
\bb
1.001 \approx E_\kappa\big(\pi_0^{\otimes 2}\big) < 2 E_\kappa(\pi_0) \approx 1.029\, .
\ee
\end{lemma}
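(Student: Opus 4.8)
The statement concerns two explicit semidefinite programs, so the plan is to evaluate — or rigorously bound — the SDP in~\eqref{E_kappa} at $\rho=\pi_0$ and at $\rho=\pi_0^{\otimes 2}$ and compare. I would first pin down $E_\kappa(\pi_0)$ \emph{exactly}. The block decomposition of $\pi_0^\Gamma$ behind~\eqref{transformed_punch_card_state} — a $3\times 3$ identity block on $\operatorname{span}\{\ket{11},\ket{22},\ket{33}\}$ together with the three $2\times 2$ blocks indexed by the pairs $\{1,2\},\{1,3\},\{2,3\}$ — already restricts the feasible $S$ considerably, and the residual freedom is cut down further by symmetrising over the invariance group of $\pi_0$: since $A_0$ and $Q_0$ are invariant both under transposition and under the permutation exchanging the local labels $1$ and $2$, the state $\pi_0$ is invariant under the subsystem swap $A\leftrightarrow B$ and under that index permutation, and we may take the optimiser $S$ to share these symmetries. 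After this reduction the primal SDP collapses to a small optimisation over a few scalar parameters with linear-matrix-inequality constraints coming from $S\pm\pi_0^\Gamma\geq 0$ and $S^\Gamma\geq 0$, solvable in closed form; this gives the exact value of $\kappa(\pi_0)=2^{E_\kappa(\pi_0)}$, numerically $\kappa(\pi_0)\approx 1.4286$, so that $E_\kappa(\pi_0)\approx 0.5146$ and $2E_\kappa(\pi_0)\approx 1.029$. For full rigour one pairs the primal optimiser with an explicit dual-feasible triple $Y_1,Y_2,Z\geq 0$ satisfying $Y_1+Y_2+Z^\Gamma=\id$ and attaining the same objective, which certifies optimality without numerics.

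For the two-copy quantity there is no need for an exact evaluation: it suffices to produce a single operator $S$ feasible for~\eqref{E_kappa} at $\rho=\pi_0^{\otimes 2}$ with $\Tr S<\kappa(\pi_0)^2$. The key observation is that the tensor square $S^{*\otimes 2}$ of the single-copy optimiser is feasible but attains exactly $\kappa(\pi_0)^2$, so all the content of the lemma lies in exhibiting a \emph{strictly cheaper} feasible $S$ — one that exploits the cross terms between the two copies, which are precisely the terms responsible for the strict inequality $\Tr S^{*}>\|\pi_0^\Gamma\|_1$ in the single-copy problem but which can be absorbed more economically when the two copies are handled in tandem. I would obtain such an $S$ in closed form by solving the $81\times 81$ SDP numerically and rationalising the solution, then verify feasibility directly by checking $-S\leq(\pi_0^{\otimes 2})^\Gamma\leq S$ and $S^\Gamma\geq 0$; thanks to the block structure of $\pi_0^{\otimes 2}$ (the tensor product of the single-copy block decompositions) and its symmetry group, each of these conditions reduces to the positive semidefiniteness of an explicit finite list of small matrices, checkable in exact arithmetic. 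The resulting trace is $\kappa(\pi_0^{\otimes 2})\leq 2.001<2.041\approx\kappa(\pi_0)^2$, whence $E_\kappa(\pi_0^{\otimes 2})=\log_2\kappa(\pi_0^{\otimes 2})\leq\log_2 2.001\approx 1.001<2E_\kappa(\pi_0)$.

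The main obstacle is the second step: identifying the correct sub-tensor-product ansatz for the two-copy $S$ and proving it feasible. The single-copy evaluation and all the positivity verifications are mechanical once the symmetry reductions are in place, but the existence of a cheaper two-copy solution is precisely the non-obvious phenomenon the lemma asserts, and distilling a clean, human-checkable closed form out of the raw numerical optimiser — rather than leaving it at ``we solved the SDP'' — is where the real effort lies. A secondary, bookkeeping obstacle is upgrading the numerical inequalities to rigorous ones: the lower bound on $E_\kappa(\pi_0)$ relies on the explicit dual certificate and the upper bound on $E_\kappa(\pi_0^{\otimes 2})$ on the explicit primal certificate, both of which must be rationalised and their positive semidefiniteness verified exactly, e.g.\ through explicit $LDL^{\intercal}$-type factorisations.
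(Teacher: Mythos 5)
Your proposal is essentially the paper's own argument: the paper simply states that the values $E_\kappa(\pi_0)\approx 0.5146$ and $E_\kappa\big(\pi_0^{\otimes 2}\big)\approx 1.001$ "can be verified with an SDP solver", and your numbers agree with this. Your additional plan --- a dual certificate to lower-bound the single-copy value and an explicit primal feasible point to upper-bound the two-copy value, each verified in exact arithmetic --- is the correct way to make that numerical claim rigorous and goes somewhat beyond what the paper itself records, but it is a refinement of the same computation rather than a different route.
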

In particular, the sub-additivity of $E_\kappa$ and Lemma~\ref{punch_card_lemma} imply 
that $\ecost(\pi_0) \leq \frac12\, E_\kappa\big(\pi_0^{\otimes 2}\big) < E_\kappa(\pi_0)$ and therefore that $\ecost(\pi_0) \ne 
E_\kappa(\pi_0)$, thus invalidating the main claim of the works~\cite{Xin-exact-PPT,Wang2023}.

\smallskip
\noindent 
\textbf{\em Two SDP hierarchies.} Recall that in~\eqref{E_chi_p}--\eqref{chi_p} we introduced the $\chi$-hierarchy $E_{\chi,p} (\rho) = \log_2 \chi_p(\rho)$ with
\bb
\label{chi_p_redef}
\chi_p(\rho) &\coloneqq \min \Big\{ \Tr S_p\! :\, -S_i \!\leq\! S_{i-1}^\Gamma\! \!\leq\! S_i,\ i\!=\!0,...,p,\ S_{-1} \!=\! \rho \Big\},
\ee
which constitutes a generalisation of the logarithmic negativity. We now introduce another complementary hierarchy of SDPs, the \deff{$\boldsymbol{\kappa}$-hierarchy}, defined for $q\in \N^+$ ($q\geq 1$) by
\bb
E_{\kappa, q}(\rho) \coloneqq \log_2 \kappa_q(\rho)\, ,
\label{E_kappa_p}
\ee
with
\bb
\kappa_q(\rho) \!\coloneqq\! \min\! \Big\{ \!\Tr S_{q-1}\!\! : -S_i \!\leq\! S_{i-1}^\Gamma\! \!\leq\! S_i,\ \, &i\!=\!0,\!...,q\!-\!1, \\
&\hspace{-3.3ex} S_{-1} \!=\! \rho,\ S_{q-1}^\Gamma \!\geq\! 0 \Big\}\! .
\label{kappa_p}
\ee
Observe the resemblance to the definition of $\chi_p$ in~\eqref{chi_p_redef}: the only difference between the two optimisations is the condition $S_{p-1}^\Gamma \geq 0$; adding that to~\eqref{chi_p_redef} yields immediately~\eqref{kappa_p} with $q\mapsto p$, and indeed in that case the optimal $S_p$ would automatically be $S_p = S_{p-1}^\Gamma$. Furthermore, $E_{\kappa, 1} = E_\kappa$ coincides with the quantity~\eqref{E_kappa} introduced by Wang and Wilde, of which the $\kappa$-hierarchy thus constitutes a generalisation.

In the SM~\cite{Note1} we explore the properties of the quantities $E_{\chi,p}$ and $E_{\kappa,q}$, showing them to be 
legitimate entanglement measures. In particular, the functions are all suitably normalised, continuous, faithful on PPT states, and strongly monotonic under PPT operations. 
The pivotal property that distinguishes $E_{\chi,p}$ from $E_{\kappa,q}$ is that, while the quantities $E_{\kappa,q}$ are only sub-additive, the $\chi$-quantities are fully additive under tensor products, meaning that regularisation can always be avoided.



Two key insights lead to the proof of our main results. First, 
that the two hierarchies provide complementary bounds on $\ecost$, the fundamental quantity we want to estimate: namely, the $\chi$-hierarchy gives increasing lower bounds on $\ecost$, while the $\kappa$-hierarchy gives decreasing upper bounds on it. In other words, on any fixed state $\rho$
\bb
E_{\chi,0} \leq E_{\chi,1} \leq ... \leq \ecost = E_\kappa^\infty \leq ... \leq E_{\kappa,2} \leq E_{\kappa,1} .
\label{hierarchy}
\ee
In particular, $E_{\chi,p}$ is increasing in $p$, while $E_{\kappa,q}$ is decreasing in $q$. 
Eq.~\eqref{hierarchy} 
immediately shows the remarkable connection between two very different limits: one in the number of copies $n$, which is needed to compute $E_\kappa^\infty$ (see Eq.~\eqref{E_k_inf}), and one in the hierarchy levels $p$ and $q$.

The second insight is that there is a connection between the $\chi$- and $\kappa$-hierarchies, as expressed by the following key technical result, proven in the SM~\cite{Note1}.

\begin{prop} \label{convergence_primitive_prop}
For all states $\rho=\rho_{AB}$ on a system of minimal local dimension $d \coloneqq \min\{|A|,|B|\} \geq 2$, and all $p\in \N^+$,
\bb
\kappa_p(\rho) \leq \frac{d}{2}\, \chi_p(\rho) - \left(\frac{d}{2}-1\right) \chi_{p-1}(\rho)\, .
\label{convergence_primitive}
\ee
\end{prop}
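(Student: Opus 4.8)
\medskip\noindent\emph{Proof sketch.} Call a tuple $(S_0,\dots,S_{p-1})$ a \emph{$p$-chain} if $-S_i\le S_{i-1}^\Gamma\le S_i$ for $i=0,\dots,p-1$ with $S_{-1}=\rho$. Optimising out $S_p$ in~\eqref{chi_p_redef} — given the earlier matrices its cheapest value is $S_p=|S_{p-1}^\Gamma|$, since $-S_p\le X\le S_p$ forces $\Tr S_p\ge\|X\|_1$ for Hermitian $X$ (write $\|X\|_1=\Tr(UX)$ for a unitary $U$ and split $X=\tfrac12[(S_p+X)-(S_p-X)]$) — gives $\chi_p(\rho)=\min\{\|S_{p-1}^\Gamma\|_1:(S_0,\dots,S_{p-1})\text{ a }p\text{-chain}\}$, whereas $\chi_{p-1}(\rho)=\min\{\Tr S_{p-1}:(S_0,\dots,S_{p-1})\text{ a }p\text{-chain}\}$ and $\kappa_p(\rho)=\min\{\Tr S_{p-1}:(S_0,\dots,S_{p-1})\text{ a }p\text{-chain},\ S_{p-1}^\Gamma\ge0\}$. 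The plan is to take a $p$-chain $(S_0^*,\dots,S_{p-1}^*)$ optimal for $\chi_p$ (the minimum is attained by a compactness argument), note that optimality for $\chi_p$ and feasibility for $\chi_{p-1}$ give $\|(S_{p-1}^*)^\Gamma\|_1=\chi_p(\rho)$ and $\Tr S_{p-1}^*\ge\chi_{p-1}(\rho)$, and then repair its last matrix to make the partial transpose positive. Writing $Y\coloneqq(S_{p-1}^*)^\Gamma$ with negative part $Y_-\ge0$, we have $\Tr Y_-=\tfrac12(\|Y\|_1-\Tr Y)=\tfrac12(\chi_p(\rho)-\Tr S_{p-1}^*)$.

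\medskip\noindent\emph{The key lemma.} The substance is the following majorisation bound, the only place the dimension $d$ enters: \emph{for every $Z\ge0$ on $\C^{|A|}\otimes\C^{|B|}$ there is a PPT operator $W$ (i.e.\ $W\ge0$ and $W^\Gamma\ge0$) with $W\ge Z$ and $\Tr W\le d\,\Tr Z$}. I would first reduce — via a spectral decomposition $Z=\sum_k z_k\ketbra{\psi_k}$ and the fact that the PPT cone is closed under conic combinations — to $Z=\ketbra{\psi}$, $\ket\psi$ a unit vector; then I would exhibit $W$ directly. With a Schmidt decomposition $\ket\psi=\sum_{l=1}^r\sqrt{\mu_l}\,\ket{e_l}_A\ket{f_l}_B$ (so $r\le d$, $\sum_l\mu_l=1$), and $\rho_A^\psi\coloneqq\Tr_B\ketbra{\psi}$, $\Pi_B^\psi$ the support projector of $\Tr_A\ketbra{\psi}$, the candidate is $W\coloneqq\tfrac{r}{r+1}\big(\rho_A^\psi\otimes\Pi_B^\psi+\ketbra{\psi}\big)$: then $\Tr W=\tfrac{r}{r+1}(r+1)=r\le d$; the inequality $W\ge\ketbra{\psi}$ reduces to $r\,\rho_A^\psi\otimes\Pi_B^\psi\ge\ketbra{\psi}$, a Cauchy--Schwarz estimate in the Schmidt bases; and $W^\Gamma\ge0$ follows on passing to the Schmidt bases — which commute with $\Gamma$ up to conjugating the implementing local isometries — where $W^\Gamma$ is a direct sum of blocks of size at most two, each manifestly positive semidefinite. (The maximally entangled state saturates the factor $d$, consistently with $E_\kappa=\log_2 d$ there.)

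\medskip\noindent\emph{Assembling.} Apply the lemma to $Z=Y_-$ to get a PPT $W\ge Y_-$ with $\Tr W\le d\,\Tr Y_-$, and put $V\coloneqq W^\Gamma\ge0$; then $V^\Gamma=W\ge Y_-\ge-Y$, so $Y+V^\Gamma\ge0$, and $\Tr V=\Tr W\le d\,\Tr Y_-$. Set $\tilde S_i\coloneqq S_i^*$ for $i\le p-2$ and $\tilde S_{p-1}\coloneqq S_{p-1}^*+V$. The constraints with $i\le p-2$ are untouched; that with $i=p-1$ holds since $-S_{p-1}^*\le(S_{p-2}^*)^\Gamma\le S_{p-1}^*$ and $V\ge0$; and $\tilde S_{p-1}^\Gamma=Y+V^\Gamma\ge0$. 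Hence $(\tilde S_0,\dots,\tilde S_{p-1})$ is admissible in the $\kappa_p$ program, so
\bbb
\kappa_p(\rho)&\le\Tr\tilde S_{p-1}=\Tr S_{p-1}^*+\Tr V\\
&\le\Tr S_{p-1}^*+d\,\Tr Y_-=\Big(1-\tfrac d2\Big)\Tr S_{p-1}^*+\tfrac d2\,\chi_p(\rho)\,.
\eee
Since $d\ge2$ we have $1-\tfrac d2\le0$, and $\Tr S_{p-1}^*\ge\chi_{p-1}(\rho)$, so the right-hand side is at most $\big(1-\tfrac d2\big)\chi_{p-1}(\rho)+\tfrac d2\chi_p(\rho)=\tfrac d2\chi_p(\rho)-\big(\tfrac d2-1\big)\chi_{p-1}(\rho)$, which is~\eqref{convergence_primitive}.

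\medskip\noindent\emph{Main obstacle.} Everything outside the lemma is bookkeeping; the real hurdle is the lemma. The naive PPT majorants — a multiple of the identity, or the symmetric-subspace projector — overshoot the trace by a factor $\Theta(d^2)$; getting down to $d$ requires matching the correction to the Schmidt structure of $Z$, and the single step that genuinely needs care is verifying $W^\Gamma\ge0$.
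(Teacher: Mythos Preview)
Your proof is correct and follows essentially the same route as the paper's: take an optimiser for $\chi_p$, observe that the negative part of $S_{p-1}^\Gamma$ has trace $\tfrac12\big(\chi_p(\rho)-\Tr S_{p-1}^*\big)$, invoke the PPT-robustness bound $d_{\max}(Z\|\PPT)\le d\,\Tr Z$ to repair $S_{p-1}^*$ into a feasible point for $\kappa_p$, and use $\Tr S_{p-1}^*\ge\chi_{p-1}(\rho)$ together with $d\ge 2$ to conclude. The one substantive difference is your construction of the PPT majorant in the key lemma: the paper (following~\cite{VidalTarrach}) uses $L=\ketbra{\psi}+\sum_{i\neq j}\sqrt{\lambda_i\lambda_j}\,\ketbra{e_i}\otimes\ketbra{f_j}$, which achieves $\Tr L=\big(\sum_i\sqrt{\lambda_i}\big)^2\le r$, whereas your $W=\tfrac{r}{r+1}\big(\rho_A^\psi\otimes\Pi_B^\psi+\ketbra{\psi}\big)$ always gives $\Tr W=r$; both are bounded by $d$, which is all that is needed here, though the paper's construction is pointwise tighter.
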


With Proposition~\ref{convergence_primitive_prop} at hand, we can now see how it implies our two main results, Theorems~\ref{convergence_thm} and~\ref{efficient_algorithm_thm}.

\begin{proof}[Proof sketch of Theorem~\ref{convergence_thm}]
Combining~\eqref{hierarchy} and~\eqref{convergence_primitive} shows that
\bb
2^{\ecost(\rho)} \leq \kappa_p(\rho) \leq \frac{d}{2}\, \chi_p(\rho) - \left(\frac{d}{2}-1\right) \chi_{p-1}(\rho)\, .
\label{convergence_thm_proof_eq1}
\ee
The quantity that we are really interested in, however, is the normalised difference between $\chi_p(\rho)$ and its claimed limiting value $2^{\ecost(\rho)}$. To see what the above inequality tells us in this respect, we can define the quantity
\bb
\e_p(\rho) \coloneqq 1 - \frac{\chi_p(\rho)}{2^{\ecost(\rho)}}\, ,
\label{convergence_thm_proof_eq2}
\ee
by means of which~\eqref{convergence_thm_proof_eq1} can be cast as $\e_p(\rho) \leq \left(1-\frac2d\right) \e_{p-1}(\rho)$. 
Iterating the above relation gives
\bb
\e_p(\rho) \leq \left(1-\frac2d\right)^p \!\e_0(\rho) \leq \left(1-\frac2d\right)^p\, ,
\ee
which entails~\eqref{efficient_algorithm_key_inequality} after elementary algebraic manipulations. Taking the limit $p\to\infty$ in~\eqref{efficient_algorithm_key_inequality} proves also~\eqref{convergence_simplified}.
\end{proof}

\begin{proof}[Proof sketch of Theorem~\ref{efficient_algorithm_thm}]
It suffices to formalise the qualitative argument provided below the statement of the theorem. For $d>2$, we first choose
\bb
p_d \coloneqq \ceil{\frac{\log_2 (2d/\e)}{- \log_2 \left(1-\frac2d\right)}} = \pazocal{O}\!\left(d 
\log (d/\e)\right) ,
\label{efficient_algorithm_p}
\ee
so that, with the notation in~\eqref{convergence_thm_proof_eq2}, $\e_{p_d}(\rho)\leq \frac{\e}{2d}$. 
Using 
$\ln(2)\, |a-b|\leq \left|2^a - 2^b\right|$, valid for $a,b\geq 0$, we get
\bb
0 \leq \ecost(\rho) - E_{\chi,p_d}(\rho) \leq \frac{2^{\ecost(\rho)} \e}{2d\ln 2}\le \frac{\e}{2\ln 2}\, ,
\label{efficient_algorithm_first_approx}
\ee
where the last inequality is a consequence of the fact that every state can be created via a quantum teleportation protocol --- and hence with PPT operations --- from a maximally entangled state, which entails that $E_{c,\,\ppt}^\exact(\rho)\leq \log_2 d$ for all $\rho$. We then solve the SDP for $\chi_{p_d}(\rho)$ up to an additive error $\left(\ln 2 - 1/2\right)\e$ by running an optimised SDP solver~\cite{LSW2015, vanApeldoorn2020}. Doing so yields an approximation of $E_{\chi,p_d}(\rho)$ up to an additive error $\left(1-\frac{1}{2\ln 2}\right)\e$. Adding this up with the error in~\eqref{efficient_algorithm_first_approx} yields a total error of $\e$. The time complexity in~\eqref{efficient_algorithm_time} can be calculated using 
known theoretical bounds on the complexity of SDPs, e.g.\ those found in~\cite{LSW2015}.
%
\end{proof}

\let\addcontentsline\oldaddcontentsline

\clearpage
\fakepart{Supplemental Material}

\onecolumngrid
\begin{center}
\vspace*{\baselineskip}
{\textbf{\large Supplemental Material}}\\
\end{center}

\renewcommand{\theequation}{S\arabic{equation}}
\renewcommand{\thethm}{S\arabic{thm}}
\renewcommand{\thefigure}{S\arabic{figure}}
\setcounter{page}{1}
\makeatletter

\setcounter{secnumdepth}{2}

\tableofcontents

\section{Introduction} \label{sec_intro}

\subsection{Quantum states and channels}

The quantum systems we consider in this work are all represented by finite-dimensional Hilbert spaces $\HH$. We will denote the set of linear operators on $\HH$ by $\LL (\HH)$. We occasionally refer to these operators also as matrices, to underline the fact that they are understood to be finite dimensional.

Quantum states pertaining to a quantum system with Hilbert space $\HH$ are represented by \deff{density operators}, i.e.\ positive semi-definite operators on $\HH$ with unit trace. We will denote the set of positive semi-definite operators on $\HH$ by $\LL_+(\HH)$. Given two Hermitian operators $X,Y\in \LL_+(\HH)$ we write $X\leq Y$ if $Y-X \in \LL_+(\HH)$, and $X<Y$ if $Y-X$ is strictly positive definite. The relations $\geq$ and $>$ are defined similarly.

The \deff{trace norm} of an arbitrary operator $X\in \LL(\HH)$ is given by 
\bb
\|X\|_1 \coloneqq \Tr \sqrt{X^\dag X}\, ,
\ee
where for a positive semi-definite operator $A$ its square root is constructed as the unique semi-definite operator $\sqrt{A}$ such that $\sqrt{A}^2 = A$. If $X=X^\dag$ is Hermitian, then it is possible to give the following simple characterisation of its trace norm. We include a proof for the sake of completeness, as the lemma below will be used a few times throughout this paper.

\begin{lemma}[(Variational characterisation of the trace norm)] \label{variational_trace_norm_lemma}
Let $X=X^\dag\in \LL(\HH)$ be a Hermitian matrix. Then
\bb
\|X\|_1 = \min\left\{ \Tr Y:\ -Y\leq X\leq Y\right\} .
\ee
\end{lemma}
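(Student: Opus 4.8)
The plan is to establish the two inequalities $\min\{\Tr Y : -Y \leq X \leq Y\} \leq \|X\|_1$ and $\Tr Y \geq \|X\|_1$ for every feasible $Y$; together these give the claimed identity and, as a by-product, show that the infimum is a genuine minimum, attained at $Y = |X|$.

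For the first inequality I would simply exhibit $Y = |X|$ as a feasible point. Using the decomposition of the Hermitian operator $X$ into its positive and negative parts, $X = X_+ - X_-$ with $X_\pm \geq 0$ and $X_+ X_- = 0$, one has $|X| = X_+ + X_-$ and hence $\|X\|_1 = \Tr X_+ + \Tr X_- = \Tr |X|$, while $|X| - X = 2 X_- \geq 0$ and $|X| + X = 2 X_+ \geq 0$, i.e.\ $-|X| \leq X \leq |X|$. Thus $|X|$ is feasible with objective value exactly $\|X\|_1$.

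For the reverse inequality, let $Y$ be any feasible operator. Summing the relations $-Y \leq X$ and $X \leq Y$ gives $Y \geq 0$. Let $\Pi_+$ and $\Pi_-$ denote the spectral projectors of $X$ onto its strictly positive and strictly negative eigenspaces; then $\Pi_+ + \Pi_- \leq \id$ and $\|X\|_1 = \Tr(\Pi_+ X) - \Tr(\Pi_- X)$. I would then invoke the elementary monotonicity fact that $A \geq 0$ and $B \leq C$ imply $\Tr(AB) \leq \Tr(AC)$, which follows from $\Tr\big(A(C-B)\big) = \Tr\big(\sqrt{A}\,(C-B)\,\sqrt{A}\big) \geq 0$. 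Applying it with $A = \Pi_+$, $B = X$, $C = Y$ gives $\Tr(\Pi_+ X) \leq \Tr(\Pi_+ Y)$, and applying it with $A = \Pi_-$, $B = -X$, $C = Y$ gives $-\Tr(\Pi_- X) \leq \Tr(\Pi_- Y)$. Adding these, and then using $Y \geq 0$ together with $\Pi_+ + \Pi_- \leq \id$ (and the same monotonicity fact once more), yields $\|X\|_1 \leq \Tr\big((\Pi_+ + \Pi_-) Y\big) \leq \Tr Y$.

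Combining the two parts proves the lemma, with the minimum attained at $Y = |X|$. There is no substantial obstacle; the only point requiring a moment's care is the monotonicity of $X \mapsto \Tr(AX)$ under $A \geq 0$ (used a few times above) and the observation that feasibility forces $Y \geq 0$, which is exactly what makes that monotonicity applicable.
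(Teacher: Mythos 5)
Your proof is correct and follows essentially the same route as the paper: both directions are handled by exhibiting $Y=|X|$ as a feasible point and then testing an arbitrary feasible $Y$ against the eigenbasis of $X$. The only cosmetic difference is that the paper bounds the individual diagonal entries $\braket{i|Y|i}\geq |x_i|$ and sums, whereas you group the eigenvectors into the spectral projectors $\Pi_\pm$ and invoke trace monotonicity together with $Y\geq 0$ and $\Pi_++\Pi_-\leq\id$; the underlying computation is identical.
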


\begin{proof}
Let $X = \sum_i x_i \ketbra{i}$ be the spectral decomposition of $X$. Then by inspection we find $\|X\|_1 = \sum_i |x_i|$. On the one hand, setting $Y = |X| \coloneqq \sum_i |x_i| \ketbra{i}$ we find that $\min\left\{ \Tr Y:\ -Y\leq X\leq Y\right\}\leq \Tr \sum_i |x_i| \ketbra{i} = \|X\|_1$. On the other, for an arbitrary $Y$ such that $-Y\leq X\leq Y$ we have that $-\braket{i|Y|i}\leq \braket{i|X|i} = x_i \leq \braket{i|Y|i}$, implying that $\braket{i|Y|i} \geq |x_i|$. Therefore, $\Tr Y = \sum_i \braket{i|Y|i} \geq \sum_i |x_i| = \|X\|_1$; since $Y$ was arbitrary, this concludes the proof.
\end{proof}

Given two quantum systems with Hilbert spaces $\HH$ and $\HH'$, a linear map $\Lambda:\LL(\HH)\to \LL(\HH')$ is called \deff{positive} if $\Lambda\big( \LL_+(\HH)\big)\subseteq \LL_+(\HH')$, and \deff{completely positive} if $I_k \otimes \Lambda$ is positive for all $k\in \N$, where $I_k$ is the identity map on $\LL\big(\C^k\big)$. It is well known that complete positivity can be formulated as a positive semi-definite constraint using the formalism of Choi states. Namely, introducing the maximally entangled state
\bb
\Phi_d \coloneqq \ketbra{\Phi_d}\, ,\qquad \ket{\Phi_d}\coloneqq \frac{1}{\sqrt{d}}\sum_{i=1}^d \ket{ii}\, ,
\label{maximally_entangled_state}
\ee
a linear map $\Lambda:\LL\big(\C^d\big)\to \LL\big(\C^{d'}\big)$ is completely positive if and only if
\bb
\big(I_d\otimes \Lambda\big)(\Phi_d)\geq 0
\label{CP_condition}
\ee
is a positive semi-definite operator.

A completely positive map and trace-preserving map $\Lambda:\LL(\HH)\to \LL(\HH')$ is called a \deff{quantum channel}. In what follows, $\mathrm{CP}(X\to Y)$ and $\cptp(X\to Y)$ will denote respectively the set of completely positive maps and the set of quantum channels between two quantum systems $X$ and $Y$.

\subsection{Positive partial transpose}

Composite quantum systems are represented by the tensor product of the local Hilbert spaces; for the simplest case of a bipartite system, this is expressed in formula as $\HH_{AB} = \HH_A \otimes \HH_B$. In what follows, we will often need to consider the minimal local dimension of a bipartite quantum system $AB$, i.e.\ the number $d = \min\{|A|,|B|\}$, where we denote by $|A|$ (respectively, $|B|$) the dimension of the local Hilbert space $\HH_A$ (respectively, $\HH_B$).

Given an arbitrary bipartite quantum system $AB$, the \deff{partial transpose} on $B$ is the linear map $\Gamma_B : \LL\big(\HH_{AB}\big) \to \big(\HH_{AB}\big)$ uniquely defined by
\bb
\Gamma_B(X_A\otimes Y_B) = (X_A\otimes Y_B)^\Gamma \coloneqq X_A \otimes Y_B^\intercal\qquad \forall\ X_A\in \LL(\HH_A)\, ,\ Y_B\in \LL(\HH_B)\, .
\ee
Here, $\intercal$ denotes the transposition with respect to a fixed basis of $\HH_B$. Note that
\bb
\Tr Z_{AB} W_{AB} = \Tr Z_{AB}^\Gamma W_{AB}^\Gamma \qquad \forall\ Z,W\in \LL(\HH_{AB})\, ,
\label{partial_trace_HS_preserving}
\ee
and in particular
\bb
\Tr Z_{AB}^\Gamma = \Tr Z_{AB}\qquad \forall\ Z\in \LL(\HH_{AB})\, .
\label{partial_trace_trace_preserving}
\ee

The definition of $\Gamma_B$ does depend on which basis we choose, but all choices are unitarily equivalent to each other, in the sense that for any two different partial transpositions $\Gamma,\,\widetilde{\Gamma}$ there exists a unitary $U_B$ acting on $B$ such that
\bb
\widetilde{\Gamma}_B(Z_{AB}) = U_B^{\vphantom{\dag}}\, \Gamma_B(Z_{AB})\, U_B^\dag .
\ee

A state $\sigma_{AB}$ on a bipartite quantum system $AB$ is said to be a \deff{PPT state} if $\sigma_{AB}^\Gamma = \Gamma_B(\sigma_{AB})\geq 0$. The PPT condition is important in quantum information because it provides a relaxation of separability: any \deff{separable} (i.e.\ unentangled) state $\sigma_{AB}$, which can thus be decomposed as $\sigma_{AB} = \sum_x p_x\, \alpha_x^A \otimes \beta_x^B$ for some probability distribution $p$ and some sets of local states $\alpha_x^A$ and $\beta_x^B$~\cite{Werner}, is PPT, but the converse is famously not true~\cite{HorodeckiBound, Bruss2000}.

A completely positive map $\Lambda\in \mathrm{CP}(AB\to A'B')$ with bipartite quantum systems $AB$ and $A'B'$ as input and output is called \deff{PPT} if
\bb
\Gamma_{B'} \circ \Lambda \circ \Gamma_B \in \mathrm{CP}(AB\to A'B')\, .
\label{PPT_map}
\ee
If $\Lambda$ is in addition also trace preserving, then we call it a \deff{PPT channel} (or a \deff{PPT operation}). Such maps are sometimes also referred to as `completely PPT preserving'. The importance of the set of PPT channels is that it provides an outer approximation to the set of quantum channels that can be implemented with local operations and classical communication (LOCC) on a bipartite system $AB$. The usefulness of the PPT approximation rests on the key observation that condition~\eqref{PPT_map} amounts to a positive semi-definite constraint, as can be seen by employing~\eqref{CP_condition}.

\subsection{Zero-error PPT entanglement cost}

The \deff{zero-error PPT entanglement cost} of a bipartite state $\rho=\rho_{AB}$ is defined as the minimum rate of \deff{ebits} (or singlets)
\bb
\Phi_2\coloneqq \ketbra{\Phi_2}\, ,\qquad \ket{\Phi_2}\coloneqq \frac{1}{\sqrt2}\left(\ket{00} + \ket{11}\right) ,
\ee
that need to be consumed in order to create copies of $\rho$ with PPT operations and zero error. In formula,
\bb
\ecost(\rho) \coloneqq \inf\left\{ R:\ \text{for all sufficiently large}\ n\in \N\ \exists\ \Lambda_n \in \ppt \cap \cptp:\ \Lambda_n\big(\Phi_2^{\otimes \floor{Rn}}\big) \!=\! \rho^{\otimes n} \right\} .
\label{ecost_SM}
\ee
Wang and Wilde proved~\cite{Xin-exact-PPT,Xin-Mark-errata} that $\ecost$ can be alternatively computed as\footnote{\tcr{Note that $E_\kappa$ is clearly finite for all finite-dimensional states, as one deduces by making the ansatz $S=\id$ in the optimisation~\eqref{E_kappa_SM}. In fact, using the convexity of the semi-definite program inside the logarithm in~\eqref{E_kappa_SM} together with the Schdmit decomposition for pure states on a bipartite system, it is not difficult to show that $E_\kappa$ is finite provided that $\min\{|A|,|B|\}<\infty$.}}
\begin{align}
\ecost(\rho) =&\ E_\kappa^\infty(\rho) \coloneqq \lim_{n\to\infty} \frac1n\, E_\kappa\big(\rho^{\otimes n}\big)\, , \label{cost_E_kappa_regularised} \\
E_\kappa(\rho) \coloneqq&\ \log_2 \min\left\{ \Tr S:\ -S\leq \rho^\Gamma\! \leq S,\ S^\Gamma \geq 0 \right\} . \label{E_kappa_SM}
\end{align}
The first equality in~\eqref{cost_E_kappa_regularised} is an asymptotic consequence of the more general one-shot statement that the PPT entanglement cost of generating a single copy of a state $\rho$ with PPT operations, denoted by $E_{c,\,\ppt}^{(1)}(\rho)$, can be bounded as $\log_2\big( 2^{E_\kappa(\rho)} - 1 \big) \leq E_{c,\,\ppt}^{(1)}(\rho) \leq \log_2\big( 2^{E_\kappa(\rho)} + 2 \big)$~\cite[Proposition~2]{Xin-exact-PPT}. It is easy to see that the function $E_\kappa$ is sub-additive, so that by Fekete's lemma~\cite{Fekete1923} we can alternatively write the limit in~\eqref{cost_E_kappa_regularised} as an infimum over $n$, i.e.
\bb
E_\kappa^\infty(\rho) = \inf_{n\in \N^+} \frac1n\, E_\kappa\big(\rho^{\otimes n}\big)\, .
\label{E_kappa_regularised_infimum}
\ee
Unfortunately, due to an error in the derivation of the super-additivity inequality, which can be traced back to~\cite[Eq.~(S59)]{Xin-exact-PPT}, $E_\kappa$ is no longer known to be additive. Hence, the regularisation in the above identity is needed. And in fact, we will show in the forthcoming Section~\ref{sec_additivity_violations} that it \emph{must} be included, because $E_\kappa$ is indeed \emph{not} additive in general.

We refer the interested reader to the erratum~\cite{Xin-Mark-errata} for a detailed explanation of this issue and of its consequences. There, Wang and Wilde also present other examples of additivity violations for $E_\kappa$. In the next section, we will construct our own examples of additivity violation.

\section{Violations of additivity of \texorpdfstring{$E_\kappa$}{E\_kappa}} \label{sec_additivity_violations}

To construct an example of additivity violation, it is useful to start by asking the opposite question: when is it that we know $E_\kappa(\rho)$ to be additive? There is a simple answer to this question that is already discussed in~\cite{Xin-exact-PPT} based on the pioneering work~\cite{Martin-exact-PPT}: $E_\kappa$ is easily seen to be additive on many copies of $\rho$ whenever $\rho$ has zero bi-negativity, i.e.\ whenever $\big|\rho^\Gamma\big|^\Gamma\geq 0$. When this happens, then
\bb
E_\kappa\big(\rho^{\otimes n}\big) = \log_2 \left\| \big(\rho^{\otimes n}\big)^\Gamma\right\|_1 = n \log_2 \left\|\rho^\Gamma\right\|_1 = n E_N(\rho) = n E_\kappa(\rho)\, ,
\ee
for all $n$. This observation tells us that to look for additivity violations of $E_\kappa$ we need to construct states with non-zero bi-negativity. 
After a bit of trial and error, we homed in on the following construction.

\begin{Def}[(Punch card states)]
Let $A\geq 0$ be a $d\times d$ positive semi-definite matrix. Also, let $Q$ be a $d\times d$ symmetric matrix with $Q_{ij}\in \{0,1\}$ for all $i,j=1,\ldots, d$, and $Q_{ii}=1$ for all $i=1,\ldots,d$. Then the associated punch card state is a bipartite quantum state acting on $\C^d\otimes \C^d$ and defined by
\bb
\pi_{A,Q} \propto \sum_{i,j} A_{ij} \ketbraa{ii}{jj} + \sum_{i\neq j} Q_{ij} |A_{ij}| \ketbra{ij}\, ,
\ee
with the normalisation $\Tr \pi_{A,Q} = 1$.
\end{Def}

Note that
\bb
\pi_{A,Q}^\Gamma &\propto \sum_i A_{ii} \ketbra{ii} + \sum_{i<j} \left( Q_{ij} |A_{ij}| \left(\ketbra{ij} + \ketbra{ji} \right) + A_{ij} \ketbraa{ij}{ji} + A_{ij}^* \ketbraa{ji}{ij} \right) \\
&\simeq \left(\sumno_i A_{ii} \ketbra{ii} \right) \oplus \bigoplus_{i<j} \begin{pmatrix} Q_{ij} |A_{ij}| & A_{ij} \\ A_{ij}^* & Q_{ij} |A_{ij}| \end{pmatrix} ,
\ee
from which, using the elementary observation that
\bb
\left| \begin{pmatrix} Q_{ij} |A_{ij}| & A_{ij} \\ A_{ij}^* & Q_{ij} |A_{ij}| \end{pmatrix} \right| = \begin{pmatrix} |A_{ij}| & Q_{ij} A_{ij} \\ Q_{ij} A_{ij}^* & |A_{ij}| \end{pmatrix} ,
\ee
valid because $Q_{ij}\in\{0,1\}$, we obtain that
\bb
\left|\pi_{A,Q}^\Gamma \right|^\Gamma \propto \sum_{i,j} Q_{ij} A_{ij} \ketbraa{ii}{jj} + \sum_{i\neq j} |A_{ij}| \ketbra{ij}\, ,
\ee
where $\propto$ hides a positive constant. We therefore record the following observation:

\begin{lemma}
Let $A\geq 0$ be a $d\times d$ positive semi-definite matrix, and let $Q$ be a $d\times d$ symmetric matrix with $Q_{ij}\in \{0,1\}$ for all $i,j=1,\ldots, d$, and $Q_{ii}=1$ for all $i=1,\ldots,d$. If $Q\circ A \not\geq 0$, where $\circ$ denotes the Hadamard product, then the punch card state $\pi_{A,Q}$ does not have zero bi-negativity, meaning that
\bb
\left|\pi_{A,Q}^\Gamma \right|^\Gamma \not\geq 0\, .
\ee
In particular, the state
\bb
\pi_0 \coloneqq \pi_{A_0,Q_0}\, ,\qquad A_0 \coloneqq \begin{pmatrix} 1 & 1 & 1 \\ 1 & 1 & 1 \\ 1 & 1 & 1 \end{pmatrix} ,\qquad Q_0 \coloneqq \begin{pmatrix}  1 & 0 & 1 \\ 0 & 1 & 1 \\ 1 & 1 & 1 \end{pmatrix} \not\geq 0
\label{pi_zero_SM}
\ee
does not have zero bi-negativity.
\end{lemma}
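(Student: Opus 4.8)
The plan is to extract the claim directly from the formula for $\big|\pi_{A,Q}^\Gamma\big|^\Gamma$ obtained just above, using the elementary fact that positive semi-definiteness of a Hermitian operator is preserved under compression onto a subspace. First I would introduce the orthogonal decomposition $\C^d\otimes\C^d = \HH_{\mathrm{d}}\oplus\HH_{\mathrm{o}}$ with $\HH_{\mathrm{d}}\coloneqq\Span\{\ket{ii}:i=1,\dots,d\}$ and $\HH_{\mathrm{o}}\coloneqq\Span\{\ket{ij}:i\neq j\}$, and note that in the displayed expression for $\big|\pi_{A,Q}^\Gamma\big|^\Gamma$ the term $\sum_{i,j}Q_{ij}A_{ij}\ketbraa{ii}{jj}$ is supported on $\HH_{\mathrm{d}}$ while $\sum_{i\neq j}|A_{ij}|\ketbra{ij}$ is supported on $\HH_{\mathrm{o}}$, so that the operator is block-diagonal with respect to this splitting. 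Written out in the basis $\{\ket{ii}\}_i$, the $\HH_{\mathrm{d}}$-block equals $\frac{1}{N_{A,Q}}\,(Q\circ A)$, where $N_{A,Q}>0$. Hence $\big|\pi_{A,Q}^\Gamma\big|^\Gamma\geq 0$ would force $Q\circ A\geq 0$; the contrapositive is exactly the general claim.

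Equivalently, and in a more self-contained way, I would evaluate the quadratic form of $\big|\pi_{A,Q}^\Gamma\big|^\Gamma$ on an explicit witness. Since $Q$ is real symmetric and $A=A^\dagger$, the matrix $Q\circ A$ is Hermitian, so $Q\circ A\not\geq 0$ produces a vector $w=(w_1,\dots,w_d)\in\C^d$ with $w^\dagger(Q\circ A)\,w<0$; setting $\ket{\psi}\coloneqq\sum_i w_i\ket{ii}$, the formula yields $\bra{\psi}\big|\pi_{A,Q}^\Gamma\big|^\Gamma\ket{\psi} = \frac{1}{N_{A,Q}}\,w^\dagger(Q\circ A)\,w<0$, whence $\big|\pi_{A,Q}^\Gamma\big|^\Gamma\not\geq 0$.

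For the explicit pair $(A_0,Q_0)$ I would first check the hypotheses: $A_0$ is the rank-one all-ones matrix, hence positive semi-definite, and $Q_0$ is symmetric, $0/1$-valued, with unit diagonal. Since every entry of $A_0$ equals $1$, the Hadamard product collapses to $Q_0\circ A_0 = Q_0$, so it only remains to see that $Q_0\not\geq 0$; a cofactor expansion gives $\det Q_0 = -1<0$, so $Q_0$ has an odd number of negative eigenvalues, in particular at least one, and we are done by the general part. I do not anticipate any genuine obstacle: everything follows from the block structure already exhibited in the excerpt. The one point worth stating carefully is that $N_{A,Q}>0$, which holds because the unnormalised $\pi_{A,Q}$ is itself block-diagonal with respect to $\HH_{\mathrm{d}}\oplus\HH_{\mathrm{o}}$, equal to $A\geq 0$ on the first block and to a diagonal operator with nonnegative entries on the second, hence positive semi-definite with strictly positive trace whenever $A\neq 0$ --- so $\pi_{A,Q}$ is a bona fide quantum state and the normalisation is legitimate.
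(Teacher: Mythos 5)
Your proof is correct and follows essentially the same route as the paper, which simply reads off the claim from the displayed formula for $\big|\pi_{A,Q}^\Gamma\big|^\Gamma$: the compression onto $\Span\{\ket{ii}\}_i$ is $(Q\circ A)/N_{A,Q}$, so $Q\circ A\not\geq 0$ immediately obstructs positive semi-definiteness, and $Q_0\circ A_0=Q_0$ with $\det Q_0=-1$ settles the explicit example. Your explicit witness vector and the check that $N_{A,Q}>0$ are correct elaborations of details the paper leaves implicit.
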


And sure enough, the above punch card state is already enough to exhibit a small but meaningful violation of additivity for $E_\kappa$, already at the $2$-copy level:

\begin{prop}
For the state $\pi_0$ defined by~\eqref{pi_zero_SM}, it holds that
\bb
2 \ecost(\pi_0)\leq E_\kappa\big(\pi_0^{\otimes 2}\big) < 2 E_\kappa(\pi_0)\, .
\ee
In particular, $E_\kappa\big(\pi_0^{\otimes 2}\big)\approx 1.001$ and $2E_\kappa(\pi_0) \approx 1.029$.
\end{prop}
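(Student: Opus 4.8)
The plan is to establish the two inequalities separately, starting with the easy one. The bound $2\ecost(\pi_0)\le E_\kappa\big(\pi_0^{\otimes2}\big)$ is immediate from the Wang--Wilde identity \eqref{cost_E_kappa_regularised}--\eqref{E_kappa_regularised_infimum}, which gives $\ecost(\rho)=\inf_{n\in\N^+}\tfrac1n E_\kappa\big(\rho^{\otimes n}\big)$ and hence, taking $n=2$, $\ecost(\pi_0)\le\tfrac12 E_\kappa\big(\pi_0^{\otimes2}\big)$. All the content therefore lies in the strict inequality $E_\kappa\big(\pi_0^{\otimes2}\big)<2E_\kappa(\pi_0)$ together with the numerical values, and both reduce to evaluating the semi-definite program \eqref{E_kappa_SM} on $\pi_0$ and on $\pi_0^{\otimes2}$.

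For the single copy I would compute $E_\kappa(\pi_0)$ in closed form. The partial transpose $\pi_0^\Gamma$ has the explicit block-diagonal form recorded above, and $\pi_0$ carries a symmetry group --- generated by the transposition of the first two computational indices (which fixes both $A_0$ and $Q_0$) together with the fact that everything is real --- under which \eqref{E_kappa_SM} can be reduced: since conjugating by a local unitary $U_A\otimes U_B$ intertwines $\Gamma$ with conjugation by $U_A\otimes\overline{U_B}$, the usual twirling/convexity argument shows that an optimal $S$ may be taken in the (low-dimensional) commutant of this group, and all three constraints $-S\le\pi_0^\Gamma\le S$ and $S^\Gamma\ge0$ are preserved. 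The resulting small SDP is solved by exhibiting an explicit feasible $S$ with $\Tr S=10/7$ and a matching dual-feasible triple in the dual program $\max\big\{\Tr\!\big[(Q-P)\pi_0^\Gamma\big]:P,Q,R\ge0,\ P+Q+R^\Gamma=\id\big\}$, which certifies $E_\kappa(\pi_0)=\log_2\tfrac{10}{7}$ and hence $2E_\kappa(\pi_0)=\log_2\tfrac{100}{49}\approx1.029$. (Note this strictly exceeds $E_N(\pi_0)=\log_2\big\|\pi_0^\Gamma\big\|_1$, obtained by dropping the constraint $S^\Gamma\ge0$ in \eqref{E_kappa_SM} and applying Lemma~\ref{variational_trace_norm_lemma}, precisely because $\pi_0$ has non-zero bi-negativity.)

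For two copies, the tensor product $S^\star\otimes S^\star$ of the optimal single-copy $S^\star$ is already feasible for \eqref{E_kappa_SM} on $\pi_0^{\otimes2}$ --- this is exactly the construction behind the sub-additivity estimate \eqref{subadditivity_E_kappa} --- and it attains $\Tr\big(S^\star\otimes S^\star\big)=(10/7)^2$, i.e.\ the value $2E_\kappa(\pi_0)$. To get the strict inequality I would exhibit a genuinely cheaper feasible point on the $81$-dimensional system. The guiding intuition is that whenever $E_\kappa>E_N$, i.e.\ whenever $\big|\pi_0^\Gamma\big|^\Gamma\not\ge0$, the product construction is wasteful, because the two partial-transpose negativities partially cancel under $\otimes$, so the constraint $S^\Gamma\ge0$ can be satisfied at a discount on two copies. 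Concretely I would again reduce \eqref{E_kappa_SM} using the symmetry group of $\pi_0^{\otimes2}$, which now contains the swap $F_{A_1A_2}\otimes F_{B_1B_2}$ of the two copies (commuting with $\Gamma_{B_1B_2}$) on top of the index symmetries, solve the reduced program, and supply a dual-feasible point certifying $E_\kappa\big(\pi_0^{\otimes2}\big)\approx1.001$ --- comfortably below $1.029$. For the strict inequality alone, any explicit feasible $S$ with $\Tr S<(10/7)^2$ suffices, and combining with the first paragraph yields the full chain $2\ecost(\pi_0)\le E_\kappa\big(\pi_0^{\otimes2}\big)<2E_\kappa(\pi_0)$.

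The main obstacle is the two-copy step: the cheaper feasible point is not canonical, and without the symmetry reduction the $81\times81$ SDP is unwieldy to handle by hand; moreover, pinning down the value $\approx1.001$ (as opposed to a mere strict upper bound) requires a matching dual certificate on the reduced problem. A smaller but genuine issue is to make the twirling argument watertight --- checking that averaging over the relevant finite group maps feasible $S$ to feasible $S$ (via the commutation of $\Gamma$ with the local conjugations and with the tensor swap) and does not increase $\Tr S$ (immediate, since the twirl is trace-preserving and the objective linear).
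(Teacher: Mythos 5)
Your first inequality is exactly the paper's argument: sub-additivity of $E_\kappa$ plus Fekete's lemma give $\ecost(\rho)=E_\kappa^\infty(\rho)=\inf_n \frac1n E_\kappa(\rho^{\otimes n})\leq \frac12 E_\kappa(\rho^{\otimes 2})$, so that part is fine and identical in spirit to~\eqref{cost_E_kappa_regularised}--\eqref{E_kappa_regularised_infimum}. You also correctly identify that all remaining content is the evaluation of the SDP~\eqref{E_kappa_SM} on $\pi_0$ and $\pi_0^{\otimes 2}$, and your rewriting of the dual as $\max\{\Tr[(Q-P)\pi_0^\Gamma]: P,Q,R\geq 0,\ P+Q+R^\Gamma=\id\}$ is equivalent to~\eqref{E_kappa_dual_SM}.

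The gap is that the decisive step is announced but never carried out. You assert $E_\kappa(\pi_0)=\log_2\frac{10}{7}$ "by exhibiting an explicit feasible $S$ \ldots and a matching dual-feasible triple", but neither certificate is written down (the value is consistent with $2E_\kappa(\pi_0)\approx 1.029$, so it is a plausible reverse-engineered guess, not a derivation). More importantly, the strict inequality $E_\kappa(\pi_0^{\otimes 2})<2E_\kappa(\pi_0)$ rests entirely on producing a feasible $S$ for the two-copy SDP with $\Tr S<(10/7)^2$, and you explicitly concede that you do not have one; the intuition that the negativities "partially cancel under $\otimes$" is not a proof. The paper's own proof is frankly minimal here --- it simply states that the two values can be verified with an SDP solver --- so your plan, if completed with explicit primal/dual certificates (ideally rational ones), would actually be \emph{stronger} than the paper's numerical verification. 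But as written, the proposal establishes only the first inequality; the strict inequality, which is the whole point of the proposition (it is what disproves the additivity of $E_\kappa$), remains unproven. Either supply the explicit feasible point for $\pi_0^{\otimes 2}$ together with a dual certificate for the single-copy value, or fall back on the paper's route and state that the two SDP values are computed numerically.
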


\begin{proof}
The first inequality follows by combining~\eqref{cost_E_kappa_regularised} and~\eqref{E_kappa_regularised_infimum}. The numerical calculations of $E_\kappa\big(\pi_0^{\otimes 2}\big)$ and $E_\kappa(\pi_0)$ can be verified with an SDP solver.
\end{proof}

\section{New entanglement monotones: \texorpdfstring{$E_{\chi,{\MakeLowercase p}}$}{E\_chi,p} and \texorpdfstring{$E_{\kappa,{\MakeLowercase q}}$}{E\_kappa,q}}

\subsection{Rationale of our construction}

Let us investigate in greater depth the fundamental reason why $E_\kappa$ fails to be additive. To this end, it is useful to look at the dual SDP program~\cite{Skrzypczyk_2023,vandenberghe_1996} for $E_\kappa$, which takes the form~\cite[Eq.~(S25)]{Xin-exact-PPT}
\bb
E_\kappa(\rho) &= \log_2 \max \left\{ \Tr \rho\, (V - W)^\Gamma:\ V,W\geq 0,\ V^\Gamma + W^\Gamma \leq \id \right\} .
\label{E_kappa_dual_SM}
\ee
Since $E_\kappa$ is sub-additive, to prove additivity we would only need to prove super-additivity. This is most naturally done with the dual program, which involves a maximisation. The proof of $E_\kappa(\rho\otimes \rho') \geqt{?} E_\kappa(\rho) + E_\kappa(\rho')$ would proceed as follows: first we consider optimal feasible points $V,W$ and $V',W'$ that achieve the maxima in the dual SDPs~\eqref{E_kappa_dual_SM} for $\rho$ and $\rho'$; then we attempt to construct a feasible point $V'',W''$ for the dual SDP for $\rho\otimes \rho'$; such an ansatz would need to yield a value of the objective function equal to $E_\kappa(\rho) + E_\kappa(\rho')$ (upon taking the logarithm). A quick inspection reveals that this requires that
\bb
\Tr \rho\, (V''- W'')^\Gamma \eqt{?} \left( \Tr \rho\, (V - W)^\Gamma\right) \left( \Tr \rho\, (V' - W')^\Gamma \right) .
\label{ansatz_condition}
\ee
A natural ansatz at this point is the one found by Wang and Wilde~\cite{Xin-exact-PPT}, namely,
\bb
V'' = V\otimes V' + W\otimes W'\, ,\qquad W'' = V\otimes W' + W\otimes V'\, .
\ee
Note that $V'',W''\geq 0$; furthermore, this ansatz clearly satisfies~\eqref{ansatz_condition}. The problem is that
\bb
(V''+W'')^\Gamma = (V+W)^\Gamma \otimes (V' + W')^\Gamma \not\leq \id
\ee
in general. Indeed, while $(V+W)^\Gamma \leq \id$ and $(V'+W')^\Gamma \leq \id$ holds by hypothesis, from this it does not follow that the tensor product of these operators is also upper bounded by the identity. The reason is that both of these operators could have large negative eigenvalues, which would become large and positive when multiplied.

The first idea we have is to modify the definition of $E_\kappa$ so as to make sure that this does not happen. To this end, it suffices to add the further constraint $(V+W)^\Gamma \geq -\id$. We can therefore define
\bb
E_\chi(\rho) \coloneqq&\ \log_2 \max \left\{ \Tr \rho\, (V - W)^\Gamma:\ V,W\geq 0,\ -\id\leq V^\Gamma + W^\Gamma \leq \id \right\} \\
=&\ \log_2 \min\left\{ \Tr T :\ -S\leq \rho^\Gamma\! \leq S,\ -T \leq S^\Gamma\! \leq T \right\} \\
=&\ \log_2 \min\left\{ \left\|S^\Gamma \right\|_1 :\ -S\leq \rho^\Gamma\! \leq S \right\} ,
\label{E_chi_SM}
\ee
where in the second line we derived the primal program corresponding to the dual program in the first line, and in the third line we used Lemma~\ref{variational_trace_norm_lemma}. 

We will see shortly that our approach fixes the additivity problem, in the sense that the new function $E_\chi$ we have just constructed is indeed additive, unlike $E_\kappa$. Of course, what we are really interested in is the regularisation of $E_\kappa$; in general, $E_\chi$ can only provide a lower bound to that quantity. Indeed, for all states it holds that $E_\chi \leq E_\kappa$, because taking the optimisation on the third line of~\eqref{E_chi_SM} and restricting to operators $S$ that satisfy $S^\Gamma \geq 0$ yields precisely $E_\kappa$, as can be seen by observing that in those cases $\left\|S^\Gamma \right\|_1 = \Tr S$. Assuming the additivity of $E_\chi$, which we will prove shortly, we immediately deduce that
\bb
E_\kappa \geq E_\kappa^\infty \geq E_\chi\, .
\ee

While this is already interesting in itself, we can do better. In fact, the definition of $E_\chi$ in the second line of~\eqref{E_chi_SM} lends itself naturally to some generalisations. 

\begin{Def}[($\chi$-hierarchy)] \label{chi_hierarchy_Def}
Let $\rho=\rho_{AB}$ be any bipartite quantum state. For $p\in \N$, we define the corresponding \deff{$\boldsymbol{\chi}$-quantity} $E_{\chi,p}$ as 
\begin{align}
E_{\chi, p}(\rho) \coloneqq&\ \log_2 \chi_p(\rho)\, , \label{E_chi_p_SM} \\
\chi_p(\rho) \coloneqq&\ \min \left\{ \Tr S_p :\ -S_i \leq S_{i-1}^\Gamma \leq S_i,\ i=0,...,p,\ S_{-1} = \rho \right\} \label{chi_p_1_SM}\\
=&\ \min\left\{ \Tr S_p :\ -S_0\leq \rho^\Gamma\! \leq S_0,\ -S_1 \leq S_0^\Gamma \leq S_1,\ \ldots\ ,\ -S_p \leq S_{p-1}^\Gamma\!\leq S_p \right\} . \label{chi_p_2_SM}
\end{align}
We will refer to the sequence of functions $E_{\chi,p}$, $p\in \N$, as the \deff{$\boldsymbol{\chi}$-hierarchy}.
\end{Def}

\begin{rem} \label{min_achieved_rem}
The minima in~\eqref{chi_p_1_SM}--\eqref{chi_p_2_SM} are always achieved, essentially because of the compactness of the set of positive semi-definite operators with trace bounded by a fixed number.
\end{rem}

\begin{rem}
For $p>0$, due to Lemma~\ref{variational_trace_norm_lemma} we can equivalently write
\bb
\chi_p(\rho) = \min\left\{ \big\|S_{p-1}^\Gamma\big\| :\ -S_0\leq \rho^\Gamma \leq S_0,\ -S_1 \leq S_0^\Gamma \leq S_1,\ \ldots\ ,\ -S_{p-1} \leq S_{p-2}^\Gamma \leq S_{p-1} \right\}.
\ee
\end{rem}

When $p=0$, clearly we have that
\bb
E_{\chi,0}(\rho) = \log_2 \big\|\rho^\Gamma \big\|_1 = E_N(\rho)
\ee
yields the celebrated logarithmic negativity~\cite{negativity, plenioprl}. For $p=1$, instead,
\bb
E_{\chi,1}(\rho) = E_\chi(\rho)
\ee
reproduces~\eqref{E_chi_SM}. 

While this is quite interesting, we can now observe that the above functions $E_{\chi,p}$ are of a somewhat different nature than the $E_\kappa$ monotone introduced by Wang and Wilde~\cite{Xin-exact-PPT}, whose definition is reported in~\eqref{E_kappa_SM}. We can however try to generalise that construction as well. This is done as follows: for $q\geq 1$, one replaces $p$ with $q$ in~\eqref{chi_p_1_SM}, restricts the minimisation there to operators $S_{q-1}$ satisfying $S_{q-1}^\Gamma \geq 0$, and sets $S_q = S_{q-1}^\Gamma$ accordingly; this gives a `$\kappa$-hierarchy' defined as follows.

\begin{Def}[($\kappa$-hierarchy)] \label{kappa_hierarchy_Def}
Let $\rho=\rho_{AB}$ be any bipartite quantum state. For an arbitrary positive integer $q\in \N^+$, we define the corresponding \deff{$\boldsymbol{\kappa}$-quantity} $E_{\kappa,q}$ as 
\begin{align}
E_{\kappa, q}(\rho) \coloneqq&\ \log_2 \kappa_q(\rho)\, , \label{E_kappa_q_SM} \\
\kappa_q(\rho) \coloneqq&\ \min \left\{ \Tr S_{q-1}:\ -S_i \leq S_{i-1}^\Gamma \leq S_i,\ i=0,...,q-1,\ S_{-1} = \rho,\ S_{q-1}^\Gamma \geq 0 \right\} \label{kappa_q_1_SM} \\
=&\ \min\left\{ \Tr S_{q-1} :\ -S_0\leq \rho^\Gamma\! \leq S_0,\ \ldots\ ,\ -S_{q-1} \leq S_{q-2}^\Gamma\leq S_{q-1},\ S_{q-1}^\Gamma \geq 0 \right\} . \label{kappa_q_2_SM}
\end{align}
We will refer to the sequence of functions $E_{\chi,q}$, $q\in \N^+$, as the \deff{$\boldsymbol{\kappa}$-hierarchy}.
\end{Def}

For reasons analogous to those explained in Remark~\ref{min_achieved_rem}, the minima in~\eqref{kappa_q_1_SM}--\eqref{kappa_q_2_SM} are always achieved. Furthermore, note that
\bb
E_{\kappa,1}(\rho) = E_\kappa(\rho)
\ee
reproduces the function in~\eqref{E_kappa_SM}. We shall now explore the properties of the $\chi$- and $\kappa$-hierarchies in detail.

\subsection{Relation between the \texorpdfstring{$\chi$- and $\kappa$}{chi- and kappa}-hierarchies}

We now begin our investigation of the properties of the $\chi$- and $\kappa$-hierarchies. First, we will study the relation between the two. A very simple yet very useful observation that we record as a preliminary lemma is as follows.

\begin{lemma}[(Increasing trace)] \label{increasing_trace_lemma}
Let the operators $S_0,\ldots, S_p$ be a feasible point for the SDP~\eqref{chi_p_1_SM} that defines $\chi_p(\rho)$, for some bipartite state $\rho = \rho_{AB}$ and some $p\in \N$. Then $S_0,\ldots, S_p\geq 0$ are positive semi-definite, and moreover
\bb
1\leq \Tr S_0 \leq \Tr S_1 \leq \ldots \leq \Tr S_p\, .
\label{increasing_trace_chi}
\ee
Similarly, if $S'_0,\ldots, S'_{q-1}$ constitute a feasible point for the SDP~\eqref{kappa_q_1_SM} that defines $\kappa_q(\rho)$ ($q\in \N^+$), then $S'_0,\ldots, S'_{q-1}\geq 0$, and
\bb
1\leq \Tr S'_0 \leq \Tr S'_1 \leq \ldots \leq \Tr S'_{q-1}\, .
\label{increasing_trace_kappa}
\ee
\end{lemma}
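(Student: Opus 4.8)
The plan is to prove both chains of inequalities by essentially the same argument, exploiting the structure of the constraints $-S_i \leq S_{i-1}^\Gamma \leq S_i$. First I would observe that each such constraint immediately forces $S_i \geq 0$: adding the two inequalities $-S_i \leq S_{i-1}^\Gamma$ and $S_{i-1}^\Gamma \leq S_i$ gives $-S_i \leq S_i$, i.e.\ $2 S_i \geq 0$. This handles positive semi-definiteness of $S_0, \ldots, S_p$ (and of $S'_0, \ldots, S'_{q-1}$) in one stroke; note that for the $\kappa$-SDP, the final operator $S_{q-1}$ is covered by the constraint with $i = q-1$, and the extra condition $S_{q-1}^\Gamma \geq 0$ is not even needed for this part.

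Next I would establish the monotonicity of the traces. The key tool is~\eqref{partial_trace_trace_preserving}, which says $\Tr Z^\Gamma = \Tr Z$ for all operators. From the constraint $S_{i-1}^\Gamma \leq S_i$ I would take the trace of both sides: since $S_i - S_{i-1}^\Gamma \geq 0$ is positive semi-definite, its trace is nonnegative, so $\Tr S_i \geq \Tr S_{i-1}^\Gamma = \Tr S_{i-1}$ by~\eqref{partial_trace_trace_preserving}. Applying this for $i = 1, \ldots, p$ (respectively $i = 1, \ldots, q-1$) chains together to give $\Tr S_0 \leq \Tr S_1 \leq \ldots \leq \Tr S_p$, and likewise for the primed operators.

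Finally I would pin down the base case $\Tr S_0 \geq 1$. Here I use the $i = 0$ constraint, which reads $-S_0 \leq \rho^\Gamma \leq S_0$ since $S_{-1} = \rho$. Taking the trace of $S_0 - \rho^\Gamma \geq 0$ and using~\eqref{partial_trace_trace_preserving} together with $\Tr \rho = 1$ gives $\Tr S_0 \geq \Tr \rho^\Gamma = \Tr \rho = 1$. Combining this with the monotonicity chain yields~\eqref{increasing_trace_chi} and~\eqref{increasing_trace_kappa}.

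I do not anticipate a genuine obstacle here — the lemma is elementary and follows purely from trace-positivity of PSD operators and the trace-preservation property of the partial transpose. The only mild subtlety worth stating carefully is that one should not accidentally try to invoke monotonicity of the trace under $\Gamma$ in an order-theoretic sense (the partial transpose is not positive), but rather use only the \emph{equality} of traces in~\eqref{partial_trace_trace_preserving} combined with the positivity of the differences $S_i - S_{i-1}^\Gamma$; that is already built into the argument above.
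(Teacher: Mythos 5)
Your proof is correct and follows essentially the same route as the paper's: positive semi-definiteness from adding the two sides of $-S_i\leq S_{i-1}^\Gamma\leq S_i$, and the trace chain (including the base case $\Tr S_0\geq \Tr\rho=1$) by taking traces of $S_{i-1}^\Gamma\leq S_i$ together with the trace-preservation identity~\eqref{partial_trace_trace_preserving}. Your closing remark about using only the equality of traces, rather than any order-preservation of $\Gamma$, is exactly the right point of care.
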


\begin{proof}
From the inequalities $-S_i\leq S_{i-1}^\Gamma \leq S_i$ ($i=0,\ldots, p$, with $S_{-1}=\rho$) it follows immediately that $-S_i\leq S_i$, i.e.\ $S_i\geq 0$. To prove~\eqref{increasing_trace_chi}, it suffices to take the trace of the inequalities $S_0^\Gamma\leq S_1$, $S_1^\Gamma\leq S_2$, \ldots, and $S_{p-1}^\Gamma \leq S_p$. The proof of $S'_0,\ldots, S'_{q-1}\geq 0$ and of~\eqref{increasing_trace_kappa} is entirely analogous.
\end{proof}

\begin{rem} \label{nonzero_rem}
In particular, for all $p\in \N$, $q\in \N^+$, and bipartite states $\rho$, it holds that $\chi_p(\rho)\geq 1$ and $\kappa_q(\rho)\geq 1$. 
\end{rem}

We are now ready to prove the following result.

\begin{prop} \label{relation_hierarchies_prop}
For all non-negative integers $p\in \N$, positive integers $q\in \N^+$, and for all bipartite states $\rho = \rho_{AB}$, it holds that
\bb
E_N = E_{\chi,0} \leq \ldots \leq E_{\chi,p} \leq E_{\chi,p+1} \leq \ldots \leq E_{\kappa,q+1} \leq E_{\kappa,q} \leq \ldots \leq E_\kappa\, ,
\label{relation_hierarchies_SM}
\ee
where we forwent the dependence on $\rho$. In other words, the function $\N\ni p\mapsto E_{\chi,p}(\rho)$ is increasing,\footnote{Not necessarily \emph{strictly} increasing. The meaning of `decreasing' should similarly be intended as `non-increasing'.} while $\N^+\ni q\mapsto E_{\kappa,q}(\rho)$ is decreasing; moreover, $E_{\chi,p}(\rho) \leq E_{\kappa,q}(\rho)$ holds for all $p\in \N$ and $q\in \N^+$.
\end{prop}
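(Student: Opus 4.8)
The plan is to establish the chain of inequalities in~\eqref{relation_hierarchies_SM} by proving three separate facts: monotonicity of the $\chi$-hierarchy, monotonicity of the $\kappa$-hierarchy, and the crossover bound $E_{\chi,p}\leq E_{\kappa,q}$. Since $\log_2$ is monotone, it suffices to work with the quantities $\chi_p$ and $\kappa_q$ directly.

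\textbf{Monotonicity of the $\chi$-hierarchy.} First I would show $\chi_p(\rho)\leq \chi_{p+1}(\rho)$ for all $p\in\N$. Take an optimal feasible point $S_0,\ldots,S_{p+1}$ for the SDP~\eqref{chi_p_1_SM} defining $\chi_{p+1}(\rho)$. Then $S_0,\ldots,S_p$ alone satisfy the constraints $-S_i\leq S_{i-1}^\Gamma\leq S_i$ for $i=0,\ldots,p$ with $S_{-1}=\rho$, so they form a feasible point for the SDP defining $\chi_p(\rho)$. Hence $\chi_p(\rho)\leq \Tr S_p$. But by the Increasing trace lemma (Lemma~\ref{increasing_trace_lemma}) applied to the feasible point $S_0,\ldots,S_{p+1}$, we have $\Tr S_p\leq \Tr S_{p+1}=\chi_{p+1}(\rho)$. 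Combining these gives $\chi_p(\rho)\leq \chi_{p+1}(\rho)$. The base case $\chi_0(\rho)=\|\rho^\Gamma\|_1$ recovering $E_N$ is immediate from Lemma~\ref{variational_trace_norm_lemma}.

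\textbf{Monotonicity of the $\kappa$-hierarchy.} Next I would show $\kappa_{q+1}(\rho)\leq \kappa_q(\rho)$ for all $q\in\N^+$. Let $S_0,\ldots,S_{q-1}$ be an optimal feasible point for the SDP~\eqref{kappa_q_1_SM} defining $\kappa_q(\rho)$, so in particular $S_{q-1}^\Gamma\geq 0$. The natural move is to extend this to a feasible point for $\kappa_{q+1}(\rho)$ by setting $S_q\coloneqq S_{q-1}^\Gamma$. Then $S_q\geq 0$, so $-S_q\leq S_{q-1}^\Gamma=S_q\leq S_q$ holds, and $S_q^\Gamma=S_{q-1}\geq 0$ (since $S_{q-1}$ was positive semi-definite by Lemma~\ref{increasing_trace_lemma}), so the new constraint at level $q$ is satisfied. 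Thus $S_0,\ldots,S_q$ is feasible for $\kappa_{q+1}$, giving $\kappa_{q+1}(\rho)\leq \Tr S_q=\Tr S_{q-1}^\Gamma=\Tr S_{q-1}=\kappa_q(\rho)$, where I used~\eqref{partial_trace_trace_preserving}.

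\textbf{The crossover $E_{\chi,p}\leq E_{\kappa,q}$.} Finally, by the two monotonicity statements it suffices to prove $E_{\chi,p}\leq E_{\kappa,p}$ for every $p\in\N^+$ (then $E_{\chi,p}\leq E_{\chi,\max\{p,q\}}$ would go the wrong way, so more carefully: for $p\geq q$ use $E_{\chi,p}\le E_{\chi,q}$? no---instead note $E_{\chi,p}\le E_{\kappa,p}$ and, if $q\le p$, $E_{\kappa,p}\le E_{\kappa,q}$ gives the claim, while if $q>p$, $E_{\chi,p}\le E_{\chi,q}\le E_{\kappa,q}$; either way it reduces to the equal-index case, plus monotonicity already shown). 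To see $\chi_p(\rho)\leq \kappa_p(\rho)$: any feasible point $S_0,\ldots,S_{p-1}$ for $\kappa_p(\rho)$ satisfies $S_{p-1}^\Gamma\geq 0$; set $S_p\coloneqq S_{p-1}^\Gamma$, which is then feasible for $\chi_p(\rho)$ exactly as in the $\kappa$-monotonicity argument, and $\Tr S_p=\Tr S_{p-1}$, so $\chi_p(\rho)\leq \kappa_p(\rho)$. (This is precisely the observation already noted in the excerpt that adding the constraint $S_{p-1}^\Gamma\geq0$ to~\eqref{chi_p_1_SM} yields $\kappa_p$ with the optimal $S_p=S_{p-1}^\Gamma$.)

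I do not expect any serious obstacle here: every step is a short feasibility-preservation argument, and the only mildly delicate point is making sure the index bookkeeping in the crossover inequality is handled by invoking the two monotonicity results in the right order, as indicated above. The Increasing trace lemma does the essential work in the $\chi$-monotonicity step, and identity~\eqref{partial_trace_trace_preserving} ($\Tr X^\Gamma=\Tr X$) is what lets the traces match up in the $\kappa$-monotonicity and crossover steps.
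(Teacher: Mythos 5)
Your proposal is correct and follows essentially the same route as the paper: both monotonicity claims are proved by exactly the same feasibility-preservation arguments (dropping the last variable for $\chi_{p}\leq\chi_{p+1}$ via the increasing-trace lemma, and appending $S_q\coloneqq S_{q-1}^\Gamma$ for $\kappa_{q+1}\leq\kappa_q$). The only cosmetic difference is in the crossover step, where the paper observes that a feasible point for $\kappa_q$ is already feasible for $\chi_{q-1}$ (giving $\chi_{q-1}\leq\kappa_q$ and then passing through the limits $p,q\to\infty$), whereas you append one extra variable to get the equal-index bound $\chi_p\leq\kappa_p$ and combine it with monotonicity directly --- both are valid and rest on the same observation.
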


\begin{proof}
We prove the claims one by one.
\begin{itemize}
\item \emph{$\N\ni p\mapsto E_{\chi,p}$ is increasing (for fixed $\rho$).} Let $S_0,\ldots, S_{p+1}$ be optimal feasible points for the SDP that defines $\chi_{p+1}$. Then the operators $S_0,\ldots, S_p$ clearly constitute a feasible point for the SDP that defines $\chi_p$, because all the inequalities that have to be obeyed are also present in the SDP that defines $\chi_{p+1}$. Hence, $\chi_p\leq \Tr S_p \leq \Tr S_{p+1} = \chi_{p+1}$, where for the second inequality we leveraged Lemma~\ref{increasing_trace_lemma} (and, more precisely, Eq.~\eqref{increasing_trace_chi}). Taking the logarithm completes the argument.

\item \emph{$\N^+\ni q\mapsto E_{\kappa,q}$ is decreasing (for fixed $\rho$).} The proof is somewhat similar, but reversed. If $S_0,\ldots,S_{q-1}$ represent the optimal feasible point for the SDP~\eqref{kappa_q_1_SM} that defines $\kappa_q$, then we can construct a corresponding feasible point $S_0,\ldots, S_{q-1}, S_q\coloneqq S_{q-1}^\Gamma$ for the SDP that defines $\kappa_{q+1}$. Indeed, the inequalities $-S_i \leq S_{i-1}^\Gamma \leq S_i$ are automatically satisfied for $i=0,\ldots,q-1$ (here, $S_{-1} = \rho$). For $i=q$, we obviously have that $-S_q = -S_{q-1}^\Gamma \leq S_{q-1}^\Gamma = S_{q}$ 
due to the fact that $S_{q-1}^\Gamma \geq 0$. Finally, the last constraint in the SDP for $\kappa_{q+1}$ is $S_q^\Gamma\geq 0$. This is also obeyed, because $S_q^\Gamma = S_{q-1}$, which is positive definite due to Lemma~\ref{increasing_trace_lemma}. Using~\eqref{partial_trace_trace_preserving} we thus obtain that $\kappa_{q+1} \leq \Tr S_q = \Tr S_{q-1}^\Gamma = \Tr S_{q-1} = \kappa_q$. Taking the logarithm completes the proof of the claim.

\item \emph{$E_{\chi,p} \leq E_{\kappa,q}$ for all $p\in \N$ and $q\in \N^+$.} Due to the fact that $\N\ni p\mapsto E_{\chi,p}$ is increasing and 
$\N\ni q \mapsto E_{\kappa,q}$ is decreasing,\footnote{\tcr{The function $\N\ni q \mapsto E_{\kappa,q}$ is also bounded, due to the inequality $E_{\kappa,q}\leq E_\kappa$ together with the fact that $E_\kappa$ is finite for all (finite-dimensional) states.}} the claim is equivalent to the statement that
\bb
\sup_{p\in \N} E_{\chi,p} = \lim_{p\to \infty} E_{\chi,p} \leqt{?} \lim_{q\to\infty} E_{\kappa,q} = \inf_{q\in \N^+} E_{\kappa,q}\, .
\ee
Now, let $S_0,\ldots, S_{q-1}$ be optimal feasible points for the SDP~\eqref{kappa_q_1_SM} that represents $\kappa_q$, for some $q\in \N^+$. By direct inspection of~\eqref{chi_p_1_SM}, we see that these are also feasible points for the SDP that represents $\chi_{q-1}$. Therefore,
\bb
\chi_{q-1} \leq \Tr S_{q-1} = \kappa_q\, .
\ee
Since this holds for an arbitrary $q\in \N^+$, we can now take the limit $q\to\infty$ and subsequently the logarithm of both sides, thus concluding the proof.
\end{itemize}
\end{proof}

To conclude this subsection, we observe that both of our hierarchies collapse into one single quantity in the simple case where the underlying state $\rho$ has zero bi-negativity. This includes, in particular, the case where $\rho$ is pure~\cite[Lemma~5]{Audenaert2002}.

\begin{lemma} \label{normalisation_lemma}
If $\rho = \rho_{AB}$ has zero bi-negativity, in the sense that $\big|\rho^\Gamma\big|^\Gamma\geq 0$, then 
\bb
E_{\chi,p}(\rho) = E_{\kappa,q}(\rho) = E_N(\rho) = \log_2 \big\|\rho^\Gamma\big\|_1
\ee
for all $p\in \N$ and $q\in \N^+$. In particular, for all pure states $\Psi = \ketbra{\Psi}$ with Schmidt decomposition $\ket{\Psi} = \ket{\Psi}_{AB} = \sum_i \sqrt{\lambda_i} \ket{e_i}_A\otimes \ket{f_i}_B$, we have that
\bb
E_{\chi,p}(\Psi) = E_{\kappa,q}(\Psi) = 2 \log_2 \left(\sumno_i \sqrt{\lambda_i} \right) \leq \log_2 d\, ,
\label{hierarchies_pure_states}
\ee
where $d\coloneqq \min\{|A|,|B|\}$, for all $p\in \N$ and $q\in \N^+$, with equality if $\ket{\Psi} = \ket{\Phi_d}$ is the maximally entangled state defined by~\eqref{maximally_entangled_state}.
\end{lemma}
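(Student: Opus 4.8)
The plan is to prove both equalities in two steps: first establish the claim for states with zero bi-negativity, and then specialise to pure states by showing they always have zero bi-negativity.

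\textbf{Step 1: zero bi-negativity case.} By Proposition~\ref{relation_hierarchies_prop} we already have $E_N = E_{\chi,0} \leq E_{\chi,p} \leq E_{\kappa,q} \leq E_\kappa$ for all $p\in\N$, $q\in\N^+$. Hence it suffices to prove the reverse inequality $E_\kappa(\rho) \leq E_N(\rho)$ when $\big|\rho^\Gamma\big|^\Gamma \geq 0$, since this will collapse the entire chain. To do so, I would exhibit the explicit feasible point $S \coloneqq \big|\rho^\Gamma\big|$ for the SDP defining $E_\kappa$ in~\eqref{E_kappa_SM}: indeed $-\big|\rho^\Gamma\big| \leq \rho^\Gamma \leq \big|\rho^\Gamma\big|$ holds for any Hermitian operator, and the extra constraint $S^\Gamma = \big|\rho^\Gamma\big|^\Gamma \geq 0$ is exactly the zero bi-negativity hypothesis. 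Then $\kappa_1(\rho) \leq \Tr\big|\rho^\Gamma\big| = \big\|\rho^\Gamma\big\|_1$, so $E_\kappa(\rho) = E_{\kappa,1}(\rho) \leq \log_2\big\|\rho^\Gamma\big\|_1 = E_N(\rho)$, closing the loop.

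\textbf{Step 2: pure states.} I would invoke the cited fact~\cite[Lemma~5]{Audenaert2002} that every pure state $\Psi$ has zero bi-negativity, so Step~1 applies and all the hierarchy quantities equal $E_N(\Psi) = \log_2\big\|\Psi^\Gamma\big\|_1$. It then remains to compute $\big\|\Psi^\Gamma\big\|_1$ in terms of the Schmidt coefficients. Writing $\ket{\Psi} = \sum_i \sqrt{\lambda_i}\,\ket{e_i}\ket{f_i}$, one finds $\Psi^\Gamma = \sum_{i,j} \sqrt{\lambda_i \lambda_j}\,\ketbraa{e_i}{e_j} \otimes \big(\ketbraa{f_i}{f_j}\big)^\intercal = \sum_{i,j}\sqrt{\lambda_i\lambda_j}\,\ketbraa{e_i f_j}{e_j f_i}$; the diagonal terms $i=j$ contribute $\sum_i \lambda_i = 1$ to the spectrum, while each off-diagonal pair $\{i,j\}$ gives a $2\times 2$ block $\sqrt{\lambda_i\lambda_j}\lsmatrix 0 & 1 \\ 1 & 0 \rsmatrix$ with eigenvalues $\pm\sqrt{\lambda_i\lambda_j}$. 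Summing absolute values yields $\big\|\Psi^\Gamma\big\|_1 = \sum_i \lambda_i + 2\sum_{i<j}\sqrt{\lambda_i\lambda_j} = \big(\sum_i \sqrt{\lambda_i}\big)^2$, giving the first equality in~\eqref{hierarchies_pure_states}. The bound $\leq \log_2 d$ follows from Cauchy--Schwarz, $\big(\sum_i \sqrt{\lambda_i}\big)^2 \leq r \sum_i \lambda_i = r \leq d$ where $r \leq d$ is the Schmidt rank, with equality precisely when all $d$ nonzero Schmidt coefficients are equal, i.e.\ when $\ket{\Psi} = \ket{\Phi_d}$.

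I do not expect any serious obstacle here: the only mild subtlety is bookkeeping the spectrum of $\Psi^\Gamma$ carefully (making sure the $2\times2$ off-diagonal blocks are correctly identified and that the diagonal contributes the eigenvalue collection $\{\lambda_i\}$ rather than a single eigenvalue $1$), but this is a standard computation. The conceptual content is entirely in the short Step~1 observation that $\big|\rho^\Gamma\big|$ is a feasible point for the $E_\kappa$-SDP exactly when bi-negativity vanishes.
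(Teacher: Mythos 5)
Your proof is correct and follows essentially the same route as the paper: sandwich the hierarchies between $E_N$ and $E_\kappa$ via Proposition~\ref{relation_hierarchies_prop}, collapse the chain using zero bi-negativity, invoke~\cite[Lemma~5]{Audenaert2002} for pure states, and compute $\big\|\Psi^\Gamma\big\|_1$ from the Schmidt decomposition. The only difference is that where the paper cites Wang--Wilde for $E_\kappa=E_N$ under zero bi-negativity and~\cite{negativity} for $\big\|\Psi^\Gamma\big\|_1=\big(\sum_i\sqrt{\lambda_i}\big)^2$, you supply the (correct) short arguments yourself --- the feasible point $S=\big|\rho^\Gamma\big|$ and the explicit block-diagonalisation of $\Psi^\Gamma$.
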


\begin{proof}
Wang and Wilde~\cite[Proposition~8]{Xin-exact-PPT} have already established that $E_\kappa(\rho) = E_N(\rho)$ if $\rho$ has zero bi-negativity. Due to Proposition~\ref{relation_hierarchies_prop}, this implies a complete collapse of both hierarchies. Pure states are known to have zero bi-negativity~\cite[Lemma~5]{Audenaert2002}, hence $E_{\chi,p}(\Psi) = E_{\kappa,q}(\Psi) = E_N(\Psi)$; the explicit expression in terms of the Schmidt coefficients reported in~\eqref{hierarchies_pure_states} is a consequence of the observation that $\big\|\Psi^\Gamma\big\|_1 = \left(\sum_i \sqrt{\lambda_i} \right)^2$ due to~\cite[Proposition~8]{negativity}. In the case of a maximally entangled state of local dimension $d$, we have that $E_N(\Phi_d) = \log_2 d$ because $\lambda_i=1/d$ for all $i=1,\ldots,d$. 
\end{proof}

\subsection{\texorpdfstring{$E_{\chi,{\MakeLowercase p}}$}{} and \texorpdfstring{$E_{\kappa,{\MakeLowercase q}}$}{} as PPT monotones}

We will now prove that our functions $\chi_p$, $\kappa_q$, as well as their logarithmic versions $E_{\chi,p}$ and $E_{\kappa,q}$, represent novel entanglement monotones under PPT operations.

\begin{prop}[(Strong monotonicity)] \label{strong_monotonicity_prop}
For two bipartite quantum systems $AB$ and $A'B'$, let $\big(\NN_x\big)_x$ be a collection of PPT maps with input system $AB$ and output system $A'B'$ such that $\sum_x \NN_x$ is trace preserving. Then for any bipartite quantum state $\rho$ on $AB$ it holds that
\bb
F(\rho) &\geq \sum_{x:\, \Tr\left[\NN_x(\rho)\right]>0}\Tr\left[\NN_x(\rho)\right]\, F\!\left(\frac{\NN_x(\rho)}{\Tr\left[\NN_x(\rho)\right]}\right)\,, \qquad F = \chi_p,\, \kappa_q,\, E_{\chi,p},\, E_{\kappa,q}\, ,\qquad \forall\ p\in \N,\ \forall\ q\in \N^+\, .
\label{strong_monotonicity}
\ee
In particular, for any 
PPT channel $\NN$ it holds that
\bb
E_{\chi,p}(\rho)&\ge E_{\chi,p}\!\left(\NN(\rho)\right)\,,\\
E_{\kappa,q}(\rho)&\ge E_{\kappa,q}\!\left(\NN(\rho)\right)\,.
\ee
\end{prop}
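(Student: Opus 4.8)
The plan is to establish the strong monotonicity inequality in~\eqref{strong_monotonicity} for the SDP quantities $\chi_p$ and $\kappa_q$ directly, and then deduce the statements for $E_{\chi,p}$ and $E_{\kappa,q}$ and for channels as easy corollaries. First I would fix an optimal feasible point $S_0,\ldots,S_p$ for the SDP~\eqref{chi_p_1_SM} defining $\chi_p(\rho)$, so that $-S_i\le S_{i-1}^\Gamma\le S_i$ with $S_{-1}=\rho$ and $\Tr S_p=\chi_p(\rho)$. The key structural fact I would exploit is that a PPT map $\NN_x$ intertwines nicely with the partial transpose: writing $\widetilde{\NN}_x\coloneqq \Gamma_{B'}\circ\NN_x\circ\Gamma_B$, which is completely positive by the definition of PPT, one has $\NN_x(X)^\Gamma=\widetilde{\NN}_x(X^\Gamma)$ for every operator $X$. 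Now I would push the whole feasible tower through $\NN_x$, setting $S_i^{(x)}\coloneqq\NN_x(S_i)$ for $i=0,\ldots,p$. Complete positivity of $\NN_x$ applied to $-S_i\le S_{i-1}^\Gamma\le S_i$ (for $i\ge 1$) gives $-S_i^{(x)}\le \NN_x(S_{i-1}^\Gamma)\le S_i^{(x)}$, and the identity $\NN_x(S_{i-1}^\Gamma)=\widetilde{\NN}_x(S_{i-1})^{\phantom{\Gamma}}$ combined with complete positivity of $\widetilde{\NN}_x$ applied to the \emph{same} chain of inequalities shows $\NN_x(S_{i-1}^\Gamma)=\big(S_{i-1}^{(x)}\big)^\Gamma$ up to the required ordering — more precisely, $-\NN_x(S_i)\le \widetilde{\NN}_x(S_{i-1})\le \NN_x(S_i)$, and since $\widetilde{\NN}_x(S_{i-1})=\Gamma_{B'}\NN_x\Gamma_B(S_{i-1})$ we need the relation to the partial transpose of $S_{i-1}^{(x)}=\NN_x(S_{i-1})$; here one must be careful, and the cleanest route is to observe that $\widetilde{\NN}_x(X)=\NN_x(X^\Gamma)^\Gamma$, hence applying $\Gamma_{B'}$ to $-\NN_x(S_i)\le\widetilde{\NN}_x(S_{i-1})\le\NN_x(S_i)$ is not automatic because $\Gamma$ is not positive; instead one uses complete positivity of $\NN_x$ directly on $-S_i\le S_{i-1}^\Gamma\le S_i$ to get $-S_i^{(x)}\le\NN_x(S_{i-1}^\Gamma)\le S_i^{(x)}$, and complete positivity of $\widetilde\NN_x$ on the same inequalities to get $-\widetilde\NN_x(S_i)\le\widetilde\NN_x(S_{i-1}^\Gamma)=(S_i^{(x)})^{\Gamma\Gamma}$-adjusted statement, i.e.\ $\big(S_{i-1}^{(x)}\big)^\Gamma=\widetilde\NN_x(S_{i-1})$ satisfies $-\widetilde\NN_x(S_i)\le \widetilde\NN_x(S_{i-1})\le\widetilde\NN_x(S_i)$. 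This bookkeeping is exactly the main obstacle: matching the chain $-S_i\le S_{i-1}^\Gamma\le S_i$ through both $\NN_x$ and $\widetilde\NN_x$ so that the image operators form a valid feasible point of the SDP for the \emph{output} state.

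Once the feasibility of the pushed-forward tower is in place, the conclusion is quick. For the case $i=0$ the constraint $-S_0\le\rho^\Gamma\le S_0$ becomes, after applying the completely positive maps, $-\NN_x(S_0)\le\NN_x(\rho)^\Gamma\le\NN_x(S_0)$ (using $\NN_x(\rho^\Gamma)=\widetilde\NN_x(\rho)=\NN_x(\rho)^\Gamma$ via the PPT identity, together with complete positivity of the appropriate map). Thus, writing $p_x\coloneqq\Tr[\NN_x(\rho)]$ and $\sigma_x\coloneqq\NN_x(\rho)/p_x$ for those $x$ with $p_x>0$, the scaled operators $S_i^{(x)}/p_x$ are feasible for the SDP defining $\chi_p(\sigma_x)$, so $\chi_p(\sigma_x)\le \Tr[S_p^{(x)}]/p_x$. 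Multiplying by $p_x$ and summing over $x$ gives $\sum_x p_x\,\chi_p(\sigma_x)\le\sum_x\Tr[\NN_x(S_p)]=\Tr\big[\big(\sum_x\NN_x\big)(S_p)\big]=\Tr S_p=\chi_p(\rho)$, where the last step uses that $\sum_x\NN_x$ is trace preserving. This is precisely~\eqref{strong_monotonicity} for $F=\chi_p$. The argument for $\kappa_q$ is identical, with the extra constraint $S_{q-1}^\Gamma\ge0$ being preserved because $\widetilde\NN_x$ is completely positive and $\widetilde\NN_x(S_{q-1})=\NN_x(S_{q-1}^\Gamma)^{\phantom{\Gamma}}$-type reasoning shows $\big(S_{q-1}^{(x)}\big)^\Gamma=\widetilde\NN_x(S_{q-1}^\Gamma)\cdot(\text{sign bookkeeping})\ge0$ — concretely, $S_{q-1}^\Gamma\ge0$ and complete positivity of the relevant map give the image's partial transpose nonnegative.

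For the logarithmic versions, note $E_{\chi,p}=\log_2\chi_p$ and $E_{\kappa,q}=\log_2\kappa_q$; since $\chi_p,\kappa_q\ge1$ by Remark~\ref{nonzero_rem}, these are well defined, and $\log_2$ is concave, so Jensen's inequality gives $\sum_x p_x\log_2\chi_p(\sigma_x)\le\log_2\big(\sum_x p_x\,\chi_p(\sigma_x)\big)\le\log_2\chi_p(\rho)=E_{\chi,p}(\rho)$, where the dropped terms with $p_x=0$ contribute nothing; this yields~\eqref{strong_monotonicity} for $F=E_{\chi,p}$, and likewise for $E_{\kappa,q}$. Finally, the ``in particular'' statement for a single PPT channel $\NN$ follows by taking the trivial ensemble $(\NN_x)_x=(\NN)$: then $\NN$ is trace preserving, $p_x=1$, $\sigma_x=\NN(\rho)$, and~\eqref{strong_monotonicity} collapses to $E_{\chi,p}(\rho)\ge E_{\chi,p}(\NN(\rho))$ and $E_{\kappa,q}(\rho)\ge E_{\kappa,q}(\NN(\rho))$. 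The only genuinely delicate part, as flagged above, is the careful tracking of how the two chains of operator inequalities transform under the completely positive maps $\NN_x$ and $\Gamma_{B'}\circ\NN_x\circ\Gamma_B$; everything else is routine once that is pinned down.
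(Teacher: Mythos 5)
Your overall strategy --- push an optimal feasible point of the SDP through the maps $\NN_x$, use trace preservation of $\sum_x\NN_x$ to recover $\Tr S_p$, and then apply concavity of $\log_2$ for the logarithmic versions --- is exactly the paper's. But the one step you yourself flag as ``the main obstacle'' is not actually resolved, and with your choice of push-forward it cannot be. Setting $S_i^{(x)}\coloneqq\NN_x(S_i)$ uniformly for all $i$ does \emph{not} produce a feasible point for the output SDP. The two families of inequalities you can legitimately derive are: (a) from complete positivity of $\NN_x$, $-\NN_x(S_i)\leq\NN_x(S_{i-1}^\Gamma)\leq\NN_x(S_i)$, whose middle term is $\NN_x(S_{i-1}^\Gamma)$ and \emph{not} the required $\big[\NN_x(S_{i-1})\big]^\Gamma$; and (b) from complete positivity of $\widetilde\NN_x=\Gamma\circ\NN_x\circ\Gamma$, $-\big[\NN_x(S_i^\Gamma)\big]^\Gamma\leq\big[\NN_x(S_{i-1})\big]^\Gamma\leq\big[\NN_x(S_i^\Gamma)\big]^\Gamma$, whose middle term is the right one but whose bounding operator is $\big[\NN_x(S_i^\Gamma)\big]^\Gamma$, not $\NN_x(S_i)$. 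Neither matches the constraint $-S_i^{(x)}\leq\big(S_{i-1}^{(x)}\big)^\Gamma\leq S_i^{(x)}$ for your $S_i^{(x)}$, and the identities you invoke to bridge the two --- $\NN_x(S_{i-1}^\Gamma)=\widetilde\NN_x(S_{i-1})$ and $\NN_x(\rho^\Gamma)=\NN_x(\rho)^\Gamma$ --- are simply false for a general PPT map ($\widetilde\NN_x(S_{i-1})$ equals $\big[\NN_x(S_{i-1}^\Gamma)\big]^\Gamma$, with an extra outer $\Gamma$). The phrases ``$\Gamma\Gamma$-adjusted statement'' and ``sign bookkeeping'' are placeholders where an argument is needed; as written, feasibility of the pushed-forward tower is never established, and the rest of the proof has nothing to stand on.

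The fix, which is what the paper does, is to make the push-forward \emph{parity-dependent}: define $S_i^{(x)}\coloneqq\big[\NN_x(S_i^\Gamma)\big]^\Gamma$ for even $i$ and $S_i^{(x)}\coloneqq\NN_x(S_i)$ for odd $i$, with $S_{-1}=\rho$ counted as odd (so $S_{-1}^{(x)}=\NN_x(\rho)$). Then family (b) above supplies exactly the constraints at even $i$ (in particular the base case $-\big[\NN_x(S_0^\Gamma)\big]^\Gamma\leq\NN_x(\rho)^\Gamma\leq\big[\NN_x(S_0^\Gamma)\big]^\Gamma$), family (a) supplies them at odd $i$, and the two interleave to give a genuine feasible point for the SDP defining $\chi_p$ of the normalised output. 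The final accounting still works because $\Tr\big[\NN_x(S_p^\Gamma)\big]^\Gamma=\Tr\NN_x(S_p^\Gamma)$ and both $\sum_x\NN_x$ and $\sum_x\Gamma\circ\NN_x\circ\Gamma$ are trace preserving, so $\sum_x\Tr S_p^{(x)}=\Tr S_p=\chi_p(\rho)$ in either parity. The same alternation handles the extra constraint $S_{q-1}^\Gamma\geq0$ in the $\kappa_q$ case, since $\big(S_{q-1}^{(x)}\big)^\Gamma$ equals either $\NN_x\big(S_{q-1}^\Gamma\big)$ or $(\Gamma\circ\NN_x\circ\Gamma)\big(S_{q-1}^\Gamma\big)$ depending on the parity of $q$, and both maps are positive. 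Your Jensen step and the specialisation to a single channel are fine once this is in place.
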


\begin{proof}
The cases $F = E_{\chi,p}$ and $F=E_{\kappa,q}$ of~\eqref{strong_monotonicity} follow by taking the logarithm of the same inequality written for the case $F=\chi_p$ and $F=\kappa_q$, while exploiting the concavity of the function $x\mapsto \log_2(x)$ together with the observation that $\left(\Tr\left[\NN_x(\rho)\right]\right)_x$ is a probability distribution as $\sum_x \NN_x$ is trace preserving.

We thus focus on the cases $F=\chi_p,\kappa_q$, starting from the former. Let $S_{-1},S_0, \ldots, S_p$ be optimisers of the minimisation problem~\eqref{chi_p_1_SM} that defines $\chi_{p}(\rho)$. In particular, 
\bb\label{conditions_feasible}
S_{-1} = \rho,\qquad -S_i\le S_{i-1}^{\Gamma}\le S_i\qquad\forall\ i=0,1,\ldots,p\,.
\ee
Since $\NN_x$ is PPT, the map $\Gamma\circ \NN_x\circ \Gamma$ is (completely) positive. Eq.~\eqref{conditions_feasible} then implies that
\bb \label{condition1_feasible_phi}
-\big[\NN_x(S_{i}^\Gamma)\big]^\Gamma &\le \big[\NN_{x}(S_{i-1})\big]^\Gamma\le \big[\NN_x(S_{i}^\Gamma)\big]^\Gamma\quad\forall\ i=0,1,\ldots,p\,.
\ee
Moreover, Eq.~\eqref{conditions_feasible} and the fact that $\NN_x$ is completely positive imply that
\bb\label{condition2_feasible_phi}
-\NN_x(S_{i})\le \NN_{x}(S_{i-1}^\Gamma)\le \NN_x(S_{i})\qquad\forall\ i=0,1,\ldots,p-1\,.
\ee
Consequently, for any $i=-1,0,\ldots,p$ we can introduce the operators
\bb
S^{(x)}_i \coloneqq
\begin{cases}
    \big[\NN_x(S_i^\Gamma)\big]^\Gamma & \text{if } i \text{ is even}\,, \\[1ex]
    \NN_x(S_i) & \text{if } i\text{ is odd}
\end{cases}
\ee
(here, $-1$ is considered odd), in terms of which we can rewrite~\eqref{condition1_feasible_phi} as
\bb
-S_i^{(x)} \leq \big(S_{i-1}^{(x)}\big)^\Gamma \leq S_i^{(x)} \qquad \text{($i$ even),}
\ee
and~\eqref{condition2_feasible_phi} as 
\bb
-S_i^{(x)} \leq \big(S_{i-1}^{(x)}\big)^\Gamma \leq S_i^{(x)} \qquad \text{($i$ odd).}
\ee
It follows that for any $x$ such that $ \Tr[\NN_x(\rho)]>0$ the operators $S_i^{(x)}\big/\Tr \NN_x(\rho)$, $i=-1,0,\ldots, p$ are valid ansatzes in the minimisation problem~\eqref{chi_p_1_SM} that defines $\chi_p\!\left(\frac{\NN_x(\rho)}{\Tr\left[\NN_x(\rho)\right]}\right)$. Therefore,
\bb
\chi_p(\rho) &= \Tr[S_p] \\
&\eqt{(i)} \sum_x\Tr\big[S_p^{(x)}\big] \\
&\geqt{(ii)} \sum_{x:\, \Tr\left[\NN_x(\rho)\right]>0}\Tr\left[\NN_x(\rho)\right]\,\Tr\left[\frac{S^{(x)}_p}{ \Tr\left[\NN_x(\rho)\right]  }\right] \\
&\geqt{(iii)} \sum_{x:\, \Tr\left[\NN_x(\rho)\right]>0}\Tr\left[\NN_x(\rho)\right]\, \chi_p\!\left(\frac{\NN_x(\rho)}{\Tr\left[\NN_x(\rho)\right]}\right)\,.
\ee
Here, in~(i) we used the fact that the maps $\sum_x \NN_x$ and $\sum_x \Gamma\circ\NN_x\circ\Gamma$ are trace preserving; in~(ii) we observed that $S_p^{(x)}\geq 0$, which follows from~\eqref{condition1_feasible_phi} and~\eqref{condition2_feasible_phi}; finally, in~(iii) we leveraged the fact that the operators $S_i^{(x)}\big/\Tr \NN_x(\rho)$, $i=-1,0,\ldots, p$ are valid ansatzes in the minimisation problem~\eqref{chi_p_1_SM} that defines $\chi_p\!\left(\frac{\NN_x(\rho)}{\Tr\left[\NN_x(\rho)\right]}\right)$, as argued above. This proves~\eqref{strong_monotonicity} for $F=\chi_p$.

The proof of the strong monotonicity property~\eqref{strong_monotonicity} for the case where $F=\kappa_q$ is completely analogous. The only additional observation we need is that $\big(S_{q-1}^{(x)}\big)^\Gamma\geq 0$ if $S_{q-1}^\Gamma \geq 0$. Indeed, if $q$ is odd then $\big(S_{q-1}^{(x)}\big)^\Gamma = \NN_x\big(S_{q-1}^\Gamma\big)$, which is positive because $\NN_x$ is a positive map; if $q$ is even then $\big(S_{q-1}^{(x)}\big)^\Gamma = \NN_x(S_{q-1})^\Gamma = (\Gamma\circ \NN_x\circ \Gamma)\big(S_{q-1}^\Gamma\big)$, which is positive definite because $\Gamma\circ \NN_x\circ \Gamma$ is a positive map. This concludes the proof.
\end{proof}

As an immediate consequence of Lemma~\ref{normalisation_lemma} and Proposition~\ref{strong_monotonicity_prop}, we deduce the following.

\begin{cor} \label{max_values_cor}
Let $\rho = \rho_{AB}$ be an arbitrary quantum state of a bipartite system $AB$ with minimal local dimension $d\coloneqq \min\{|A|,|B|\}$. Then for all $p\in \N$ and $q\in \N^+$ it holds that
\bb
\chi_p(\rho),\, \kappa_q(\rho) \leq d\, ,\qquad E_{\chi,p}(\rho),\, E_{\kappa,q}(\rho)\leq \log_2 d\, .
\label{max_values}
\ee
\end{cor}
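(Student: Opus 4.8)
The plan is to reduce the claim for an arbitrary state $\rho$ to the already-established value of the hierarchies on the maximally entangled state $\Phi_d$, using strong monotonicity (Proposition \ref{strong_monotonicity_prop}) in the contrapositive direction: rather than applying a PPT channel to $\rho$, I would exhibit a PPT channel that \emph{produces} $\rho$ from $\Phi_d$, so that monotonicity forces $F(\rho) \le F(\Phi_d)$ for $F = \chi_p, \kappa_q$ (or their logarithms). The key input is the standard fact that any bipartite state $\rho = \rho_{AB}$ with $\min\{|A|,|B|\} = d$ can be prepared from the maximally entangled state $\Phi_d$ by a teleportation-type protocol; since teleportation is implemented by LOCC and $\locc \subseteq \ppt$, the associated channel $\NN$ is PPT and satisfies $\NN(\Phi_d) = \rho$ (concretely, Alice and Bob can teleport a locally prepared purification or simply run the $d$-dimensional teleportation protocol and discard the classical register, since $d = \min\{|A|,|B|\}$ is large enough to carry the state).

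The steps, in order, are: (1) Fix $\rho$ and let $d = \min\{|A|,|B|\}$; construct the PPT channel $\NN$ with $\NN(\Phi_d) = \rho$ as above. (2) Apply the monotonicity part of Proposition \ref{strong_monotonicity_prop} (the consequence for a single PPT channel) to get $E_{\chi,p}(\rho) = E_{\chi,p}(\NN(\Phi_d)) \le E_{\chi,p}(\Phi_d)$ and likewise $E_{\kappa,q}(\rho) \le E_{\kappa,q}(\Phi_d)$; exponentiating gives $\chi_p(\rho) \le \chi_p(\Phi_d)$ and $\kappa_q(\rho) \le \kappa_q(\Phi_d)$. (3) Invoke Lemma \ref{normalisation_lemma}: since $\Phi_d$ is pure it has zero bi-negativity, so $E_{\chi,p}(\Phi_d) = E_{\kappa,q}(\Phi_d) = E_N(\Phi_d) = \log_2 d$, i.e.\ $\chi_p(\Phi_d) = \kappa_q(\Phi_d) = d$ for all $p \in \N$, $q \in \N^+$. (4) Chain the inequalities to conclude $\chi_p(\rho), \kappa_q(\rho) \le d$ and $E_{\chi,p}(\rho), E_{\kappa,q}(\rho) \le \log_2 d$.

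The main obstacle — really the only non-bookkeeping point — is step (1): carefully justifying that a genuinely PPT (indeed LOCC) channel $\NN$ with $\NN(\Phi_d) = \rho$ exists for \emph{every} bipartite state, including mixed ones, when we only control the \emph{smaller} of the two local dimensions. One clean way is to observe that WLOG $|A| = d \le |B|$; take a purification $\ket{\psi}_{ABE}$ of $\rho$, note its Schmidt rank across $A|BE$ is at most $d$, so $\ket{\psi}$ is an LOCC image of $\ket{\Phi_d}_{AB'}$ (Alice prepares the $BE$ part locally in a suitable register and teleports the $B$-register to Bob, who also prepares $E$ locally — or more simply, $d$-dimensional teleportation of half of a locally-held purification); then tracing out $E$ yields $\rho$, and the composition of teleportation with a partial trace is LOCC, hence PPT. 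Everything after step (1) is immediate from the cited results. Note also that equality $E_{\chi,p}(\Phi_d) = \log_2 d$ shows the bounds in \eqref{max_values} are tight.
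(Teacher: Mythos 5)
Your proposal is correct and follows essentially the same route as the paper: construct an LOCC (hence PPT) channel mapping $\Phi_d$ to $\rho$, apply the single-channel monotonicity consequence of Proposition~\ref{strong_monotonicity_prop}, and evaluate the hierarchies on $\Phi_d$ via Lemma~\ref{normalisation_lemma}; the paper realises the channel either by Nielsen's theorem applied to a spectral decomposition of $\rho$ or, exactly as you suggest, by having one party prepare the state locally and teleport the $d$-dimensional subsystem. One of your parenthetical teleportation descriptions is garbled (Alice preparing $BE$ and teleporting the $B$-register would require teleporting a system of dimension possibly larger than $d$), but your Schmidt-rank argument across the $A|BE$ cut and your ``teleport half of a locally-held purification'' variant are both correct, so the proof stands.
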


\begin{proof}
An application of Nielsen's theorem~\cite{Nielsen-LOCC} guarantees that there exists an LOCC transformation $\Lambda$ such that $\Lambda(\Phi_d) = \rho_{AB}$, where $\Phi_d$ is the maximally entangled state with local dimension $d$.\footnote{Indeed, writing a spectral decomposition $\rho = \sum_i p_i \Psi_i$ of $\rho$, a direct consequence of Nielsen's theorem is that for all $i$ there exists an LOCC $\Lambda_i$ that satisfies $\Lambda_i(\Phi_d) = \Psi_i$. Setting $\Lambda (\cdot) \coloneqq \sum_i p_i \Lambda_i(\cdot)$ proves the claim.} Since LOCC is a subset of PPT, the result then follows by combining Lemma~\ref{normalisation_lemma} and Proposition~\ref{strong_monotonicity_prop}.  Incidentally, an alternative, direct way of constructing an LOCC with the above property is to let the party with the maximum local dimension prepare $\rho_{AB}$ locally and then apply the quantum teleportation protocol~\cite{teleportation} to teleport the smaller system to the other party.
\end{proof}


\subsection{Dual expressions}

In what follows, we derive the SDPs dual to those in~\eqref{chi_p_1_SM} and~\eqref{kappa_q_1_SM}. We refer the reader to~\cite{WATROUS,Skrzypczyk_2023,vandenberghe_1996} for an introduction to the general theory of semi-definite programs. We start with the $\chi$-quantities, whose dual expressions will play an important role in establishing their additivity.

\begin{prop}[(Dual SDP for the $\chi$-quantity)] \label{dual_SDP_chi_prop}
For an arbitrary bipartite state $\rho = \rho_{AB}$ and some $p\in \N$, the associated $\chi$-quantity $\chi_p(\rho)$ defined by~\eqref{chi_p_1_SM} can be equivalently expressed as
\bb \label{chi_p_dual_1_SM}
\begin{array}{llll}
\chi_p(\rho) & = & \mathrm{max.} & \Tr\left[\rho\,\big(V_0^\Gamma-W_0^\Gamma\big)\right] \\[4pt]
& & & \,V_0,\ldots,V_p,W_0,\ldots,W_p\geq 0\, , \\[4pt]
& & \mathrm{s.t.} & V_i+W_i=V_{i+1}^\Gamma-W_{i+1}^\Gamma\quad \forall\ i=0,\ldots,p-1\,, \quad V_p+W_p=\id\,,
\end{array}
\ee
or else, if $p\geq 1$, as
\bb \label{chi_p_dual_2_SM}
\begin{array}{llll}
\chi_p(\rho) & = & \mathrm{max.} & \Tr\left[\rho\,\big(V_0^\Gamma-W_0^\Gamma\big)\right] \\[4pt]
& & & \,V_0,\ldots,V_{p-1},W_0,\ldots,W_{p-1}\geq 0\, , \\[4pt]
& & \mathrm{s.t.} & V_i+W_i=V_{i+1}^\Gamma-W_{i+1}^\Gamma\quad \forall\ i=0,\ldots,p-2\,, \quad -\id\leq \big(V_{p-1}+W_{p-1}\big)^\Gamma\leq \id\,.
\end{array}
\ee
\end{prop}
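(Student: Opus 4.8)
The plan is to derive both dual SDPs by writing the primal in a standard conic form and applying Lagrangian duality, then checking strong duality via a Slater-type argument. First I would rewrite the primal~\eqref{chi_p_1_SM} with the $2(p+1)$ linear matrix inequalities $-S_i\le S_{i-1}^\Gamma\le S_i$ grouped as $S_i - S_{i-1}^\Gamma\ge 0$ and $S_i + S_{i-1}^\Gamma\ge 0$ for $i=0,\ldots,p$, and introduce PSD dual variables $V_i,W_i\ge 0$ respectively for these two families (with $S_{-1}=\rho$ fixed, not a variable). The Lagrangian is
\bb
\Tr S_p - \sum_{i=0}^p \Tr\!\big[V_i\,(S_i - S_{i-1}^\Gamma)\big] - \sum_{i=0}^p \Tr\!\big[W_i\,(S_i + S_{i-1}^\Gamma)\big]\,,
\ee
and minimising freely over each Hermitian $S_j$ ($j=0,\ldots,p$) forces the coefficient of $S_j$ to vanish, which — using $\Tr[X Y^\Gamma]=\Tr[X^\Gamma Y]$ from~\eqref{partial_trace_HS_preserving} to move the transpose onto $V_{j+1},W_{j+1}$ — gives exactly $V_j+W_j = (V_{j+1}-W_{j+1})^\Gamma$ for $j=0,\ldots,p-1$ and $V_p+W_p=\id$ from the $S_p$ term. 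The surviving term is $\Tr[\rho^\Gamma(V_0-W_0)] = \Tr[\rho\,(V_0^\Gamma-W_0^\Gamma)]$, which is the objective in~\eqref{chi_p_dual_1_SM}.

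Next I would justify strong duality (equality of primal and dual optima) rather than mere weak duality. Weak duality is immediate from the Lagrangian construction; for the reverse inequality I would invoke Slater's condition, exhibiting a strictly feasible primal point: taking $S_i := c^{i+1}\,\id$ for a sufficiently large constant $c>1$ makes all the inequalities $-S_i < S_{i-1}^\Gamma < S_i$ strict (since $\|S_{i-1}^\Gamma\|_\infty = c^{i}\|\id\|_\infty$ can be dominated by $c^{i+1}$ after also using $\|\rho^\Gamma\|_\infty\le 1$ for the base case $i=0$). Boundedness of the primal — finiteness of $\chi_p$ — follows from Remark~\ref{min_achieved_rem} and Corollary~\ref{max_values_cor}, and boundedness of the dual follows from $V_p+W_p=\id$ together with the chain of constraints bounding all $V_i,W_i$; hence both programs are solvable with equal value.

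Finally, for the alternative form~\eqref{chi_p_dual_2_SM} when $p\ge 1$, I would simply eliminate the pair $(V_p,W_p)$. From $V_p+W_p=\id$ with $V_p,W_p\ge 0$ we get $0\le V_p\le\id$ and $W_p=\id-V_p$, so $V_p-W_p=2V_p-\id$ ranges over exactly the self-adjoint operators $T$ with $-\id\le T\le\id$; the constraint $V_{p-1}+W_{p-1}=(V_p-W_p)^\Gamma$ then says precisely that $(V_{p-1}+W_{p-1})^\Gamma = T$ for some such $T$, i.e.\ $-\id\le (V_{p-1}+W_{p-1})^\Gamma\le\id$. This reproduces~\eqref{chi_p_dual_2_SM}, and for $p=1$ it recovers the dual form of $E_\chi$ appearing on the first line of~\eqref{E_chi_SM}. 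The only mildly delicate point — and the main thing to get right — is the bookkeeping of where the partial transposes land when eliminating the $S_j$ variables one at a time: one must consistently use~\eqref{partial_trace_HS_preserving} and the involutivity $\Gamma\circ\Gamma=\Id$ so that the recursion reads $V_j+W_j=(V_{j+1}-W_{j+1})^\Gamma$ rather than an un-transposed or doubly-transposed variant. An entirely parallel computation (carried out in the next result for $\kappa_q$, using in addition the dual variable for $S_{q-1}^\Gamma\ge 0$) handles the $\kappa$-hierarchy.
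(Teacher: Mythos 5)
Your proposal is correct and follows essentially the same route as the paper's proof: introduce PSD multipliers $V_i,W_i$ for the two families of constraints $S_i\mp S_{i-1}^\Gamma\geq 0$, shift the partial transposes via~\eqref{partial_trace_HS_preserving} to read off the stationarity conditions $V_i+W_i=(V_{i+1}-W_{i+1})^\Gamma$ and $V_p+W_p=\id$, verify Slater's condition with a strictly feasible multiple-of-identity ansatz, and obtain~\eqref{chi_p_dual_2_SM} by eliminating $(V_p,W_p)$ exactly as you describe. The only (immaterial) differences are your choice $S_i=c^{i+1}\id$ in place of the paper's $(i+1)\id$ for the strictly feasible primal point, and your phrasing of the boundedness argument.
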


\begin{rem}
If $p=0$, then the only constraints in~\eqref{chi_p_dual_1_SM} are $V_0,W_0\geq 0$ and $V_0+W_0=\id$. Similarly, if $p=1$ then the only constraints in~\eqref{chi_p_dual_2_SM} are $V_0,W_0\geq 0$ and $-\id\leq \big(V_0+W_0\big)^\Gamma\leq \id$. Note that for $p=1$ the SDP in~\eqref{chi_p_dual_2_SM} reproduces (up to the logarithm) the expression on the first line of~\eqref{E_chi_SM}, as it should. In general,~\eqref{chi_p_dual_2_SM} involves one fewer pair of variables compared to~\eqref{chi_p_dual_1_SM}, as it is deduced by eliminating $V_p$ and $W_p$ from~\eqref{chi_p_dual_1_SM} (see the proof below).
\end{rem}

\begin{proof}[Proof of Proposition~\ref{dual_SDP_chi_prop}]
The $p+1$ (effectively free) variables $S_0,\ldots, S_p$ in~\eqref{chi_p_1_SM} obey $2(p+1)$ semi-definite constraints, namely $S_i\pm S_{i-1}^\Gamma \geq 0$ for $i=0,\ldots, p$, with the convention that $S_{-1}=\rho$. We therefore introduce $p+1$ positive semi-definite dual variables $V_0,\ldots,V_p\geq 0$ for the inequalities with the minus and another $p+1$, namely $W_0,\ldots,W_p\geq 0$, for the inequalities with the plus. The resulting Lagrangian is
\bb
\LL &= \Tr S_p - \sum_{i=0}^p \Tr V_i \big(S_i - S_{i-1}^\Gamma \big) - \sum_{i=0}^p \Tr W_i \big(S_i + S_{i-1}^\Gamma \big)  \\
&= \Tr \rho \big( V_0^\Gamma - W_0^\Gamma\big) - \sum_{i=0}^{p-1} \Tr S_i \big( V_i + W_i - V_{i+1}^\Gamma + W_{i+1}^\Gamma \big) + \Tr S_p \big(\id - V_p - W_p \big)\, ,
\ee
where in the second line we employed~\eqref{partial_trace_HS_preserving} repeatedly. Minimising the above expression over $S_0,\ldots, S_p$ yields $-\infty$ unless $V_i + W_i - V_{i+1}^\Gamma + W_{i+1}^\Gamma = 0$ for all $i=0,\ldots, p-1$, and moreover $V_p+W_p=\id$. This proves that the SDP dual to that in~\eqref{chi_p_1_SM} is the one given by~\eqref{chi_p_dual_1_SM}. 

To see that strong duality holds (and the dual is achieved), it suffices to check that Slater's condition is obeyed, and that both the primal and the dual program are in fact strictly feasible. This simply means that all inequalities can be satisfied strictly. In the primal program~\eqref{chi_p_1_SM}, one can take $S_i = (i+1)\id$ ($i=0,\ldots, p$); in the dual program~\eqref{chi_p_dual_1_SM}, instead, it suffices to set $V_i = 2\cdot 3^{-(p-i+1)} \id$, $W_i = 3^{-(p-i+1)} \id$ ($i=0,\ldots, p$).

Finally, it remains to see that~\eqref{chi_p_dual_2_SM} is equivalent to~\eqref{chi_p_dual_1_SM} if $p\geq 1$. This can be done as follows. In~\eqref{chi_p_dual_1_SM}, the only constraints on $V_p,W_p\geq 0$ are that $V_p - W_p = \big(V_{p-1} - W_{p-1}\big)^\Gamma$ and $V_p+W_p=\id$. It is possible to satisfy these constraints if and only if $-\id\leq \big(V_{p-1} - W_{p-1}\big)^\Gamma\leq \id$. Indeed, this condition is necessary, because on the one hand $\big(V_{p-1} - W_{p-1}\big)^\Gamma \leq V_p \leq V_p+W_p = \id$, and on the other $\big(V_{p-1} - W_{p-1}\big)^\Gamma \geq - W_p \geq -(V_p+W_p) = -\id$. It is also sufficient, since if it is obeyed we can set $V_p = \frac12 \left(\id + \big(V_{p-1} - W_{p-1}\big)^\Gamma \right)$ and $W_p = \frac12 \left(\id - \big(V_{p-1} - W_{p-1}\big)^\Gamma \right)$.
\end{proof}

Although it is not strictly needed in what follows, for completeness we derive also the dual program for the $\kappa$-quantities.

\begin{prop}[(Dual SDP for the $\kappa$-quantity)] \label{dual_SDP_kappa_prop}
For an arbitrary bipartite state $\rho = \rho_{AB}$ and some $q\in \N^+$, the $\kappa$-quantity $\kappa_q(\rho)$ defined by~\eqref{kappa_q_1_SM} can be equivalently expressed as
\bb \label{kappa_q_dual_SM}
\begin{array}{llll}
\kappa_q(\rho) & = & \mathrm{max.} & \Tr\left[\rho\,\big(V_0^\Gamma-W_0^\Gamma\big)\right] \\[4pt]
& & & \,V_0,\ldots,V_{q-1},W_0,\ldots,W_{q-1}\geq 0\, , \\[4pt]
& & \mathrm{s.t.} & V_i+W_i=V_{i+1}^\Gamma-W_{i+1}^\Gamma\quad \forall\ i=0,\ldots,q-2\,, \quad \big(V_{q-1}+W_{q-1}\big)^\Gamma\leq \id\,.
\end{array}
\ee
\end{prop}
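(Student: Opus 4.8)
\textbf{Proof proposal for Proposition~\ref{dual_SDP_kappa_prop}.}

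The plan is to mirror the Lagrangian duality computation already carried out for Proposition~\ref{dual_SDP_chi_prop}, paying attention to the one structural difference: the primal SDP~\eqref{kappa_q_1_SM} for $\kappa_q(\rho)$ has one fewer "layer" of variables than the chain appearing in $\chi_q(\rho)$ (the minimisation runs over $S_0,\ldots,S_{q-1}$, returning $\Tr S_{q-1}$), and carries the extra positivity constraint $S_{q-1}^\Gamma\geq 0$ in place of a final sandwiching inequality. So I would start by listing the constraints of~\eqref{kappa_q_1_SM}: for $i=0,\ldots,q-1$ the two inequalities $S_i - S_{i-1}^\Gamma\geq 0$ and $S_i + S_{i-1}^\Gamma\geq 0$ (with $S_{-1}=\rho$), plus the single inequality $S_{q-1}^\Gamma\geq 0$. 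I introduce dual variables $V_0,\ldots,V_{q-1}\geq 0$ for the "$-$" inequalities, $W_0,\ldots,W_{q-1}\geq 0$ for the "$+$" inequalities, and an additional variable $Z\geq 0$ for $S_{q-1}^\Gamma\geq 0$.

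Next I would write down the Lagrangian $\LL = \Tr S_{q-1} - \sum_{i=0}^{q-1}\Tr V_i(S_i - S_{i-1}^\Gamma) - \sum_{i=0}^{q-1}\Tr W_i(S_i + S_{i-1}^\Gamma) - \Tr Z\, S_{q-1}^\Gamma$ and, using~\eqref{partial_trace_HS_preserving} to move the partial transposes onto the dual variables, collect the coefficient of each $S_i$. For $i=0$ the term $S_{-1}^\Gamma=\rho^\Gamma$ produces the objective $\Tr[\rho(V_0^\Gamma - W_0^\Gamma)]$ after one more use of~\eqref{partial_trace_HS_preserving}. For $0\leq i\leq q-2$ the coefficient of $S_i$ is $V_i + W_i - V_{i+1}^\Gamma + W_{i+1}^\Gamma$, which must vanish or the infimum over Hermitian $S_i$ is $-\infty$; this gives the chain constraints $V_i+W_i = V_{i+1}^\Gamma - W_{i+1}^\Gamma$ for $i=0,\ldots,q-2$. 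For $i=q-1$ the coefficient is $\id - V_{q-1} - W_{q-1} - Z^\Gamma$, which must also vanish, i.e.\ $V_{q-1}+W_{q-1} = \id - Z^\Gamma$. Since $Z$ ranges over all positive semi-definite operators, $Z^\Gamma$ ranges over all operators whose partial transpose is positive semi-definite, but what matters is only that $\id - (V_{q-1}+W_{q-1})$ can be completed to such a $Z^\Gamma$; taking partial transpose, the condition is simply that $\id - (V_{q-1}+W_{q-1})^\Gamma = Z \geq 0$, i.e.\ $(V_{q-1}+W_{q-1})^\Gamma\leq\id$. Eliminating $Z$ then yields exactly~\eqref{kappa_q_dual_SM}.

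To finish I would invoke strong duality via Slater's condition, exhibiting strictly feasible points for both programs exactly as in the proof of Proposition~\ref{dual_SDP_chi_prop}: on the primal side take $S_i = (i+1)\id$ for $i=0,\ldots,q-1$, noting that $S_{q-1}^\Gamma = q\,\id > 0$ so the extra constraint is met strictly; on the dual side, pick e.g.\ $V_i = 2\cdot 3^{-(q-i)}\id$ and $W_i = 3^{-(q-i)}\id$, which makes all $V_i,W_i$ strictly positive, satisfies the chain equalities, and gives $(V_{q-1}+W_{q-1})^\Gamma = 3\cdot 3^{-1}\id = \id$ — here one should instead scale down slightly, e.g.\ $V_i = 2c\,3^{-(q-i)}\id$, $W_i = c\,3^{-(q-i)}\id$ with $c<1$, so that $(V_{q-1}+W_{q-1})^\Gamma = 3c\,\id < \id$ strictly. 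I expect the only mildly delicate point to be the elimination of the slack variable $Z$ and the correct handling of its partial transpose — making sure that "$Z$ free $\geq 0$ with $V_{q-1}+W_{q-1}+Z^\Gamma=\id$" really is equivalent to the single inequality $(V_{q-1}+W_{q-1})^\Gamma\leq\id$ — but this is a one-line argument once one remembers that $\Gamma$ is an involution. Everything else is the same routine Lagrangian bookkeeping as in Proposition~\ref{dual_SDP_chi_prop}.
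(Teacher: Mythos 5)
Your proposal is correct and follows essentially the same route as the paper: the same Lagrangian with dual variables $V_i,W_i$ for the sandwich inequalities and a slack $Z\geq 0$ for $S_{q-1}^\Gamma\geq 0$, the same elimination of $Z$ via $(V_{q-1}+W_{q-1})^\Gamma\leq\id$, and strong duality via Slater exactly as in Proposition~\ref{dual_SDP_chi_prop}. Your extra care in rescaling the dual ansatz so that $(V_{q-1}+W_{q-1})^\Gamma<\id$ holds \emph{strictly} (unlike the $\chi$ case, where the final constraint is an equality and no strictness is needed) is a worthwhile refinement of the paper's "entirely analogous" remark.
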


\begin{rem}
If $q=1$, then the only constraints in~\eqref{kappa_q_dual_SM} are $V_0,W_0\geq 0$ and $\big(V_0+W_0\big)^\Gamma\leq \id$. Note that for $q=1$ the SDP in~\eqref{kappa_q_dual_SM} reproduces (up to the logarithm) the expression in~\eqref{E_kappa_dual_SM}.
\end{rem}

\begin{proof}[Proof of Proposition~\ref{dual_SDP_kappa_prop}]
The argument is very similar to that employed to prove Proposition~\ref{dual_SDP_chi_prop}. This time around we have $2q+1$ constraints, namely $S_i\pm S_{i-1}^\Gamma \geq 0$ for $i=0,\ldots, q-1$ (with the convention that $S_{-1}=\rho$), as well as $S_{q-1}^\Gamma \geq 0$. We thus introduce $q$ positive semi-definite dual variables $V_0,\ldots,V_{q-1}\geq 0$ for the first inequalities with the minus, another $q$, namely $W_0,\ldots,W_{q-1}\geq 0$, for the first inequalities with the plus, and finally one more variable $Z\geq 0$ for the last inequality. The resulting Lagrangian is
\bb
\LL &= \Tr S_{q-1} - \sum_{i=0}^{q-1} \Tr V_i \big(S_i - S_{i-1}^\Gamma \big) - \sum_{i=0}^{q-1} \Tr W_i \big(S_i + S_{i-1}^\Gamma \big) - \Tr Z S_{q-1}^\Gamma \\
&= \Tr \rho \big( V_0^\Gamma - W_0^\Gamma\big) - \sum_{i=0}^{q-2} \Tr S_i \big( V_i + W_i - V_{i+1}^\Gamma + W_{i+1}^\Gamma \big) + \Tr S_{q-1} \big(\id - Z^\Gamma - V_{q-1} - W_{q-1} \big)\, .
\ee
Minimising the above expression over $S_0,\ldots, S_{q-1}$ gives $-\infty$ unless $V_i + W_i - V_{i+1}^\Gamma + W_{i+1}^\Gamma = 0$ for all $i=0,\ldots, q-2$, and moreover $V_{q-1} - W_{q-1} = \id - Z^\Gamma$. The maximisation over $Z$ can be eliminated by observing that $V_{q-1} - W_{q-1} = \id - Z^\Gamma$ for some $Z\geq 0$ if and only if $\big(V_{q-1} - W_{q-1}\big)^\Gamma \leq \id$. This proves that the SDP in~\eqref{kappa_q_dual_SM} is indeed the dual of that in~\eqref{kappa_q_1_SM}. To see that they yield the same value, i.e.\ that the duality gap is zero, it suffices to check Slater's condition. This can be done with a reasoning entirely analogous to that presented in the proof of Proposition~\ref{dual_SDP_chi_prop}.
\end{proof}

\subsection{Additivity properties} \label{subsec_additivity}

We will now look at the additivity properties of our new monotones $E_{\chi,p}$ and $E_{\kappa,q}$ constructed in Definitions~\ref{chi_hierarchy_Def} and~\ref{kappa_hierarchy_Def}, respectively. We will prove that the former are fully additive (Proposition~\ref{Prop_additivity}) while the latter are only sub-additive (Proposition~\ref{subadditivity_E_kappa_SM_prop}). Before we begin, we need a little preliminary result that is well known in the quantum information literature (see e.g.~\cite[Lemma~12.35]{KHATRI}). We include a self-contained proof for completeness.

\begin{lemma} \label{lemma_tensor_product}
For $i=1,2$, let $A_i$ and $B_i$ be linear operators on the Hilbert space $\HH_i$. Assume that
\bb
-A_i\leq B_i &\leq A_i \qquad \forall\ i=1,2\, .
\ee
Then 
\bb
-A_1\otimes A_2\leq B_1\otimes B_2\leq A_1\otimes A_2\,.
\label{tensor_product_sandwhich}
\ee
In particular, if $-A\leq B\leq A$ then for all $n\in\N$ it holds that $-A^{\otimes n}\leq B^{\otimes n}\leq A^{\otimes n}$.
\end{lemma}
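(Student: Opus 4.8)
The plan is to reduce the whole statement to the elementary fact that a tensor product of positive semi-definite operators is again positive semi-definite. First I would rewrite the hypotheses: $-A_i\leq B_i\leq A_i$ says precisely that both $A_i-B_i\geq 0$ and $A_i+B_i\geq 0$ (which in particular forces $A_i$ and $B_i$ to be Hermitian, and, by adding the two, $A_i\geq 0$). The goal $-A_1\otimes A_2\leq B_1\otimes B_2\leq A_1\otimes A_2$ then amounts to showing that $A_1\otimes A_2\mp B_1\otimes B_2\geq 0$.

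The key step is the polarisation-type identity
\[
2\,(A_1\otimes A_2 - B_1\otimes B_2) = (A_1-B_1)\otimes(A_2+B_2) + (A_1+B_1)\otimes(A_2-B_2)\,,
\]
verified by expanding both sides. Each summand on the right is a tensor product of two positive semi-definite operators, hence positive semi-definite, so the left-hand side is too, giving $B_1\otimes B_2\leq A_1\otimes A_2$. Symmetrically, using
\[
2\,(A_1\otimes A_2 + B_1\otimes B_2) = (A_1-B_1)\otimes(A_2-B_2) + (A_1+B_1)\otimes(A_2+B_2)\,,
\]
I obtain $-A_1\otimes A_2\leq B_1\otimes B_2$, which together with the previous bound yields~\eqref{tensor_product_sandwhich}. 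For the final assertion about $A^{\otimes n}$ I would induct on $n$: the case $n=1$ is immediate (and $n=0$ reads $-\id\leq\id\leq\id$), and given $-A^{\otimes n}\leq B^{\otimes n}\leq A^{\otimes n}$ one applies the two-factor statement just proved with $(A_1,B_1)=(A^{\otimes n},B^{\otimes n})$ and $(A_2,B_2)=(A,B)$ to pass from $n$ to $n+1$.

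The only point that deserves to be spelled out — and the nearest thing to an obstacle, though it is standard — is the supporting claim that $P\otimes R\geq 0$ whenever $P,R\geq 0$: writing spectral decompositions $P=\sum_j p_j\ketbra{u_j}$ and $R=\sum_k r_k\ketbra{v_k}$ with $p_j,r_k\geq 0$, one has $P\otimes R=\sum_{j,k}p_j r_k\,\ket{u_j\otimes v_k}\!\!\bra{u_j\otimes v_k}\geq 0$. Everything else is bookkeeping, so I expect the proof to be short.
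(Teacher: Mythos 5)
Your proof is correct and follows essentially the same route as the paper's: the two polarisation identities you write down are exactly the paper's pairs of inequalities $(A_1\pm B_1)\otimes(A_2\pm B_2)\geq 0$ added together, and the iteration to $A^{\otimes n}$ is the same. The only difference is that you spell out the (standard) fact that a tensor product of positive semi-definite operators is positive semi-definite, which the paper uses implicitly.
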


\begin{proof}
By adding the two operator inequalities
\bb
0\leq (A_1+B_1)\otimes (A_2+B_2) &= A_1\otimes A_2 + B_1\otimes B_2 + A_1\otimes B_2 + B_1\otimes A_2\,,\\
0\leq (A_1-B_1)\otimes (A_2-B_2) &= A_1\otimes A_2 + B_1\otimes B_2 - A_1\otimes B_2 - B_1\otimes A_2\,,\\
\ee
we infer that $B_1\otimes B_2\geq - A_1\otimes A_2$. Analogously, by adding the two operator inequalities
\bb
0\leq (A_1-B_1)\otimes (A_2+B_2) &= A_1\otimes A_2 - B_1\otimes B_2 + A_1\otimes B_2 - B_1\otimes A_2\,,\\
0\leq (A_1+B_1)\otimes (A_2-B_2) &= A_1\otimes A_2 - B_1\otimes B_2 - A_1\otimes B_2 + B_1\otimes A_2\,,\\
\ee
one obtains that $B_1\otimes B_2\leq A_1\otimes A_2$. The last claim is applying~\eqref{tensor_product_sandwhich} iteratively $n-1$ times, indexed $j=1,\ldots,n-1$, with $A_1 = A^{\otimes j}$, $B_1 = B^{\otimes j}$, $A_2=A$, and $B_2=B$.
\end{proof}

We are now ready to state and prove the additivity of $E_{\chi,p}$.

\begin{prop}[(Additivity of $E_{\chi,p}$)]\label{Prop_additivity}
For all pairs of bipartite states $\rho_{AB},\,\omega_{A'B'}$ and for all $p\in \N$, we have that
\begin{align}
\chi_p\left(\rho_{AB}\otimes \omega_{A'B'}\right) &= \chi_p(\rho_{AB})\, \chi_p(\omega_{A'B'})\, , \label{chi_multiplicativity_SM} \\
E_{\chi, p}\left(\rho_{AB}\otimes \omega_{A'B'}\right) &= E_{\chi, p}\left(\rho_{AB}\right) + E_{\chi, p}\left(\omega_{A'B'}\right) .
\label{E_chi_additivity_SM}
\end{align}
\end{prop}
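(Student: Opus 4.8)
The plan is to prove the stronger multiplicative identity~\eqref{chi_multiplicativity_SM}, from which the additivity statement~\eqref{E_chi_additivity_SM} follows immediately upon taking $\log_2$. I would establish the two inequalities $\chi_p(\rho_{AB}\otimes\omega_{A'B'})\leq \chi_p(\rho_{AB})\,\chi_p(\omega_{A'B'})$ and $\chi_p(\rho_{AB}\otimes\omega_{A'B'})\geq \chi_p(\rho_{AB})\,\chi_p(\omega_{A'B'})$ separately: the ``$\leq$'' direction (submultiplicativity) via the primal SDP~\eqref{chi_p_1_SM}, and the ``$\geq$'' direction (supermultiplicativity) via the dual SDP~\eqref{chi_p_dual_1_SM}. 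Throughout I would use that the partial transpose on the composite bipartite system $(AA')(BB')$ factorizes, i.e.\ $(X_{AB}\otimes Y_{A'B'})^\Gamma = X_{AB}^\Gamma\otimes Y_{A'B'}^\Gamma$.

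For submultiplicativity, I would take optimal feasible points $S_0,\ldots,S_p$ and $T_0,\ldots,T_p$ for the primal SDPs defining $\chi_p(\rho_{AB})$ and $\chi_p(\omega_{A'B'})$ (these exist by Remark~\ref{min_achieved_rem}), with the usual conventions $S_{-1}=\rho_{AB}$, $T_{-1}=\omega_{A'B'}$, and propose the tensor-product ansatz $S_i\otimes T_i$, $i=-1,0,\ldots,p$. Since $(S_{i-1}\otimes T_{i-1})^\Gamma = S_{i-1}^\Gamma\otimes T_{i-1}^\Gamma$, Lemma~\ref{lemma_tensor_product} applied to the sandwich inequalities $-S_i\leq S_{i-1}^\Gamma\leq S_i$ and $-T_i\leq T_{i-1}^\Gamma\leq T_i$ gives $-S_i\otimes T_i\leq (S_{i-1}\otimes T_{i-1})^\Gamma\leq S_i\otimes T_i$ for every $i$; hence the ansatz is feasible for the SDP defining $\chi_p(\rho_{AB}\otimes\omega_{A'B'})$, and its objective value is $\Tr[S_p\otimes T_p]=\Tr[S_p]\,\Tr[T_p]=\chi_p(\rho_{AB})\,\chi_p(\omega_{A'B'})$, which yields the claimed inequality.

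For supermultiplicativity, I would pass to the dual SDP of Proposition~\ref{dual_SDP_chi_prop}, whose optimum is attained by strong duality, so that any dual-feasible point is a lower bound on $\chi_p$. Let $(V_i,W_i)_{i=0}^{p}$ and $(V_i',W_i')_{i=0}^{p}$ be optimal dual-feasible points for $\rho_{AB}$ and $\omega_{A'B'}$, and define the ``cross'' ansatz $V_i''\coloneqq V_i\otimes V_i' + W_i\otimes W_i'$ and $W_i''\coloneqq V_i\otimes W_i' + W_i\otimes V_i'$ for $i=0,\ldots,p$ (this is the natural generalisation to every level of the Wang--Wilde ansatz appearing near~\eqref{ansatz_condition}). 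One checks directly that $V_i'',W_i''\geq 0$; that $V_i''-W_i'' = (V_i-W_i)\otimes(V_i'-W_i')$ and $V_i''+W_i'' = (V_i+W_i)\otimes(V_i'+W_i')$; that the recursion $V_i''+W_i''=(V_{i+1}'')^\Gamma-(W_{i+1}'')^\Gamma$ holds for $i=0,\ldots,p-1$, because the corresponding recursions hold for the unprimed and primed families and the partial transpose factorizes over the tensor product; and that the terminal constraint $V_p''+W_p'' = \id\otimes\id = \id$ holds since $V_p+W_p=\id$ and $V_p'+W_p'=\id$. Thus $(V_i'',W_i'')_{i=0}^{p}$ is dual-feasible for $\chi_p(\rho_{AB}\otimes\omega_{A'B'})$, and its objective value equals $\Tr[(\rho_{AB}\otimes\omega_{A'B'})\,((V_0^\Gamma-W_0^\Gamma)\otimes(V_0'^\Gamma-W_0'^\Gamma))] = \Tr[\rho_{AB}(V_0^\Gamma-W_0^\Gamma)]\cdot\Tr[\omega_{A'B'}(V_0'^\Gamma-W_0'^\Gamma)] = \chi_p(\rho_{AB})\,\chi_p(\omega_{A'B'})$, giving $\chi_p(\rho_{AB}\otimes\omega_{A'B'})\geq\chi_p(\rho_{AB})\,\chi_p(\omega_{A'B'})$ and completing the proof.

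The one delicate point — and precisely the reason this argument works for $E_{\chi,p}$ but fails for $E_\kappa$ — is the feasibility of the cross ansatz in the dual. Here every tensor factor entering $V_i'',W_i''$ is one of the \emph{positive semi-definite} operators $V_i,W_i,V_i',W_i'$, never a partial transpose, so positivity is automatic; and the boundedness needed at the top of the hierarchy is enforced by the \emph{equality} $V_p+W_p=\id$ (with $V_p,W_p\geq 0$), which tensorizes cleanly. By contrast, the $E_\kappa$ dual~\eqref{E_kappa_dual_SM} only imposes the one-sided inequality $(V+W)^\Gamma\leq\id$, whose left-hand side may have large negative eigenvalues that blow up under tensor product, breaking feasibility — this is exactly the obstruction discussed around~\eqref{ansatz_condition}. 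I expect the routine verification of the chain of identities defining the cross ansatz to be the bulk of the write-up, but it is entirely mechanical once the factorization of $\Gamma$ is used.
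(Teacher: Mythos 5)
Your proposal is correct and follows essentially the same route as the paper: submultiplicativity via the tensor-product ansatz in the primal SDP together with Lemma~\ref{lemma_tensor_product}, and supermultiplicativity via the cross ansatz $V_i''=V_i\otimes V_i'+W_i\otimes W_i'$, $W_i''=V_i\otimes W_i'+W_i\otimes V_i'$ in the dual SDP of Proposition~\ref{dual_SDP_chi_prop}. Your closing remark on why the equality constraint $V_p+W_p=\id$ makes the cross ansatz feasible here, while the one-sided constraint in the $E_\kappa$ dual does not, matches the paper's own rationale exactly.
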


\begin{proof}
Clearly, Eq.~\eqref{E_chi_additivity_SM} follows from~\eqref{chi_multiplicativity_SM} upon taking the logarithm. We therefore focus on~\eqref{chi_multiplicativity_SM}. 
To prove it, we will show that $\chi_p$ is both sub-multiplicative and super-multiplicative. Let us start by verifying the former property, i.e.\ the inequality
\bb
\chi_p(\rho\otimes \omega) \leqt{?} \chi_p(\rho)\,\chi_p(\omega)\, , 
\ee
where we omitted the system labels for simplicity. Let $S_0,\ldots, S_p$ be optimisers of the minimisation problem~\eqref{chi_p_1_SM} that defines $\chi_p(\rho)$. Analogously, let $S'_0,\ldots, S'_p$ be optimisers of the minimisation problem that defines $\chi_p(\omega)$. By exploiting Lemma~\ref{lemma_tensor_product}, one can easily check that $S_0\otimes S'_0 ,\ldots, S_p\otimes S'_p$ are valid ansatzes in the minimisation problem that defines $\chi_p(\rho\otimes \omega)$. Consequently,
\bb
\chi_p(\rho\otimes \omega) \leq \Tr\big[S_p\otimes S'_p\big] = \chi_p(\rho)\,\chi_p(\omega)\,.
\ee

Now, let us prove the super-multiplicativity of $\chi_p$. We thus need to verify that 
\bb
\chi_p(\rho\otimes \omega) \geqt{?} \chi_p(\rho)\,\chi_p(\omega)\, .
\ee
To this end, we need to employ the dual SDP for $\chi_p$ that we derived in Proposition~\ref{dual_SDP_chi_prop}. Let $V_0,\ldots,V_p,W_0,\ldots,W_p \geq 0$ be optimisers of the SDP in~\eqref{chi_p_dual_1_SM} for $\chi_p(\rho)$. Analogously, let $V'_0,\ldots,V'_p,W'_0,\ldots,W'_p \geq 0$ be optimisers of the SDP as in~\eqref{chi_p_dual_1_SM} but for the state $\omega$. For any $i=0,\ldots,p$, define
\bb
V''_i &\coloneqq V_i \otimes V'_i + W_i\otimes W'_i\,, \\
W''_i &\coloneqq V_i \otimes W'_i + W_i \otimes V'_i\,.
\ee
Note that
\bb
V''_i + W''_i &= \left(V_i + W_i \right)\otimes\left(V'_i + W'_i \right) ,\\
V''_i - W''_i &= \left(V_i - W_i \right)\otimes\left(V'_i - W'_i \right) ,
\ee
one can easily check that $V''_0,\ldots,V''_p, W''_0,\ldots,W''_p$ are valid ansatzes for the optimisation as in~\eqref{chi_p_dual_1_SM} but for the state $\rho\otimes \omega$. Indeed, for $i=0,\ldots, p$ we have that $V''_i,W''_i\geq 0$; moreover, if $i\leq p-1$ it also holds that
\bb
V''_i + W''_i &= \left(V_i + W_i \right)\otimes\left(V'_i + W'_i \right) \\
&= \left(V_{i+1} - W_{i+1} \right)^\Gamma\otimes\left(V'_{i+1} - W'_{i+1} \right)^\Gamma \\
&= \left( \left(V_{i+1} - W_{i+1} \right)\otimes\left(V'_{i+1} - W'_{i+1} \right) \right)^\Gamma \\
&= \left(V''_{i+1} - W''_{i+1}\right)^\Gamma ;
\ee
finally, for $i=p$ we see that 
\bb
V''_p + W''_p = \left(V_p + W_p \right)\otimes\big(V'_p + W'_p \big) = \id\otimes \id = \id\, .
\ee
Consequently, it holds that
\bb
\chi_p(\rho\otimes \omega) &\geq \Tr\left[\left(\rho\otimes\omega\right)\left(V''_0 - W''_0 \right)^\Gamma\right]\\
&= \Tr\left[\rho\,\big(V_0 - W_0 \big)^\Gamma\right]\, \Tr\left[\omega\,\big( V'_0 - W'_0 \big)^\Gamma\right]\\
&= \chi_p(\rho)\,\chi_p(\omega)\,.
\ee
This concludes the proof.
\end{proof}

Unlike the $\chi$-quantities, the corresponding $\kappa$-quantities are only sub-additive but in general not fully additive (as we have seen in Section~\ref{sec_additivity_violations}). Although we will not make use of this property in what follows, we state and prove it for completeness.

\begin{prop}[(Sub-additivity of $E_{\kappa,q}$)] \label{subadditivity_E_kappa_SM_prop}
For all pairs of bipartite states $\rho_{AB},\,\omega_{A'B'}$ and for all $q\in \N^+$, we have that
\begin{align}
\kappa_q\left(\rho_{AB}\otimes \omega_{A'B'}\right) &\leq \kappa_q(\rho_{AB})\, \kappa_q(\omega_{A'B'})\, , \label{submultiplicativity_kappa_SM} \\
E_{\kappa, q}\left(\rho_{AB}\otimes \omega_{A'B'}\right) &\leq E_{\kappa, q}\left(\rho_{AB}\right) + E_{\kappa, q}\left(\omega_{A'B'}\right) .
\label{subadditivity_E_kappa_SM}
\end{align}
\end{prop}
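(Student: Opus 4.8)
The plan is to mirror the sub-multiplicativity half of the proof of Proposition~\ref{Prop_additivity}: since only sub-additivity is asserted, a purely primal feasibility argument suffices, and there is no need to invoke the dual SDP of Proposition~\ref{dual_SDP_kappa_prop}. Taking logarithms reduces~\eqref{subadditivity_E_kappa_SM} to the sub-multiplicativity statement~\eqref{submultiplicativity_kappa_SM}, so I would focus on the latter.

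First I would fix optimisers $S_0,\ldots,S_{q-1}$ of the minimisation~\eqref{kappa_q_1_SM} defining $\kappa_q(\rho_{AB})$, and $S'_0,\ldots,S'_{q-1}$ of the analogous problem for $\omega_{A'B'}$; recall the conventions $S_{-1}=\rho$, $S'_{-1}=\omega$. The candidate feasible point for $\kappa_q(\rho\otimes\omega)$ is then $S_i\otimes S'_i$ for $i=0,\ldots,q-1$, with the understanding that $(S\otimes S')_{-1}=\rho\otimes\omega$. The one structural fact to be careful about is that the partial transpose on the combined bipartition $(AA'\,|\,BB')$ factorises as $\Gamma_B\otimes\Gamma_{B'}$ (after the obvious reshuffling of tensor factors), so that $(X_{AB}\otimes Y_{A'B'})^\Gamma = X_{AB}^\Gamma\otimes Y_{A'B'}^\Gamma$.

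With that in hand, the constraints in~\eqref{kappa_q_1_SM} are checked as follows. From $-S_i\leq S_{i-1}^\Gamma\leq S_i$ and $-S'_i\leq {S'_{i-1}}^\Gamma\leq S'_i$ (valid for $i=0,\ldots,q-1$ since $S_{-1}=\rho$, $S'_{-1}=\omega$), Lemma~\ref{lemma_tensor_product} gives $-S_i\otimes S'_i\leq S_{i-1}^\Gamma\otimes {S'_{i-1}}^\Gamma\leq S_i\otimes S'_i$, and the middle term equals $(S_{i-1}\otimes S'_{i-1})^\Gamma$ by the factorisation above --- which is exactly the $i$-th sandwich constraint for $\rho\otimes\omega$. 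For the remaining positivity constraint, $(S_{q-1}\otimes S'_{q-1})^\Gamma = S_{q-1}^\Gamma\otimes {S'_{q-1}}^\Gamma\geq 0$, because a tensor product of positive semi-definite operators is positive semi-definite and $S_{q-1}^\Gamma,{S'_{q-1}}^\Gamma\geq 0$ by feasibility. Hence $S_0\otimes S'_0,\ldots,S_{q-1}\otimes S'_{q-1}$ is feasible for $\kappa_q(\rho\otimes\omega)$, and evaluating the objective gives $\kappa_q(\rho\otimes\omega)\leq \Tr[S_{q-1}\otimes S'_{q-1}] = \Tr[S_{q-1}]\,\Tr[S'_{q-1}] = \kappa_q(\rho_{AB})\,\kappa_q(\omega_{A'B'})$, which is~\eqref{submultiplicativity_kappa_SM}; taking $\log_2$ yields~\eqref{subadditivity_E_kappa_SM}.

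The main obstacle --- such as it is --- is essentially notational: one must verify the partial-transpose factorisation on the product system and invoke Lemma~\ref{lemma_tensor_product} with the correct operators, exactly as in the sub-multiplicativity step of Proposition~\ref{Prop_additivity}. There is no genuine difficulty here, and in particular (unlike full additivity, which provably fails for $\kappa_q$) no construction of a clever dual ansatz is required, since the reverse inequality is precisely the statement that can break down.
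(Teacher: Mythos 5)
Your proposal is correct and follows essentially the same route as the paper's proof: take the tensor products $S_i\otimes S_i'$ of the optimisers, invoke Lemma~\ref{lemma_tensor_product} to verify feasibility for $\kappa_q(\rho\otimes\omega)$, and evaluate the objective. The paper states this more tersely, but your added details (the factorisation of the partial transpose across the combined bipartition and the explicit check of $(S_{q-1}\otimes S_{q-1}')^\Gamma\geq 0$) are exactly the steps the paper leaves implicit.
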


\begin{proof}
The argument is totally analogous to that in the first part of the proof of the above Proposition~\ref{Prop_additivity}. Let $S_0,\ldots, S_{q-1}$ and $S'_0,\ldots, S'_{q-1}$, respectively, be optimisers of the SDPs that define $\kappa_q(\rho)$ and $\kappa_q(\omega)$ as in~\eqref{kappa_q_1_SM}. Due once more to Lemma~\ref{lemma_tensor_product}, one sees that $S_0\otimes S'_0, \ldots, S_{q-1}\otimes S'_{q-1}$ constitute a feasible point for the SDP problem that defines $\kappa_q(\rho\otimes \omega)$. Consequently,
\bb
\kappa_q(\rho\otimes \omega) \leq \Tr\left[S_{q-1}\otimes S'_{q-1}\right] = \kappa_q(\rho)\,\kappa_q(\omega)\,.
\ee
This concludes the proof.
\end{proof}

\subsection{Continuity} \label{subsec_continuity}

To close off this section, we explore the continuity properties of our new monotones. We need a couple of simple lemmas first. To simplify the exposition, it is useful to note that the functions $\chi_p$ and $\kappa_q$ ($p\in \N$, $q\in \N^+$) 
given by~\eqref{chi_p_1_SM} and~\eqref{kappa_q_1_SM} are perfectly well defined and non-negative not only for states, but also for arbitrary Hermitian operators. In fact, looking at the SDPs in~\eqref{chi_p_1_SM} and~\eqref{kappa_q_1_SM}, one sees that $\chi_p(X)\geq 0$ and $\kappa_q(X)\geq 0$ must hold for all Hermitian $X$, because in both cases any feasible point must satisfy $S_i\geq 0$ for all $i$.

It is 
easy to verify that
\bb
F(X+Y) \leq F(X) + F(Y)\, ,\qquad F = \chi_p,\, \kappa_q,\quad p\in \N,\ q\in \N^+ ,
\label{subadditivity_F}
\ee
and moreover 
\bb
F(\lambda X) = |\lambda|\,F(X)\, ,\qquad F = \chi_p,\, \kappa_q,\quad p\in \N,\ q\in \N^+ ,\quad \lambda\in \R\,.
\label{invariance_F}
\ee
Here, the absolute value in~\eqref{invariance_F} is a consequence of the fact that 
$X$ appears with both signs in the definition of both $\chi_p(X)$ and $\kappa_q(X)$. We can now extend Lemma~\ref{normalisation_lemma} to the non-positive case.

\begin{lemma} \label{extended_normalisation_lemma}
For all Hermitian operators $X = X_{AB}$ on a bipartite quantum system of local dimension $d\coloneqq \min\{|A|,|B|\}$, it holds that
\bb
F(X) \leq d\,\|X\|_1\, ,\qquad F = \chi_p,\, \kappa_q,\quad p\in \N,\ q\in \N^+\, .
\ee
\end{lemma}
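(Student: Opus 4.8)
The plan is to reduce the Hermitian case to the positive semi-definite case already handled by Corollary~\ref{max_values_cor}, using the sub-additivity~\eqref{subadditivity_F} and the scaling identity~\eqref{invariance_F}. First I would write the Jordan (spectral) decomposition $X = X_+ - X_-$, where $X_+,X_-\geq 0$ are the positive and negative parts of $X$, so that $\|X\|_1 = \Tr X_+ + \Tr X_-$. Applying~\eqref{subadditivity_F} with the observation that $F(-X_-) = F(X_-)$ by~\eqref{invariance_F}, we get $F(X) \leq F(X_+) + F(X_-)$.

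Next I would bound each term separately. Fix one summand, say $F(X_+)$. If $X_+ = 0$ the bound is trivial (both $\chi_p$ and $\kappa_q$ vanish on the zero operator, as one sees by taking all $S_i = 0$). Otherwise set $t \coloneqq \Tr X_+ > 0$ and note that $\rho_+ \coloneqq X_+/t$ is a genuine quantum state on the bipartite system $AB$, hence Corollary~\ref{max_values_cor} applies and gives $F(\rho_+) \leq d$ for $F = \chi_p, \kappa_q$. Using homogeneity~\eqref{invariance_F} once more, $F(X_+) = t\, F(\rho_+) \leq t\, d = d\,\Tr X_+$. The identical argument with $X_-$ in place of $X_+$ yields $F(X_-) \leq d\,\Tr X_-$. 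Adding the two estimates gives
\bb
F(X) \leq F(X_+) + F(X_-) \leq d\left(\Tr X_+ + \Tr X_-\right) = d\,\|X\|_1\, ,
\ee
which is the claimed inequality.

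There is no real obstacle here; the only point requiring a sentence of care is that Corollary~\ref{max_values_cor} is stated for quantum states, so one must genuinely normalise $X_\pm$ before invoking it, and separately dispose of the degenerate case where $X_+$ or $X_-$ vanishes (equivalently, where $X \geq 0$ or $X \leq 0$, in which case only one of the two terms survives). Everything else is a direct consequence of the sub-additivity and absolute-homogeneity of $\chi_p$ and $\kappa_q$ recorded in~\eqref{subadditivity_F}--\eqref{invariance_F}, together with the fact that these functionals are well defined and non-negative on all Hermitian operators, as noted just before the statement of the lemma.
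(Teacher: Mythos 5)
Your proof is correct. It differs from the paper's in the choice of decomposition and of the supporting bound: the paper writes the full spectral decomposition $X=\sum_i x_i \Psi_i$ into rank-one projectors, applies sub-additivity~\eqref{subadditivity_F} and homogeneity~\eqref{invariance_F} term by term, and then only needs the \emph{pure-state} normalisation $F(\Psi_i)\leq d$ from Lemma~\ref{normalisation_lemma} (Eq.~\eqref{hierarchies_pure_states}); you instead use the coarser Jordan decomposition $X=X_+-X_-$ and bound each of the two normalised parts via Corollary~\ref{max_values_cor}, which covers general mixed states. Both inputs are available at this point in the paper, so your argument is sound; the only substantive difference is that Corollary~\ref{max_values_cor} is itself derived from the pure-state case through strong PPT monotonicity (Proposition~\ref{strong_monotonicity_prop} plus Nielsen's theorem), so your route leans on slightly heavier machinery, while the paper's stays entirely at the level of the pure-state formula. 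In exchange, your version handles only two summands rather than $D$ of them, and your remarks about normalising $X_\pm$ before invoking the corollary and about the degenerate case $X_\pm=0$ (which is anyway absorbed by $F(0)=0$, e.g.\ from~\eqref{invariance_F} with $\lambda=0$) are exactly the right points of care.
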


\begin{proof}
Due to the 
properties of $F$, for all Hermitian operators $X$ we have that 
\bb
F(X)\ \eqt{(i)}\ F\left(\sumno_i x_i \Psi_i\right)\ \leqt{(ii)}\ \sum_i F(x_i \Psi_i)\ \eqt{(iii)}\ \sum_i |x_i|\, F(\Psi_i)\ \leqt{(iv)}\ d \sum_i |x_i| = d\, \|X\|_1\, .
\ee
Here, in~(i) we introduced a spectral decomposition $X=\sum_i x_i \Psi_i$ of $X$, (ii)~follows from~\eqref{subadditivity_F}, (iii) descends from~\eqref{invariance_F}, and finally in~(iv) we leveraged Lemma~\ref{normalisation_lemma}, and in particular~\eqref{hierarchies_pure_states}.
\end{proof}

We are now ready to state the continuity properties of our monotones.

\begin{prop}[(Continuity)] \label{continuity_properties_prop}
Let $\rho,\rho'$ be two quantum states on a bipartite system $AB$ with minimal local dimension $d\coloneqq \min\{|A|,|B|\}$. Setting $\e\coloneqq \frac12 \left\|\rho-\rho'\right\|_1$, for all $p\in \N$ and $q\in \N^+$ we have that
\bb
\left|F(\rho) - F(\rho') \right| \leq 2d\e\, ,\qquad F = \chi_p,\, \kappa_q\, .
\label{continuity_F}
\ee
Consequently,
\bb
\left|f(\rho) - f(\rho') \right| \leq \log_2\left(1+2d\e\right) \leq 2\,(\log_2 e)\, d\e ,\qquad f = E_{\chi,p},\, E_{\kappa,q}\, .
\label{continuity_f}
\ee
\end{prop}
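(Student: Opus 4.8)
The plan is to prove the Lipschitz-type bound \eqref{continuity_F} for $F = \chi_p,\kappa_q$ directly from the sub-additivity property \eqref{subadditivity_F}, the absolute homogeneity \eqref{invariance_F}, and the crude norm bound of Lemma~\ref{extended_normalisation_lemma}; the bound \eqref{continuity_f} for the logarithmic versions $f = E_{\chi,p}, E_{\kappa,q}$ will then follow by a short computation using the fact that these $F$'s are always $\geq 1$ on states (Remark~\ref{nonzero_rem}).

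First I would fix $F \in \{\chi_p,\kappa_q\}$ and two states $\rho,\rho'$. Write $\Delta \coloneqq \rho - \rho'$, a Hermitian (traceless) operator with $\|\Delta\|_1 = 2\e$. Using sub-additivity \eqref{subadditivity_F} in the form $F(\rho) = F(\rho' + \Delta) \leq F(\rho') + F(\Delta)$, and symmetrically $F(\rho') \leq F(\rho) + F(-\Delta) = F(\rho) + F(\Delta)$ by \eqref{invariance_F}, we get $|F(\rho) - F(\rho')| \leq F(\Delta)$. Then Lemma~\ref{extended_normalisation_lemma} applied to the Hermitian operator $\Delta$ gives $F(\Delta) \leq d\,\|\Delta\|_1 = 2d\e$, which is exactly \eqref{continuity_F}. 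This first part is essentially a one-line argument once the supporting lemmas are in place.

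For \eqref{continuity_f}, I would observe that for $f = \log_2 F$ with $F = \chi_p$ or $\kappa_q$, Remark~\ref{nonzero_rem} guarantees $F(\rho), F(\rho') \geq 1$ (since $\rho,\rho'$ are states). Without loss of generality assume $F(\rho) \geq F(\rho')$; then
\bb
0 \leq f(\rho) - f(\rho') = \log_2 \frac{F(\rho)}{F(\rho')} = \log_2\!\left(1 + \frac{F(\rho)-F(\rho')}{F(\rho')}\right) \leq \log_2\!\left(1 + \frac{2d\e}{1}\right) = \log_2(1+2d\e)\, ,
\ee
where I used $F(\rho') \geq 1$ in the denominator together with \eqref{continuity_F}. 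The final inequality $\log_2(1+2d\e) \leq 2(\log_2 e)\, d\e$ is just the elementary estimate $\ln(1+x) \leq x$ for $x \geq 0$ (here with $x = 2d\e$), after converting bases. This completes the proof.

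I do not anticipate a genuine obstacle here: the heavy lifting has already been done in Lemma~\ref{extended_normalisation_lemma} (which in turn rests on the pure-state evaluation in Lemma~\ref{normalisation_lemma}). The only points requiring a modicum of care are (i) making sure sub-additivity and homogeneity are invoked on the \emph{non-positive} Hermitian operator $\Delta$ — which is legitimate precisely because, as noted just before Lemma~\ref{extended_normalisation_lemma}, $\chi_p$ and $\kappa_q$ are well defined and the relevant inequalities hold on all Hermitian operators, not merely on states — and (ii) using $F \geq 1$ rather than a weaker bound when passing to the logarithm, so that the denominator does not spoil the estimate.
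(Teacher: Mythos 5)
Your proposal is correct and follows essentially the same route as the paper's proof: decompose the difference of states, apply sub-additivity together with Lemma~\ref{extended_normalisation_lemma} to get the bound $2d\e$, and then pass to the logarithm using $F\geq 1$ from Remark~\ref{nonzero_rem}. The only cosmetic difference is that you invoke the homogeneity $F(-\Delta)=F(\Delta)$ explicitly where the paper simply exchanges the roles of $\rho$ and $\rho'$; both yield the identical estimate.
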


\begin{proof}
Let $X\coloneqq \rho' - \rho$. Since $\|X\|_1 = 2\e$, we can write
\bb
F(\rho') = F(\rho + X) \leq F(\rho) + F(X) \leq F(\rho) + 2d \e\, ,
\ee
where we have used~\eqref{subadditivity_F} and Lemma~\ref{extended_normalisation_lemma}. Applying this reasoning with $\rho$ and $\rho'$ exchanged yields also the reverse inequality and thus completes the proof of~\eqref{continuity_F}. As for~\eqref{continuity_f}, it suffices to note that
\bb\label{eq_proof_continuity}
F(\rho') \leq F(\rho) +2d\e \leq F(\rho) \left(1+2d\e\right) ,
\ee
where in the last line we used that $F(\rho)\geq 1$ by Remark~\ref{nonzero_rem}. The claim follows by taking logarithms in~\eqref{eq_proof_continuity}, and 
subsequently exchanging the roles of 
$\rho$ and $\rho'$.

\end{proof}

\begin{rem}
The type of continuity stated in Proposition~\ref{continuity_properties_prop} is substantially weaker than \emph{asymptotic continuity}, a key property of entanglement monotones~\cite{Horodecki2001,Donald2002}. The main point is that the Lipschitz constant appearing on the right hand side of~\eqref{continuity_F}--\eqref{continuity_f} is proportional to $d$ instead of $\log_2 d$, as would be needed to have asymptotic continuity. This is not surprising, because the logarithmic negativity itself is known to be \emph{not} asymptotically continuous~\cite{negativity}, and our monotones can be thought of generalisations of the negativity. It is worth remarking, however, that even the weaker form of continuity established in Proposition~\ref{continuity_properties_prop} can be extremely useful. The same property for the negativity, for example, which can be re-derived by setting $F=\chi_0$ in~\eqref{continuity_F}, is key to proving that it upper bounds the distillable entanglement~\cite[Section~IV]{negativity}. This observation also underpins the whole approach of~\cite{irreversibility}.
\end{rem}

\section{Main results}

\subsection{Enter the regularisation}


In the previous section we have investigated two families of new PPT monotones, the $\chi$- and the $\kappa$-hierarchies. In Proposition~\ref{relation_hierarchies_prop} we have established a precise hierarchical relation between the two hierarchies, with the increasing functions $E_{\chi,p}(\rho)$ ($p\in \N$) lying below the decreasing functions $E_{\kappa,q}(\rho)$ ($q\in \N^+$). However, our ultimate goal is to understand the zero-error PPT entanglement cost of an arbitrary state $\rho$, which, by the results of Wang and Wilde (Eq.~\eqref{cost_E_kappa_regularised}), coincides with the regularised $E_\kappa$, in formula $\ecost(\rho) = E_\kappa^\infty(\rho)$. The purpose of this subsection, therefore, is to understand how $E_\kappa^\infty$ fits in the hierarchy delineated by Proposition~\ref{relation_hierarchies_prop}. A first insight can be deduced by leveraging the results we have obtained in the previous section.

\begin{lemma} \label{cost_lower_bounded_E_chi_p_lemma}
Let $\rho=\rho_{AB}$ be an arbitrary bipartite state. Then for all $p\in \N$ it holds that
\bb
\ecost(\rho) = E_\kappa^\infty(\rho) \geq E_{\chi,p}(\rho)\, .
\label{cost_lower_bounded_E_chi_p}
\ee
In particular,
\bb
E_\kappa^\infty(\rho) \geq \lim_{p\to\infty} E_{\chi,p}(\rho) = \sup_{p\in \N} E_{\chi,p}(\rho)\, .
\label{cost_lower_bounded_limit_E_chi_p}
\ee
\end{lemma}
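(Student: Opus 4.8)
The plan is to combine three facts already established in the excerpt: (i) the Wang--Wilde identity $\ecost(\rho) = E_\kappa^\infty(\rho) = \inf_{n\in\N^+} \tfrac1n E_\kappa(\rho^{\otimes n})$ from \eqref{cost_E_kappa_regularised}--\eqref{E_kappa_regularised_infimum}; (ii) the comparison $E_{\chi,p} \leq E_{\kappa,1} = E_\kappa$, valid on \emph{every} bipartite state, which is the content of Proposition~\ref{relation_hierarchies_prop} specialised to $q=1$ (recalling that $E_{\kappa,1}=E_\kappa$); and (iii) the additivity of $E_{\chi,p}$ under tensor powers, Proposition~\ref{Prop_additivity}, in particular $E_{\chi,p}(\rho^{\otimes n}) = n\, E_{\chi,p}(\rho)$ for all $n\in\N^+$.

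With these in hand the argument is a short chain. Fix $p\in\N$ and $n\in\N^+$. By (iii), $n\, E_{\chi,p}(\rho) = E_{\chi,p}(\rho^{\otimes n})$; applying (ii) to the (generally mixed) state $\rho^{\otimes n}$ gives $E_{\chi,p}(\rho^{\otimes n}) \leq E_\kappa(\rho^{\otimes n})$. Dividing through by $n$ yields $E_{\chi,p}(\rho) \leq \tfrac1n E_\kappa(\rho^{\otimes n})$. Since this holds for every $n\in\N^+$, taking the infimum over $n$ and invoking (i) produces $E_{\chi,p}(\rho) \leq \inf_n \tfrac1n E_\kappa(\rho^{\otimes n}) = E_\kappa^\infty(\rho) = \ecost(\rho)$, which is exactly \eqref{cost_lower_bounded_E_chi_p}. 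One could in fact get away with only the super-multiplicativity half of Proposition~\ref{Prop_additivity}, i.e.\ the inequality $E_{\chi,p}(\rho^{\otimes n}) \geq n\, E_{\chi,p}(\rho)$, since only one direction is needed; but as full additivity is already proved it is cleanest to cite it as such.

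For the second display \eqref{cost_lower_bounded_limit_E_chi_p}, recall from Proposition~\ref{relation_hierarchies_prop} that $p\mapsto E_{\chi,p}(\rho)$ is non-decreasing, and from Corollary~\ref{max_values_cor} (or simply from the bound $E_{\chi,p}\leq E_\kappa < \infty$) that it is bounded above; hence $\lim_{p\to\infty} E_{\chi,p}(\rho) = \sup_{p\in\N} E_{\chi,p}(\rho)$ exists and is finite. Taking the supremum over $p$ in \eqref{cost_lower_bounded_E_chi_p} then gives $E_\kappa^\infty(\rho) \geq \sup_{p} E_{\chi,p}(\rho) = \lim_{p\to\infty} E_{\chi,p}(\rho)$, as claimed.

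There is no genuine obstacle here: the lemma is essentially a corollary of the additivity of the $\chi$-hierarchy (Proposition~\ref{Prop_additivity}) together with the $\chi\leq\kappa$ comparison (Proposition~\ref{relation_hierarchies_prop}) and the regularised characterisation of the cost, with all the real technical effort residing in those earlier results. The only mild point to double-check is that step (ii) is being invoked at the state $\rho^{\otimes n}$ rather than at $\rho$ itself, which is legitimate precisely because Proposition~\ref{relation_hierarchies_prop} is stated for arbitrary bipartite states.
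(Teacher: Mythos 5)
Your proof is correct and follows essentially the same route as the paper's: both combine the comparison $E_{\chi,p}\leq E_\kappa$ from Proposition~\ref{relation_hierarchies_prop} applied to $\rho^{\otimes n}$, the additivity of $E_{\chi,p}$ from Proposition~\ref{Prop_additivity}, and the regularised characterisation $\ecost=E_\kappa^\infty$, then pass to the limit (equivalently, infimum) in $n$ and finally in $p$. The only cosmetic difference is that the paper takes the limit $n\to\infty$ directly rather than the infimum over $n$, which is equivalent by Fekete's lemma as noted in~\eqref{E_kappa_regularised_infimum}.
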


\begin{proof}
For all positive integers $n\in \N^+$ we can write
\bb
\frac1n\, E_\kappa\big(\rho^{\otimes n}\big) \geqt{(i)} \frac1n\, E_{\chi,p}\big(\rho^{\otimes n}\big) \eqt{(ii)} E_{\chi,p}(\rho)\, ,
\ee
where (i)~holds due to Proposition~\ref{relation_hierarchies_prop}, and in~(ii) we exploited the additivity of $E_{\chi,p}$ (Proposition~\ref{Prop_additivity}). Taking the limit $n\to\infty$ and leveraging~\eqref{cost_E_kappa_regularised} concludes the proof of~\eqref{cost_lower_bounded_E_chi_p}. Upon taking the limit $p\to\infty$, we obtain also~\eqref{cost_lower_bounded_limit_E_chi_p}.
\end{proof}

We have thus established that the function $E_\kappa^\infty$ lies above the entire $\chi$-hierarchy. We shall now investigate its relationship with the $\kappa$-hierarchy. The main result of this subsection is the following.

\begin{thm} \label{hierarchies_chi_kappa_infty_thm}
Let $q\in \N^+$ be a positive integer. For any bipartite state $\rho = \rho_{AB}$, it holds that
\bb
\ecost(\rho) = E^\infty_\kappa(\rho) \leq E_{\kappa, q}(\rho)\, .
\label{cost_upper_bounded_E_kappa_q}
\ee
In particular, for every $\rho$ we have that
\bb
E_N = E_{\chi,0} \leq E_{\chi,1} \leq \ldots \leq \lim_{p\to\infty} E_{\chi,p} \leq \ecost = E^\infty_\kappa \leq \lim_{q\to\infty} E_{\kappa, q} \leq \ldots \leq E_{\kappa,2} \leq E_{\kappa,1} = E_\kappa\, .
\label{hierarchies_chi_kappa_infty}
\ee
\end{thm}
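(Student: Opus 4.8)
\emph{Plan.} The display~\eqref{hierarchies_chi_kappa_infty} is obtained by stringing together four facts. Three are already in place: $\ecost=E_\kappa^\infty$ (Wang and Wilde, Eq.~\eqref{cost_E_kappa_regularised}); $E_{\chi,p}(\rho)\le\ecost(\rho)$ for all $p$ (Lemma~\ref{cost_lower_bounded_E_chi_p_lemma}); and the fact that $p\mapsto E_{\chi,p}$ increases, $q\mapsto E_{\kappa,q}$ decreases, and $E_{\chi,p}\le E_{\kappa,q}$ for all $p,q$ (Proposition~\ref{relation_hierarchies_prop}). The only missing piece is the inequality $\ecost(\rho)\le E_{\kappa,q}(\rho)$ of~\eqref{cost_upper_bounded_E_kappa_q}; granting it, and recalling that $E_{\kappa,q}$ decreases in $q$, the rest of~\eqref{hierarchies_chi_kappa_infty} follows at once by concatenation. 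So the whole task is to prove $E_\kappa^\infty(\rho)\le E_{\kappa,q}(\rho)$ for each $q\in\N^+$.

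By~\eqref{E_kappa_regularised_infimum} one has $E_\kappa^\infty(\rho)=\inf_{n}\tfrac1n\log_2\kappa_1(\rho^{\otimes n})$, so the target reduces to a single-letter-type estimate $\kappa_1(\rho^{\otimes c})\le\kappa_q(\rho)^{c}$ for a suitable exponent $c=c(q)$ --- for instance $c=2^{\,q-1}$, obtained by iterating a ``level-drop'' inequality $\kappa_k(\sigma^{\otimes2})\le\kappa_{k+1}(\sigma)^2$. The hard part, which I would tackle by working directly with the semi-definite programs: given an optimal chain $S_{-1}=\rho,S_0,\dots,S_{q-1}$ for $\kappa_q(\rho)$ (so $-S_i\le S_{i-1}^\Gamma\le S_i$ and $S_{q-1}^\Gamma\ge0$, with all $S_i\ge0$ of increasing trace by Lemma~\ref{increasing_trace_lemma}), build a feasible point for the $\kappa_1$-program on the tensor power. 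The obstacle is that the obvious tensor-product ansatz in the $S_i$ forces a partial transpose to be applied to a semidefinite inequality of the type $S_i-S_{i-1}^\Gamma\ge0$, which is not legitimate because $\Gamma$ is not a positive map; this is exactly the obstruction that defeats the additivity of $E_\kappa$ (the step $(V+W)^\Gamma\otimes(V'+W')^\Gamma\not\leq\id$) and that the extra constraint of the $\chi$-hierarchy was designed to avoid. So one cannot ``flatten'' the chain by tensoring, and a purely operational shortcut is likewise blocked --- since $E_\kappa$ is a PPT monotone, no PPT protocol, even one catalysed by free PPT states, prepares $\rho$ from fewer than $\kappa_1(\rho)$ ebits in one shot. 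I expect the correct argument to pass through the dual programs (Propositions~\ref{dual_SDP_chi_prop}--\ref{dual_SDP_kappa_prop}), combined with the bridge inequality~\eqref{convergence_primitive} of Proposition~\ref{convergence_primitive_prop}, which quantifies how much is lost when one drops a level and thereby substitutes for the absent super-multiplicativity of $\kappa_q$ at the cost of a controlled factor; this quantitative ingredient is where I expect most of the work to lie.

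As a consistency check on the overall picture, Proposition~\ref{relation_hierarchies_prop} together with Lemma~\ref{cost_lower_bounded_E_chi_p_lemma} and the $p\to\infty$ limit of~\eqref{convergence_primitive} already gives $\lim_q E_{\kappa,q}(\rho)=\lim_p E_{\chi,p}(\rho)\le\ecost(\rho)$; so the theorem is equivalent to the reverse bound $\ecost(\rho)\le E_{\kappa,q}(\rho)$, and establishing it simultaneously shows $\ecost=\lim_p E_{\chi,p}=\lim_q E_{\kappa,q}$, the single-letter form of the cost used throughout the rest of the paper. With $\ecost(\rho)\le E_{\kappa,q}(\rho)$ secured, both~\eqref{cost_upper_bounded_E_kappa_q} and the full chain~\eqref{hierarchies_chi_kappa_infty} follow as described.
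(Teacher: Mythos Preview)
Your reduction is correct: the only thing to prove is $E_\kappa^\infty(\rho)\le E_{\kappa,q}(\rho)$ for each fixed $q$, everything else is bookkeeping. But from that point on you do not actually have a proof. The ``level-drop'' inequality $\kappa_k(\sigma^{\otimes 2})\le\kappa_{k+1}(\sigma)^2$ is asserted but not shown, and you yourself observe that the natural tensor ansatz is blocked by the non-positivity of $\Gamma$. Your fallback --- dual programs together with Proposition~\ref{convergence_primitive_prop} --- is speculation, not an argument; and in any case Proposition~\ref{convergence_primitive_prop} is used in the paper \emph{downstream} of the present theorem, so invoking it here would invert the logical order (even if not strictly circular).

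The paper's proof stays in the primal and is much simpler than you feared. Given optimisers $S_0,\dots,S_{q-1}$ for $\kappa_q(\rho)$, set
\[
\tilde S_n\;\coloneqq\;\sum_{i=0}^{q-1}\Big(S_i^{\otimes n}+\big(S_i^\Gamma\big)^{\otimes n}\Big).
\]
Two observations make this work. First, $\tilde S_n$ is manifestly invariant under $\Gamma$, so $\tilde S_n^\Gamma=\tilde S_n\ge 0$ comes for free once $\tilde S_n\ge 0$. Second, the sandwich $-\tilde S_n\le(\rho^{\otimes n})^\Gamma\le\tilde S_n$ follows by a telescoping argument: Lemma~\ref{lemma_tensor_product} lifts $-S_i\le S_{i-1}^\Gamma\le S_i$ to $-S_i^{\otimes n}\le(S_{i-1}^{\otimes n})^\Gamma\le S_i^{\otimes n}$, and rewriting $\tilde S_n\pm(\rho^{\otimes n})^\Gamma$ as $\big(S_0^{\otimes n}\pm(\rho^{\otimes n})^\Gamma\big)+\sum_{i=0}^{q-2}\big(S_{i+1}^{\otimes n}+(S_i^\Gamma)^{\otimes n}\big)+(S_{q-1}^\Gamma)^{\otimes n}$ exhibits it as a sum of nonnegative terms. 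Hence $\tilde S_n$ is feasible for $\kappa_1(\rho^{\otimes n})$, and since $\Tr\tilde S_n=2\sum_i(\Tr S_i)^n\le 2q\,\kappa_q(\rho)^n$ by Lemma~\ref{increasing_trace_lemma}, one gets $E_\kappa(\rho^{\otimes n})\le\log_2(2q)+nE_{\kappa,q}(\rho)$; dividing by $n$ and taking $n\to\infty$ finishes.

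The moral: the obstruction you identified (that $\Gamma$ does not preserve semidefiniteness) is real for a \emph{single} tensor power $S_i^{\otimes n}$, but it evaporates once you symmetrise by adding $(S_i^\Gamma)^{\otimes n}$ and sum over the levels. You abandoned the primal approach one step too early.
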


\begin{proof}
Clearly, it suffices to prove~\eqref{cost_upper_bounded_E_kappa_q}, as~\eqref{hierarchies_chi_kappa_infty} would then follow by combining that with the results of Lemma~\ref{cost_lower_bounded_E_chi_p_lemma} and Proposition~\ref{relation_hierarchies_prop}. In light of these considerations, we thus set out to prove~\eqref{cost_upper_bounded_E_kappa_q}. 

Let $S_{-1},S_0,\ldots,S_{q-1}$ be optimisers for the SDP in~\eqref{kappa_q_1_SM} for $\kappa_q(\rho)$. Therefore,
\bb \label{feasible_points}
-S_i \leq S_{i-1}^\Gamma \leq S_i\qquad \forall\ i=0,\ldots, q-1\, ,\qquad S_{-1} =\rho\, ,\qquad S_{q-1}^\Gamma \geq 0\,.
\ee
For an arbitrary positive integer $n\in\N^+$, Lemma~\ref{lemma_tensor_product} implies that
\bb \label{ineq_s_i_otimes}
-S_i^{\otimes n} \leq \big(S_{i-1}^{\otimes n}\big)^\Gamma \leq S_i^{\otimes n} \qquad \forall\ i=0,\ldots, q-1\,, \qquad 
\big(S_{q-1}^{\Gamma}\big)^{\otimes n} \geq 0
\ee
Let us define
\bb \label{def_Sn_new}
\tilde{S}_n\coloneqq \sum_{i=0}^{q-1}\left(S_i^{\otimes n}+\big(S_i^\Gamma\big)^{\otimes n}\right) .
\ee
We will now show that $\tilde{S}_n$ is in fact a feasible point of the minimisation problem that defines $\kappa_1\big(\rho^{\otimes n}\big) = 2^{E_\kappa(\rho^{\otimes n})}$. To this end, we need to verify that  
\bb
-\tilde{S}_n\leqt{?} \big(\rho^{\otimes n}\big)^\Gamma \leqt{?} \tilde{S}_n\,,\qquad \tilde{S}_n^{\,\Gamma}\geqt{?}0\,.
\label{inequalities_to_be_proved_Sn_new}
\ee
To prove the first two inequalities, note that 
\bb
\tilde{S}_n \pm \big(\rho^{\otimes n}\big)^\Gamma &= S_0^{\otimes n} + \sum_{i=0}^{q-2}\left(S_{i+1}^{\otimes n}+\big(S_i^\Gamma\big)^{\otimes n}\right)+ \big(S_{q-1}^\Gamma\big)^{\otimes n} \pm \big(\rho^{\otimes n}\big)^\Gamma \\
&\geq S_0^{\otimes n}+\big(S_{q-1}^\Gamma\big)^{\otimes n} \pm \big(\rho^{\otimes n}\big)^\Gamma \\
&\geq S_0^{\otimes n} \pm \big(\rho^{\otimes n}\big)^\Gamma \\
&\geq 0\, ,
\ee
where all three inequalities come from~\eqref{ineq_s_i_otimes}. 
Consequently, we have proved that $-\tilde{S}_n\leq \big(\rho^{\otimes n}\big)^\Gamma \leq \tilde{S}_n$. Note that 
these two inequalities together imply that $\tilde{S}_n \geq 0$ (because $\tilde{S}_n \geq -\tilde{S}_n$). In addition, by exploiting the definition of $\tilde{S}_n$ in~\eqref{def_Sn_new}, we also have that $\tilde{S}_n^\Gamma = \tilde{S}_n$, and hence $\tilde{S}_n^{\,\Gamma}\geq 0$.  We have thus proved that $\tilde{S}_n$ is a feasible point of the minimisation problem that defines (up to a logarithm) $E_\kappa\big(\rho^{\otimes n}\big)$. Hence, it follows that
\bb \label{proof_step}
    E_\kappa\big(\rho^{\otimes n}\big) &\leq \log_2 \Tr \tilde{S}_n\\
    &= \log_2 \Tr\left[ \sum_{i=0}^{q-1}\left(S_i^{\otimes n}+\big(S_i^\Gamma\big)^{\otimes n}\right)\right]\\
    &\eqt{(i)} \log_2\left( 2\sum_{i=0}^{q-1}(\Tr S_i)^n\right)\\
    &\leqt{(ii)} \log_2 \left(2q \left(\Tr S_{q-1}\right)^n\right)\\
    &= \log_2(2q)+ n\log_2(\Tr S_{q-1}) \\
    &\eqt{(iii)} \log_2(2q)+ n\, E_{\kappa,q}(\rho)\, .
\ee
Here, in~(i) we exploited~\eqref{partial_trace_trace_preserving}, (ii)~follows from the inequalities $\Tr S_0 \leq \Tr S_1 \leq \ldots \leq \Tr S_{q-1}$ (Lemma~\ref{increasing_trace_lemma}), and finally~(iii) holds by the optimality of $S_{-1},S_0,\ldots,S_{q-1}$. As a consequence of~\eqref{proof_step}, we have that for all fixed $q\in \N^+$
\bb
    E^\infty_\kappa(\rho)=\lim\limits_{n\rightarrow\infty} \frac1n\, E_\kappa\big(\rho^{\otimes n}\big) \leq E_{\kappa,q}(\rho)\, ,
\ee
which establishes~\eqref{cost_upper_bounded_E_kappa_q} and thus concludes the proof.
\end{proof}

\begin{rem}
Theorem~\ref{hierarchies_chi_kappa_infty_thm} also shows that
\bb
E_{\kappa,q}^\infty(\rho) \coloneqq \lim_{n\to\infty} \frac1n\, E_{\kappa,q}\big(\rho^{\otimes n}\big) = E_{\kappa}^\infty(\rho) = \ecost(\rho)\qquad \forall\ \rho,\quad \forall\ q\in \N^+\, .
\ee
\end{rem}

\subsection{A key technical result: proof of Proposition~\ref{convergence_primitive_prop}}

This subsection is devoted to the detailed proof of the crucial Proposition~\ref{convergence_primitive_prop}, which links the $\chi$- and the $\kappa$-hierarchies in a profound way. This connection will be the keystone on which all of our main results are build.

\begin{manualprop}{\ref{convergence_primitive_prop}} 
For all bipartite states $\rho=\rho_{AB}$ on a system of minimal local dimension $d \coloneqq \min\{|A|,|B|\} \geq 2$, and all positive integers $p\in \N^+$, it holds that
\bb
\kappa_p(\rho) \leq \chi_p(\rho) + \left(\frac{d}{2}-1\right) \left(\chi_p(\rho) - \chi_{p-1}(\rho) \right)
\label{convergence_primitive_SM}
\ee
In particular,
\bb
E_{\chi,p}(\rho) \geq \frac{2}{d} E_{\kappa,p}(\rho) + \left(1-\frac{2}{d}\right) E_{\chi,p-1}(\rho)\, .
\label{convergence_logs_SM}
\ee
\end{manualprop}

To obtain the above result we first need a key technical lemma, proved below. Before stating it, let us fix some terminology. Given an arbitrary positive semi-definite bipartite operator $T=T_{AB}$, we set
\bb
d_{\max}\left(T\|\PPT\right) \coloneqq \min\left\{ \Tr L:\ T\leq L\, ,\ \ L\in \PPT\right\} = \min\left\{ \Tr L:\ T\leq L,\ L\geq 0,\ L^\Gamma \geq 0 \right\} .
\label{d_max_PPT}
\ee
It is well known that for all positive semi-definite $T$ it holds that~\cite{VidalTarrach}
\bb
d_{\max}\left(T\|\PPT\right) \leq d \Tr T\, .
\label{robustness_PPT_bound}
\ee
One way to see this is, similarly to Corollary~\ref{max_values_cor}, by first noticing that for the maximally entangled state one has $d_{\max}(\Phi_d\|\PPT) = d$ and then using the fact that any state $T/\Tr T$ can be obtained from $\Phi_d$ by an LOCC transformation, which can never increase $d_{\max}(\cdot\|\PPT)$. For completeness, let us give here a more self-contained proof of Eq.~\eqref{robustness_PPT_bound}.

Up to taking positive linear combinations, it suffices to prove~\eqref{robustness_PPT_bound} when $T=\ketbra{\psi}$ is the rank-one projector onto the pure state $\ket{\psi}=\ket{\psi}_{AB}$ with Schmidt decomposition $\ket{\psi}_{AB} = \sum_{i=1}^d \sqrt{\lambda_i} \ket{e_i}_A\ket{f_i}_B$. Setting $L \coloneqq 
\ketbra{\psi}_{AB} + \sum_{i\neq j} \sqrt{\lambda_i \lambda_j} \ketbraa{e_i}{e_i}_A \otimes \ketbraa{f_j}{f_j}_B$ one sees that on the one hand $L\geq T = \ketbra{\psi}$, while on the other 
\bb
L^\Gamma \simeq \bigoplus_i \lambda_i \oplus \bigoplus_{i<j} \sqrt{\lambda_i \lambda_j} \begin{pmatrix}1&1\\1&1\end{pmatrix}
\geq 0\, .
\ee
Then, using Cauchy--Schwarz and noting that $\sum_i\lambda_i = 1$ one finds that
\bb
\Tr L = 1 + \sum_{i\neq j} \sqrt{\lambda_i\lambda_j} = \sum_i \lambda_i + \sum_{i\neq j} \sqrt{\lambda_i\lambda_j} = \left(\sumno_i \sqrt{\lambda_i}\right)^2 \leq d\, ,
\ee
which concludes the proof of~\eqref{robustness_PPT_bound}.

The key lemma that is needed to prove Proposition~\ref{convergence_primitive_prop} is the following.

\begin{lemma} \label{chi_to_kappa_lemma}
For $p\geq 1$, let $S_{-1},S_0,\ldots, S_{p-1}, S_p$ be optimisers for the SDP~\eqref{chi_p_1_SM} that defines $\chi_p(\rho)$. 
Then there exists a PPT operator $M=M_{AB}\in \PPT$ such that $S_{-1},S_0,\ldots, S_{p-2}, S_{p-1} + M$ 
are valid ansatzes for the SDP that defines $\kappa_p(\rho)$ as in~\eqref{kappa_q_1_SM}, and moreover $\Tr M = d_{\max}\big( \big(S_{p-1}^\Gamma\big)_- \,\big\|\, \PPT\big)$. 
\end{lemma}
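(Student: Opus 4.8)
The plan is to exploit the fact that the given $\chi_p$-optimizers $S_{-1}=\rho,S_0,\ldots,S_{p-1}$ already satisfy \emph{all} of the sandwich constraints $-S_i\le S_{i-1}^\Gamma\le S_i$ ($i=0,\ldots,p-1$) appearing in the SDP \eqref{kappa_q_1_SM} for $\kappa_p(\rho)$; the \emph{only} condition of that SDP which they may fail is the final positivity requirement $S_{p-1}^\Gamma\ge 0$. I would therefore look for a correction operator $M\ge 0$ such that replacing $S_{p-1}$ by $S_{p-1}+M$ repairs exactly this, while keeping the rest intact. Adding $M\ge 0$ never breaks the sandwich constraints: those with $i\le p-2$ do not involve $S_{p-1}$ at all, and for $i=p-1$ one has $-(S_{p-1}+M)\le -S_{p-1}\le S_{p-2}^\Gamma\le S_{p-1}\le S_{p-1}+M$ by $\chi_p$-feasibility together with $M\ge 0$. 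So the whole problem reduces to arranging $(S_{p-1}+M)^\Gamma\ge 0$, i.e.\ $M^\Gamma\ge -S_{p-1}^\Gamma$, together with $M\in\PPT$.

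The key step is to recognise the right auxiliary optimisation. Writing $X_-:=\tfrac12(|X|-X)\ge 0$ for the negative part of a Hermitian $X$, I would impose the \emph{sufficient} condition $M^\Gamma\ge \big(S_{p-1}^\Gamma\big)_-$; since $-S_{p-1}^\Gamma\le \big(S_{p-1}^\Gamma\big)_-$ this implies $M^\Gamma\ge -S_{p-1}^\Gamma$, and moreover
\[
\big(S_{p-1}+M\big)^\Gamma = S_{p-1}^\Gamma + M^\Gamma \ge S_{p-1}^\Gamma + \big(S_{p-1}^\Gamma\big)_- = \big(S_{p-1}^\Gamma\big)_+ \ge 0 .
\]
To bring in $\PPT$-ness and the quantity $d_{\max}$, change variables to $L:=M^\Gamma$: since $\Gamma$ is a trace-preserving involution, the requirement $M\in\PPT$ (i.e.\ $M\ge 0$ and $M^\Gamma\ge 0$) is equivalent to $L\ge 0$ and $L^\Gamma\ge 0$, while $\Tr M=\Tr L$ by \eqref{partial_trace_trace_preserving}. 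Hence the admissible operators $M$ correspond precisely to feasible points $L$ of the SDP in \eqref{d_max_PPT} defining $d_{\max}\big(\big(S_{p-1}^\Gamma\big)_-\,\big\|\,\PPT\big)$. That SDP is feasible (its feasible set is nonempty — e.g.\ by the explicit PPT dominator constructed in the proof of \eqref{robustness_PPT_bound} — closed, and bounded in trace), so, exactly as in Remark~\ref{min_achieved_rem}, its minimum is attained by some $L$; put $M:=L^\Gamma$.

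It then remains to verify that this $M$ does everything claimed, which is routine. From $L\ge 0$ and $L^\Gamma\ge 0$ we get $M=L^\Gamma\ge 0$ and $M^\Gamma=L\ge 0$, so $M\in\PPT$; and $\Tr M=\Tr L=d_{\max}\big(\big(S_{p-1}^\Gamma\big)_-\,\big\|\,\PPT\big)$. The constraints $-S_i\le S_{i-1}^\Gamma\le S_i$ for $i\le p-2$ are inherited verbatim from $\chi_p$-feasibility; the constraint for $i=p-1$ follows from $-S_{p-1}\le S_{p-2}^\Gamma\le S_{p-1}$ and $M\ge 0$ as noted above; and $\big(S_{p-1}+M\big)^\Gamma=S_{p-1}^\Gamma+L\ge \big(S_{p-1}^\Gamma\big)_+\ge 0$ by the displayed computation and $L\ge \big(S_{p-1}^\Gamma\big)_-$. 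Thus $S_{-1},S_0,\ldots,S_{p-2},S_{p-1}+M$ is a feasible point of the SDP \eqref{kappa_q_1_SM} for $\kappa_p(\rho)$, as required. There is no serious obstacle here; the only point demanding a moment's care is the choice of the \emph{sufficient} positivity condition $M^\Gamma\ge \big(S_{p-1}^\Gamma\big)_-$ in place of the weaker $M^\Gamma\ge -S_{p-1}^\Gamma$, since it is this choice that makes $\Tr M$ equal to $d_{\max}\big(\big(S_{p-1}^\Gamma\big)_-\,\big\|\,\PPT\big)$ and so lets the ensuing proof of Proposition~\ref{convergence_primitive_prop} invoke the bound \eqref{robustness_PPT_bound}.
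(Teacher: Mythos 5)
Your proposal is correct and follows essentially the same route as the paper: both take the optimal PPT operator $L$ in the SDP \eqref{d_max_PPT} defining $d_{\max}\big(\big(S_{p-1}^\Gamma\big)_-\,\big\|\,\PPT\big)$, set $M=L^\Gamma$, and check that $\big(S_{p-1}+M\big)^\Gamma = S_{p-1}^\Gamma+L \geq \big(S_{p-1}^\Gamma\big)_+\geq 0$ while the remaining sandwich constraints survive because $M\geq 0$. Your write-up merely makes explicit a few points the paper leaves implicit (attainment of the $d_{\max}$ minimum, trace preservation under $\Gamma$, and the verification of the $i=p-1$ constraint).
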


\begin{proof}
Since $-S_i\leq S_{i-1}^\Gamma \leq S_i$ holds for all $i=0,\ldots, p-1$ by assumption, and moreover $M\geq 0$, the only inequality that is left to verify is $\left(S_{p-1} + M \right)^\Gamma\geq 0$. By definition of $d_{\max}(\cdot\|\PPT)$, 
there exists a PPT operator $L=L_{AB}$ such that $\big(S_{p-1}^\Gamma\big)_-\leq L$ and $\Tr L = d_{\max}\big( \big(S_{p-1}^\Gamma\big)_- \,\big\|\, \PPT\big)$. We can then define $M\coloneqq L^\Gamma$, so that
\bb
\left(S_{p-1} + M \right)^\Gamma = S_{p-1}^\Gamma + L = \big(S_{p-1}^\Gamma\big)_+ - \big(S_{p-1}^\Gamma\big)_- + L \geq \big(S_{p-1}^\Gamma\big)_+ \geq 0\, .
\ee
Since $\Tr M = \Tr L = d_{\max}\big( \big(S_{p-1}^\Gamma\big)_- \,\big\|\, \PPT\big)$, this concludes the proof of the lemma.
\end{proof}




We are now ready to give the complete proof of Proposition~\ref{convergence_primitive_prop}.

\begin{proof}[Proof of Proposition~\ref{convergence_primitive_prop}]
Let $S_{-1},S_0,\ldots, S_{p-1}, S_p$ be optimisers for the SDP~\eqref{chi_p_1_SM} that defines $\chi_p(\rho)$. Then clearly $S_{-1},S_0,\ldots, S_{p-1}$ are valid ansatzes for the SDP that defines $\chi_{p-1}(\rho)$. Therefore,
\bb
\Tr S_{p-1} \geq \chi_{p-1}(\rho) = \chi_p(\rho) - \Delta_p = \Tr S_p - \Delta_p\, ,
\ee
where we defined $\Delta_p \coloneqq \chi_p(\rho) - \chi_{p-1}(\rho)$. We deduce that
\bb
0\leq \delta_p \coloneqq \Tr \left[ S_p - S_{p-1} \right] \leq \Delta_p\, .
\label{delta_vs_Delta}
\ee
The first inequality is a consequence of Lemma~\ref{relation_hierarchies_prop} --- basically, of the fact that $\delta_p = \Tr \big[ S_p - S_{p-1}^\Gamma \big]$, and the operator inside the trace is positive semi-definite. Now, by Lemma~\ref{chi_to_kappa_lemma} there exists a PPT operator $M$ such that $S_{-1},S_0,\ldots, S_{p-2}, S_{p-1} + M$ are legitimate ansatzes for the SDP that defines $\kappa_p(\rho)$ as in~\eqref{kappa_q_1_SM}, and moreover
\bb
\Tr M &= d_{\max}\big( \big(S_{p-1}^\Gamma\big)_- \,\big\|\, \PPT\big) \\
&\leqt{(i)} d \Tr\!\big[ \big(S_{p-1}^\Gamma\big)_- \big] \\
&= d \, \frac{\big\|S_{p-1}^\Gamma\big\|_1 - \Tr S_{p-1}}{2} \\
&\leqt{(ii)} d\, \frac{\Tr S_p - \Tr S_{p-1}}{2} \\
&= \frac{d \delta_p}{2}\, ,
\label{trace_M_estimate}
\ee
where the bound in~(i) follows from~\eqref{robustness_PPT_bound}, and that in~(ii) comes from Lemma~\ref{variational_trace_norm_lemma} applied to the operator inequalities $-S_p \leq S_{p-1}^\Gamma\leq S_p$. (The latter is, in fact, an equality if the operators $S_i$ are optimisers for $\chi_p(\rho)$; however, we shall not make use of this observation.)

Putting all together, we obtain that
\bb
\kappa_p(\rho) &\leq \Tr \left[ S_{p-1}+ M \right] \\
&= \Tr S_p - \delta_p + \Tr M \\
&\eqt{(iii)} \chi_p(\rho) - \delta_p + \Tr M \\
&\leqt{(iv)} \chi_p(\rho) + \left(\frac{d}{2}-1\right) \delta_p \\
&\leqt{(v)} \chi_p(\rho) + \left(\frac{d}{2}-1\right) \Delta_p \\
&= \chi_p(\rho) + \left(\frac{d}{2}-1\right) \left(\chi_p(\rho) - \chi_{p-1}(\rho)\right) ,
\ee
which reproduces~\eqref{convergence_primitive_SM}. Here, (iii)~follows by the optimality of $S_{-1},S_0,\ldots, S_p$ for $\chi_p(\rho)$, in~(iv) we employed~\eqref{trace_M_estimate}, and finally~(v)~follows from~\eqref{delta_vs_Delta}.

Rearranging, taking the logarithms, and using the concavity of the $\log_2$ function, one obtains also~\eqref{convergence_logs_SM}. 
\end{proof}

\subsection{What have we achieved so far?}
\phantomsection

\subsubsection{Convergence}

Proposition~\ref{convergence_primitive_prop} is the core technical finding on which all of our main results rest. Now that we have proved it, most of the remaining proofs are comparatively 
straightforward. To show the importance of Proposition~\ref{convergence_primitive_prop}, it is instructive to pause for a second and explore some of its immediate consequences. In this subsection we do precisely that. 

We first show that Proposition~\ref{convergence_primitive_prop} already implies~\eqref{convergence_simplified}, i.e.\ that the limits of the $\chi$- and $\kappa$-hierarchies coincide and yield the true zero-error PPT entanglement cost. In fact, taking the limit $p\to\infty$ on both sides of~\eqref{convergence_primitive_SM} and using the fact that both $\lim_{p\to\infty} \chi_p(\rho)$ and $\lim_{p\to\infty} \kappa_p(\rho)$ exist by monotonicity (Proposition~\ref{relation_hierarchies_prop}) shows that
\bb
\lim_{p\to\infty} \kappa_p(\rho) \leq \lim_{p\to\infty} \chi_p(\rho)\, .
\ee
Combined with the reverse inequality, which is an obvious consequence of~\eqref{relation_hierarchies_SM}, one deduces that indeed $\lim_{p\to\infty} \kappa_p(\rho) = \lim_{p\to\infty} \chi_p(\rho)$. Now, going back to Theorem~\ref{hierarchies_chi_kappa_infty_thm}, and in particular to~\eqref{hierarchies_chi_kappa_infty}, we see that this implies that in fact 
\bb
\lim_{p\to\infty} E_{\kappa, p}(\rho) = \lim_{p\to\infty} E_{\chi, p}(\rho) = E_\kappa^\infty(\rho) = \ecost(\rho)\,, 
\label{convergence_SM}
\ee
i.e.\ the four central quantities appearing in~\eqref{hierarchies_chi_kappa_infty} coincide for all states. We will re-derive this equalities with some additional guarantees on the speed of convergence while proving Theorem~\ref{convergence_thm} below.

The identity in~\eqref{convergence_SM} is remarkable because it provides a precise connection between the zero-error PPT cost, given by the regularisation of $E_\kappa$ and thus by a limit $n\to\infty$ over the number of copies $n$, and the limits \emph{of the hierarchies}, intended as limits $p\to\infty$ on the hierarchy level. The fact that there would be any sort of connection between the asymptotic limit in the number of copies and that in the hierarchy level was a priori totally unclear, and it should be regarded as one of the neater results of our approach. However, Eq.~\eqref{convergence_SM} is still not completely satisfactory from a computational standpoint, because it does not provide any rigorous guarantee on the speed of convergence to the limit. Having such guarantees (either on the limit in $n$ or on that in $p$) would turn~\eqref{convergence_SM} into an \emph{algorithm}, because it would tell us at which $n$ or $p$ we would need to stop in order to achieve a certain approximation. If that could be done, then, due to the fact that $E_\kappa\big(\rho^{\otimes n}\big)$, $E_{\chi,p}(\rho)$, and $E_{\kappa,p}(\rho)$ are all computable via SDPs, we would know how to approximate $\ecost(\rho)$ up to an arbitrary accuracy. In short, whether the algorithm we have described works or not depends on whether or not we can say something about the speed of convergence to the limit in either $n$ or $p$. Furthermore, even if the algorithm does exist, then whether it is efficient or not depends on the actual speed of convergence to the limit, either in $n$ or in $p$. 

The key observation underpinning our entire approach is that while there seems to be no easy way of achieving either of the above two goals when dealing with the limit in the number of copies $n$, the situation changes dramatically for the better when considering the limit in the hierarchy level $p$. Not only will we be able to give universal, uniform bounds on the convergence speed, but the resulting convergence will also turn out to be exponentially fast (Theorem~\ref{convergence_thm}), resulting in an \emph{efficient} algorithm to calculate the zero-error PPT entanglement cost on all states (Theorem~\ref{efficient_algorithm_thm}).

\subsubsection{The local qubit case}

Proposition~\ref{convergence_primitive_prop} also implies an immediate simple solution to the problem of computing $\ecost(\rho_{AB})$ when either $A$ or $B$ is a single-qubit system. In this case, it turns out that both hierarchies collapse at the first level ($p=1$), entailing that $E_\chi = E_\kappa = \ecost$ already gives the true zero-error PPT cost.

\begin{manualcor}{\ref{qubit_case_cor}}
Let $AB$ be a bipartite quantum system in which either $A$ or $B$ is a single qubit, i.e.\ $d=\min\{|A|,|B|\}=2$. Then for all states $\rho=\rho_{AB}$ it holds that
\bb
\ecost(\rho) = E_{\chi}(\rho) = E_{\kappa}(\rho)\, , 
\label{qubit_case_SM}
\ee
where $E_\chi$ and $E_\kappa$ are defined by~\eqref{E_chi_SM} and~\eqref{E_kappa_SM}, respectively.
\end{manualcor}

\begin{note}
Remember that we identify $E_{\chi} = E_{\chi,1}$ (cf.~\eqref{E_chi_SM} and~\eqref{E_chi_p_SM}) and $E_\kappa = E_{\kappa,1}$ (cf.~\eqref{E_kappa_SM} and~\eqref{E_kappa_q_SM}).
\end{note}

\begin{proof}
Writing down~\eqref{convergence_primitive_SM} for $d=2$ and $p=1$ shows that $\kappa_1(\rho) \leq \chi_1(\rho)$. Combining this with Theorem~\ref{hierarchies_chi_kappa_infty_thm} then implies the claim.
\end{proof}

\begin{rem}
The above Corollary~\ref{qubit_case_cor} does not tell us what happens to the zeroth level of the $\chi$-hierarchy, namely, the logarithmic negativity. An older result by Ishizaka~\cite{Ishizaka2004} states that all \emph{two-qubit} states have zero bi-negativity, implying, by Lemma~\ref{normalisation_lemma}, that
\bb
|A| = |B| = 2\qquad \Longrightarrow\qquad \ecost(\rho) = E_N(\rho) = \log_2 \big\|\rho^\Gamma\big\|_1 \qquad \forall\ \rho = \rho_{AB}\, .
\ee
When e.g.\ $|A|=2$ but $|B|>2$, Corollary~\ref{qubit_case_cor} guarantees that $\ecost = E_\chi = E_\kappa$, but it would be even better if one could establish a closed-form expression for this quantity. One way to achieve this, for instance, would be by generalising Ishizaka's result so as to encompass all qubit-qudit systems. If that could be done, then again we would find that $\ecost = E_N$, which would be our sought closed-form expression. We leave a full understanding of the role of the logarithmic negativity in the qubit-qudit case as an open problem.
\end{rem}


\subsection{Exponential convergence: proof of Theorem~\ref{convergence_thm}} \label{subsec_exp_convergence}

This subsection is devoted to the proof of our first main result, Theorem~\ref{convergence_thm}, which will be seen to be a relatively straightforward consequence of the key Proposition~\ref{convergence_primitive_prop}.

\begin{manualthm}{\ref{convergence_thm}} \it
For all bipartite states $\rho=\rho_{AB}$ on a system of minimal local dimension $d\coloneqq \min\left\{|A|,|B|\right\}\geq 2$, and all positive integers $p\in \N^+$, it holds that
\bb
E_{\chi,p}(\rho) \leq E_{c,\,\ppt}^{\exact}(\rho) \leq E_{\kappa,p}(\rho) \leq E_{\chi,p}(\rho) + \log_2 \frac{1}{1-\left(1 - \frac{2}{d}\right)^p}\, .
\label{efficient_algorithm_key_inequality_SM}
\ee
In particular,
\bb
\ecost(\rho) = E_\kappa^\infty(\rho) = \lim_{p\to\infty} E_{\chi,p}(\rho) = \lim_{p\to\infty} E_{\kappa,p}(\rho)\, ,
\label{convergence_simplified_SM}
\ee
with the convergence in $p$ being exponentially fast uniformly on $\rho$.
\end{manualthm}

\begin{proof}
We give here a slightly modified and extended version of the argument presented in the main text. The reason for doing so is to incorporate also the convergence of the $\kappa$-hierarchy in a single statement. To deal with the $\chi$- and $\kappa$-hierarchies simultaneously, it is useful to set by convention $\kappa_0(\rho)\coloneqq d$. 
Now, define the numbers
\bb
\e_p(\rho) \coloneqq 1 - \frac{\chi_p(\rho)}{\kappa_p(\rho)}\, ,
\label{epsilon_p_SM}
\ee
which satisfy $\e_p\in [0,1]$ for all $p\in \N$ due to Proposition~\ref{relation_hierarchies_prop}. Note that this definition differs from that employed in the main text (Eq.~\eqref{convergence_thm_proof_eq2}) --- in particular, the quantity in~\eqref{epsilon_p_SM} is larger than that in~\eqref{convergence_thm_proof_eq2}. We can now write
\bb
1 &\leqt{(i)} \frac{d}{2}\,\frac{\chi_p(\rho)}{\kappa_p(\rho)} - \left(\frac{d}{2} -1\right) \frac{\chi_{p-1}(\rho)}{\kappa_p(\rho)} \\
&\leqt{(ii)} \frac{d}{2}\,\frac{\chi_p(\rho)}{\kappa_p(\rho)} - \left(\frac{d}{2} -1\right) \frac{\chi_{p-1}(\rho)}{\kappa_{p-1}(\rho)} \\
&\eqt{(iii)} 1 - \frac{d}{2}\,\e_p + \left(\frac{d}{2} -1\right)\e_{p-1}\, .
\label{epsilon_p_key_inequality}
\ee
The justification of the above steps is as follows: the inequality in~(i) is just a rephrasing of that established by Proposition~\ref{convergence_primitive_prop}, obtained by diving both sides by $\kappa_p(\rho)$ (remember that $\kappa_p(\rho)\geq 1$ by Remark~\ref{nonzero_rem}); in~(ii) we observed that $\kappa_p(\rho) \leq \kappa_{p-1}(\rho)$ due to Proposition~\ref{relation_hierarchies_prop} (this is also true for $p=1$ due to Lemma~\ref{normalisation_lemma}) and $d\geq 2$; finally, in~(iii) we employed the definition of $\e_p$ in~\eqref{epsilon_p_SM}.

We now see that~\eqref{epsilon_p_key_inequality} can be readily massaged into
\bb
\e_p \leq \left(1-\frac{2}{d}\right) \e_{p-1}\, .
\ee
Iterating this $p$ times and using the fact that $\e_0\leq 1$ and $\e_p\geq 0$ yields immediately
\bb
0\leq \e_p \leq \left(1-\frac{2}{d}\right)^p ,
\ee
which in turn can be rephrased as
\bb
\kappa_p(\rho) \leq \frac{\chi_p(\rho)}{1- \left(1-\frac{2}{d}\right)^p}\, .
\ee
Taking the logarithms of both sides proves the last inequality in~\eqref{efficient_algorithm_key_inequality_SM}. The first three inequalities are already known from Theorem~\ref{hierarchies_chi_kappa_infty}. 

Taking the limit $p\to\infty$ establishes also the convergence of both hierarchies to the true zero-error PPT entanglement cost (Eq.~\eqref{convergence_simplified_SM}). Note that the speed of convergence depends on $d$ but not on $\rho$.
\end{proof}

The above result is central in our approach, because it provides a quantitative guarantee on what level of the hierarchies we have to resort to in order to obtain a prescribed approximation of the true value of the zero-error PPT entanglement cost. Before we proceed with the description of an algorithm that builds on this observation, we take note of two remarkable consequences of Theorem~\ref{convergence_thm}, namely, the full addivity of the zero-error PPT entanglement cost and its continuity. 

Intuitively, the former means that the cheapest way to generate many copies of a state of the form $\rho_{AB}\otimes \omega_{A'B'}$, where $A$ and $A'$ belong to Alice, and $B$ and $B'$ to Bob, is to manufacture many copies of $\rho_{AB}$ and $\omega_{A'B'}$ \emph{separately}: in other words, there is no advantage to be gained in considering joint protocols. The continuity of the zero-error PPT cost, instead, allows us to estimate the cost of a state that is close enough to one for which the cost is known. The proofs of both of these facts are paradigmatic examples of how the knowledge we have gathered so far allows us to say a lot about the problem of zero-error PPT entanglement dilution even without a closed-form expression for the corresponding cost.

\begin{cor}
The zero-error entanglement cost under PPT operations is fully tensor additive, i.e.\ for all pairs of bipartite states $\rho_{AB}, \omega_{A'B'}$ it holds that
\bb
E_{c,\,\ppt}^\exact\left(\rho_{AB} \otimes \omega_{A'B'}\right) = E_{c,\,\ppt}^\exact\left(\rho_{AB}\right) + E_{c,\,\ppt}^\exact\left(\omega_{A'B'}\right) .
\ee
\end{cor}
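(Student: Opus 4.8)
The plan is to derive tensor-additivity of $\ecost$ directly from the exponential-convergence statement of Theorem~\ref{convergence_thm}, using the already-established additivity of the $\chi$-hierarchy (Proposition~\ref{Prop_additivity}). The key point is that $\ecost$ is sandwiched between $E_{\chi,p}$ and $E_{\chi,p} + \delta_p$, where $\delta_p \coloneqq \log_2\tfrac{1}{1-(1-2/d)^p} \to 0$, and additivity of $\ecost$ only fails by an error that is controlled uniformly by $\delta_p$, which we then send to zero.

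Concretely, write $\sigma \coloneqq \rho_{AB}\otimes\omega_{A'B'}$, and let $d$, $d'$, and $d''$ be the minimal local dimensions of $AB$, $A'B'$, and the bipartition $(AA')(BB')$ of $\sigma$, respectively (note $d'' \leq d\,d'$, but we do not even need the precise value — only that $d''$ is finite and fixed). Fix $p\in\N^+$. Applying Theorem~\ref{convergence_thm} to $\sigma$ on the bipartition $(AA')|(BB')$ gives $E_{\chi,p}(\sigma) \leq \ecost(\sigma) \leq E_{\chi,p}(\sigma) + \delta_p''$, where $\delta_p'' \coloneqq \log_2\tfrac{1}{1-(1-2/d'')^p}$. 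Likewise $E_{\chi,p}(\rho) \leq \ecost(\rho) \leq E_{\chi,p}(\rho) + \delta_p$ and $E_{\chi,p}(\omega) \leq \ecost(\omega) \leq E_{\chi,p}(\omega) + \delta_p'$. By the additivity of $E_{\chi,p}$ (Proposition~\ref{Prop_additivity}), $E_{\chi,p}(\sigma) = E_{\chi,p}(\rho) + E_{\chi,p}(\omega)$. Combining these, on one side
\bb
\ecost(\sigma) \geq E_{\chi,p}(\sigma) = E_{\chi,p}(\rho) + E_{\chi,p}(\omega) \geq \ecost(\rho) + \ecost(\omega) - \delta_p - \delta_p'\, ,
\ee
and on the other side
\bb
\ecost(\sigma) \leq E_{\chi,p}(\sigma) + \delta_p'' = E_{\chi,p}(\rho) + E_{\chi,p}(\omega) + \delta_p'' \leq \ecost(\rho) + \ecost(\omega) + \delta_p''\, .
\ee
Hence $\big|\ecost(\sigma) - \ecost(\rho) - \ecost(\omega)\big| \leq \max\{\delta_p + \delta_p',\, \delta_p''\}$ for every $p\in\N^+$. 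Since $d,d',d'' \geq 2$ are all fixed and finite, each of $\delta_p,\delta_p',\delta_p''$ tends to $0$ as $p\to\infty$; taking the limit yields $\ecost(\sigma) = \ecost(\rho) + \ecost(\omega)$, which is the claim.

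There is essentially no obstacle here: the entire content has been front-loaded into Theorem~\ref{convergence_thm} (exponential convergence, uniform in the state) and Proposition~\ref{Prop_additivity} (additivity of $E_{\chi,p}$). The only mild subtlety worth a sentence in the write-up is that one must invoke Theorem~\ref{convergence_thm} on three different bipartite systems with three different minimal local dimensions, and observe that all three error terms vanish in the $p\to\infty$ limit; but since these dimensions are fixed the moment $\rho$ and $\omega$ are given, this is immediate. (Subadditivity $\ecost(\sigma)\leq\ecost(\rho)+\ecost(\omega)$ is also obvious operationally — concatenate independent dilution protocols — so in principle only the superadditivity direction needs the above argument, but it is cleanest to present both bounds uniformly via the sandwich.)
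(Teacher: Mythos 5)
Your proof is correct and follows essentially the same route as the paper, which simply takes the limit $p\to\infty$ of the additivity identity $E_{\chi,p}(\rho\otimes\omega)=E_{\chi,p}(\rho)+E_{\chi,p}(\omega)$ (Proposition~\ref{Prop_additivity}) using the convergence $\ecost=\lim_{p\to\infty}E_{\chi,p}$ from Theorem~\ref{convergence_thm}; your explicit tracking of the error terms $\delta_p,\delta_p',\delta_p''$ is just a more verbose rendering of that limit. One harmless slip: the parenthetical claim $d''\leq d\,d'$ is actually backwards in general (one has $d\,d'\leq d''=\min\{|A||A'|,|B||B'|\}$, with strict inequality possible, e.g.\ $|A|=|B'|=2$, $|B|=|A'|=3$), but as you note only the finiteness of $d''$ is needed, so the argument stands.
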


\begin{proof}
It suffices to take the limit $p\to\infty$ of~\eqref{E_chi_additivity_SM} using~\eqref{convergence_simplified_SM}.
\end{proof}

\begin{cor}[(Continuity of the zero-error PPT entanglement cost)]
Let $\rho,\rho'$ be two quantum states on a bipartite system $AB$ with minimal local dimension $d\coloneqq \min\{|A|,|B|\}$. Let $\e\coloneqq \frac12 \left\|\rho-\rho'\right\|_1$. Then
\bb
\left|2^{\ecost(\rho)} - 2^{\ecost(\rho')} \right| \leq 2d\e\, ,
\label{continuity_exponential_cost}
\ee
and therefore
\bb
\left|\ecost(\rho) - \ecost(\rho') \right| \leq \log_2\left(1+2d\e\right) \leq 2\,(\log_2 e)\, d\e\, .
\label{continuity_cost}
\ee
\end{cor}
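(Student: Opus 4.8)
The plan is to transport the uniform continuity estimate for the finite-level quantities $\chi_p$ through the limit $p\to\infty$. Recall from Theorem~\ref{convergence_thm} (see Eq.~\eqref{convergence_simplified_SM}) that for every bipartite state $\sigma$ one has $2^{\ecost(\sigma)} = \lim_{p\to\infty}\chi_p(\sigma)$, with the limit existing by the monotonicity established in Proposition~\ref{relation_hierarchies_prop}. First I would invoke Proposition~\ref{continuity_properties_prop}, specialised to $F=\chi_p$, which states that $\bigl|\chi_p(\rho)-\chi_p(\rho')\bigr|\le 2d\e$ for \emph{every} $p\in\N$, the crucial point being that the Lipschitz constant $2d\e$ does not depend on the hierarchy level $p$. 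Letting $p\to\infty$ on both sides and substituting $\lim_{p\to\infty}\chi_p(\rho) = 2^{\ecost(\rho)}$ and $\lim_{p\to\infty}\chi_p(\rho') = 2^{\ecost(\rho')}$ yields~\eqref{continuity_exponential_cost} at once.

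For the logarithmic bound~\eqref{continuity_cost} I would reproduce the final step of the proof of Proposition~\ref{continuity_properties_prop}. Since $\chi_p(\rho)\ge 1$ for all $p$ by Remark~\ref{nonzero_rem} (equivalently, since the cost is a nonnegative rate), we have $2^{\ecost(\rho)}\ge 1$; hence the additive inequality $2^{\ecost(\rho')}\le 2^{\ecost(\rho)}+2d\e$ can be rewritten as $2^{\ecost(\rho')}\le 2^{\ecost(\rho)}\bigl(1+2d\e\bigr)$. Taking $\log_2$ of both sides gives $\ecost(\rho')-\ecost(\rho)\le\log_2(1+2d\e)$; exchanging the roles of $\rho$ and $\rho'$ (the hypothesis on $\e$ being symmetric) yields the absolute value, and the elementary estimate $\log_2(1+x)\le(\log_2 e)\,x$ for $x\ge 0$ gives the last bound in~\eqref{continuity_cost}.

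I do not expect a genuine obstacle here: the whole argument is a one-line passage to the limit, and all the real work is already done in Proposition~\ref{continuity_properties_prop} and Theorem~\ref{convergence_thm}. The only subtlety worth spelling out explicitly is the uniformity in $p$ of the Lipschitz constant in Proposition~\ref{continuity_properties_prop}; it is precisely this $p$-independence that licenses interchanging the limit $p\to\infty$ with the difference, and without it the estimate would not survive the limit.
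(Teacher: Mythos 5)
Your argument is correct and is essentially the same as the paper's: the paper likewise obtains \eqref{continuity_exponential_cost} and \eqref{continuity_cost} by taking the limit $p\to\infty$ of \eqref{continuity_F} and \eqref{continuity_f} for $F=\chi_p$, $f=E_{\chi,p}$, using \eqref{convergence_simplified_SM} to identify the limits with $2^{\ecost}$ and $\ecost$. Your explicit remark on the $p$-independence of the Lipschitz constant is exactly the point that makes the passage to the limit legitimate.
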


\begin{proof}
It suffices to take the limit $p\to\infty$ of~\eqref{continuity_F} and~\eqref{continuity_f} written for $F=\chi_p$ and $f=E_{\chi,p}$, using Theorem~\ref{convergence_thm}, and in particular~\eqref{convergence_simplified_SM}, to simplify the left-hand sides.
\end{proof}

\subsection{An explicit algorithm to compute the zero-error PPT cost: proof of Theorem~\ref{efficient_algorithm_thm}}

We will now exploit the above Theorem~\ref{convergence_thm} to prove the last of our main result, Theorem~\ref{efficient_algorithm_thm}. To this end, we will design an efficient algorithm that computes the zero-error PPT entanglement cost up to any desired accuracy.

\begin{manualthm}{\ref{efficient_algorithm_thm}}[(Efficient algorithm to compute the cost)] \it
Let $AB$ be a bipartite quantum system of total dimension $D \coloneqq |AB| = |A| |B|$ and minimal local dimension $d\coloneqq \min\{|A|,|B|\}$. Then there exists an algorithm that for an arbitrary state $\rho = \rho_{AB}$ computes $2^{\ecost(\rho)}$ up to a multiplicative error $\e$ --- and hence, a fortiori, $\ecost(\rho)$ up to an additive error $\e$ --- 
in time
\bb
\pazocal{O}\left((dD)^6\, \mathrm{polylog}\big(p,D,d,1/\e\big) \right) = \pazocal{O}\left((d D)^{6+o(1)} \operatorname{polylog}(1/\e) \right) .
\label{efficient_algorithm_time_SM}
\ee
\end{manualthm}


\begin{proof}
To start, choose
\bb
p \coloneqq \ceil{\frac{\ln (2d/\e)}{\ln \frac{d}{d-2}}}\, ,
\label{efficient_algorithm_proof_eq2}
\ee
so that
\bb
\left(1-\frac2d\right)^p \leq \frac{\e}{2d}\, .
\ee
Due to~\eqref{efficient_algorithm_key_inequality_SM}, we then know that
\bb
1-\frac{\e}{2d} \leq \frac{\chi_p(\rho)}{2^{\ecost(\rho)}} \leq 1\, ,
\ee
which implies that
\bb
\left| \chi_p(\rho) - 2^{\ecost(\rho)} \right| \leq \frac{\e}{2d}\, 2^{\ecost(\rho)} \leq \frac{\e}{2}\, ,
\label{efficient_algorithm_proof_eq3}
\ee
where the last equality is a consequence of the fact that $\ecost(\rho)\leq \log_2 d$ for all $\rho$ by Lemma~\ref{normalisation_lemma}.

Now, $\chi_p(\rho)$ is given by an SDP and hence it can be computed efficiently up to additive error $\e/2$. We will look at the time it takes to carry out this computation in a moment, but for the time being assume that it yields an estimator $\widehat{\chi}_p(\rho)$ with the property that 
\bb
\left| \widehat{\chi}_p(\rho) - \chi_p(\rho) \right| \leq \frac{\e}{2}\, .
\label{efficient_algorithm_proof_eq4}
\ee
Then by combining~\eqref{efficient_algorithm_proof_eq3} and~\eqref{efficient_algorithm_proof_eq4} one obtains that
\bb
\left| \widehat{\chi}_p(\rho) - 2^{\ecost(\rho)} \right| \leq \e\, .
\ee
This yields the claimed computation of $\ecost(\rho)$ up to an additive error $\e$. To estimate the error one incurs when computing $\ecost(\rho)$, it suffices to observe that
\bb
\left|\,\ecost(\rho) - \log_2 \widehat{\chi}_p(\rho)\, \right| \leq (\log_2 e) \left| \,\widehat{\chi}_p(\rho) - 2^{\ecost(\rho)}\, \right| \leq \e \log_2 e\, ,
\ee
so the error on $\ecost(\rho)$ is also $\e$, up to a constant. In the above derivation, we used the estimate $|a-b|\leq (\log_2 e) \big|2^a - 2^b\big|$, valid for all $a,b\geq 0$.

We still need to estimate the running time of the SDP that computes the estimator $\widehat{\chi}_p(\rho)$. To do that, we use the best known SDP solvers, which have time complexity~\cite{LSW2015} (see also the quantum-friendly review in~\cite{vanApeldoorn2020})
\bb
\pazocal{O}\left(m \left(m^2 + n^\omega + mns\right) \mathrm{polylog}(m,n,R,1/\e) \right)
\label{best_time_complexity_SDPs}
\ee
once the SDP is put in the standard form
\bb
\begin{array}{rl}
\text{max.} & \Tr CX \\
\text{s.t.} & X\geq 0, \\
& \Tr A_j X \leq b_j\, ,\quad j=1,\ldots, m\, ,
\end{array}
\label{SDP_standard_form}
\ee
where $\|A_j\|_\infty,\|C\|_\infty\leq 1$ for $j=1,\ldots,m$, all matrices are $n\times n$, $R$ is any upper bound on the optimal value, $\omega\in [2,2.373)$ is the matrix multiplication exponent, and $s$ is the sparsity, i.e.\ maximum number of non-zero entries in any row of the input matrices $A_j,C$.

What we have to do now is to cast the SDP for $\chi_p(\rho)$ (Eq.~\eqref{chi_p_1_SM}) in the standard form~\eqref{SDP_standard_form}. This can be done by 
restricting the variable $X$ to be of the form
\bb
X \longrightarrow \bigoplus_{i=0}^p \begin{pmatrix} S_i & S_{i-1}^\Gamma \\ S_{i-1}^\Gamma & S_i \end{pmatrix} ,
\label{forcing_X}
\ee
where $S_{-1} = \rho$. To see why, note that for a matrix of the above block form positive semi-definiteness implies that
\bb
\begin{pmatrix} S_i & S_{i-1}^\Gamma \\ S_{i-1}^\Gamma & S_i \end{pmatrix} \geq 0\qquad \forall\ i=0,\ldots, p\, ,
\ee
which by a simple unitary rotation can be rephrased as
\bb
\frac12 \begin{pmatrix} \id & \id \\ \id & -\id \end{pmatrix} \begin{pmatrix} S_i & S_{i-1}^\Gamma \\ S_{i-1}^\Gamma & S_i \end{pmatrix} \begin{pmatrix} \id & \id \\ \id & -\id \end{pmatrix}^\dag = \begin{pmatrix} S_i + S_{i-1}^\Gamma & 0 \\ 0 & S_i - S_{i-1}^\Gamma \end{pmatrix} \geq 0\qquad \forall\ i=0,\ldots, p\, ,
\ee
i.e.\ $S_i \pm S_{i-1}^\Gamma \geq 0$ for all $i=0,\ldots,p$, matching the positive semi-definite constraints in~\eqref{chi_p_1_SM}.

The above reasoning tells us that it is a good idea to try to enforce the structure~\eqref{forcing_X} on $X$. This fixes the size of the matrices to be 
\bb
n=2(p+1)D = \pazocal{O}(pD)\, ,
\label{estimate_n}
\ee
where $D=|A||B|$ is the total dimension. The problem therefore becomes that of ensuring that $X$ has the structure in~\eqref{forcing_X}. To this end, it is useful to think of $X$ as a block matrix of size $(p+1)\times (p+1)$, where each block has size $2D\times 2D$. We need to make sure that:
\begin{enumerate}[(1)]
\item all the $p(p+1)$ off-diagonal blocks are zero;
\item inside each diagonal block $X_{ii}$ ($i=0,\ldots,p$), which can in turn be thought of as a $2\times 2$ block matrix $X_{ii} = \lsmatrix H_i & K_i \\ L_i & M_i \rsmatrix$ with each block having size $D$, we have that $H_i=M_i$ and $K_i=L_i$;
\item for all $i=1,\ldots,p$, $K_i = H_{i-1}^\Gamma$; and
\item $K_0 = \rho^\Gamma$.
\end{enumerate}
All of these constraints are equalities instead of inequalities, but any equality can be written as two inequalities, and we do not count factors of $2$. How many linear equality constraints on $X$ do we need?
\begin{enumerate}[(1):]
\item $p(p+1)D^2$ constraints;
\item $2(p+1)D^2$ constraints;
\item $p D^2$ constraints;
\item $D^2$ constraints.
\end{enumerate}
We are therefore dealing with a total of $(p+1)(p+3)D^2$ linear equality constraints. It is therefore clear that we can set 
\bb
m=\pazocal{O}(p^2D^2)\, .
\label{estimate_m}
\ee

Now, constraints~(1)--(3) are of the form $\Tr A X = 0$, where $A$ has only two non-zero elements on the same row or column, while the constraints~(4) are of the form $\Tr AX=b$, where $A$ has a single non-zero entry (equal to $1$). Note also that since we want $\Tr C X = \Tr S_p$, $C$ needs to be block diagonal, with the first $p$ diagonal blocks equal to zero and the last one equal to $\id/2$ (all blocks are of size $2D$). We can thus set
\bb
s = \pazocal{O}(1)\, .
\label{estimate_s}
\ee
As for $R$, by Lemma~\ref{normalisation_lemma} we already know that 
\bb
R=d
\label{estimate_R}
\ee
is an upper bound on $\chi_p(\rho)$. Plugging the values~\eqref{estimate_n}--\eqref{estimate_R} into~\eqref{best_time_complexity_SDPs} yields a time complexity
\bb
\pazocal{O}\left((pD)^6\, \mathrm{polylog}\big(p,D,d,1/\e\big) \right) &= \pazocal{O}\left((dD)^6\, \mathrm{polylog}\big(p,D,d,1/\e\big) \right) \\
&= \pazocal{O}\left((d D)^{6+o(1)} \operatorname{polylog}(1/\e) \right),
\label{time_complexity_E_chi_p}
\ee
where we used the fact that $p = \pazocal{O}\left(d \log (d/\e)\right)$ due to~\eqref{efficient_algorithm_proof_eq2}.
\end{proof}

\begin{rem}
It can be verified that the time complexity of solving the SDP that defines the negativity $E_N$ is $\pazocal{O}\big(D^{6+o(1)} \mathrm{polylog}(1/\e)\big)$. While this is smaller than the complexity for computing $\ecost$, it is only marginally so --- the difference being a mere factor $\mathrm{poly}(d)$.
\end{rem}


\subsection{Open problem: hierarchy collapse}

Although we showed that evaluating the limit $\lim_{p\to\infty}E_{\chi,p}(\rho)$ to any desired precision is computationally not much more demanding than evaluating $E_{\chi,p}$ for a fixed level of the hierarchy $p$, one may still wonder --- perhaps from an analytical or aesthetic standpoint --- whether computing the limit is truly necessary, or whether the hierarchy collapses at a certain point and there exists a finite $p^\star$ such that $E_{\chi,p} = E_{\chi,p^\star} \; \forall p \geq p^\star$. An analogous question can be asked for the $\kappa$-hierarchy.

One immediate consequence of Proposition~\ref{convergence_primitive_prop} is that, if $E_{\chi,p} (\rho) = E_{\chi,p-1} (\rho)$ for some $p$, then $E_{\kappa,p} (\rho) \leq E_{\chi, p}(\rho)$, which would imply a complete collapse of both hierarchies, that is,
\begin{equation}\begin{aligned}
	\ecost(\rho) = E_{\chi,p} (\rho) = E_{\kappa,p} (\rho)\, ,
\end{aligned}\end{equation}
due to Proposition~\ref{relation_hierarchies_prop}.

We were not able to confirm nor disprove that this happens. However, numerical evidence suggests that the hierarchies collapse, and indeed they do so already at the 
second level of the $\chi$-hierarchy. 
This leads us to posit the following conjecture.

\begin{cj}
For all states $\rho = \rho_{AB}$, the $\chi$-quantities defined by~\eqref{chi_p_1_SM} satisfy 
that $E_{\chi,3} (\rho) = E_{\chi,2}(\rho)$. As a consequence, the $\chi$- and $\kappa$-hierarchies collapse and $\ecost(\rho) = E_{\chi,2}(\rho) = E_{\kappa,3}(\rho)$.
\end{cj}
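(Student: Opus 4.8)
\medskip
\noindent We outline a route towards the conjecture; the gap that prevents it from being a theorem is indicated at the end.

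\emph{Reduction to one inequality.} By Proposition~\ref{relation_hierarchies_prop} we always have $\chi_2(\rho)\leq\chi_3(\rho)\leq\kappa_3(\rho)$, while Lemma~\ref{cost_lower_bounded_E_chi_p_lemma} and Theorem~\ref{hierarchies_chi_kappa_infty_thm} sandwich $\chi_2(\rho)\leq 2^{\ecost(\rho)}\leq\kappa_3(\rho)$. Hence the whole conjecture follows from the single reverse inequality
\bb
\kappa_3(\rho)\leqt{?}\chi_2(\rho)\qquad\forall\ \rho=\rho_{AB}\,,
\label{conj_reduction}
\ee
since \eqref{conj_reduction} forces $\chi_2(\rho)=\chi_3(\rho)=\kappa_3(\rho)=2^{\ecost(\rho)}$, which upon taking logarithms is exactly the stated collapse $\ecost(\rho)=E_{\chi,2}(\rho)=E_{\chi,3}(\rho)=E_{\kappa,3}(\rho)$. (Conversely, $E_{\chi,3}=E_{\chi,2}$ already yields \eqref{conj_reduction} by Proposition~\ref{convergence_primitive_prop} applied with $p=3$, so the two formulations are equivalent.)

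\emph{A candidate construction.} Fix optimisers $S_{-1}=\rho,\,S_0,\,S_1,\,S_2$ of the SDP~\eqref{chi_p_1_SM} defining $\chi_2(\rho)$; by Lemma~\ref{variational_trace_norm_lemma} we may take $S_2=\bigl|S_1^\Gamma\bigr|$, so that $\chi_2(\rho)=\bigl\|S_1^\Gamma\bigr\|_1$ and $S_2^\Gamma=\bigl|S_1^\Gamma\bigr|^\Gamma$ is precisely the bi-negativity of $S_1$. Suppose one could show that the minimiser may always be chosen so that $S_2^\Gamma\geq 0$, i.e.\ so that $S_1$ has \emph{zero bi-negativity}. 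Then, on the one hand, $S_3\coloneqq S_2^\Gamma\geq 0$ extends $(S_{-1},S_0,S_1,S_2)$ to a feasible point of the SDP defining $\chi_3(\rho)$ with $\Tr S_3=\Tr S_2^\Gamma=\Tr S_2=\chi_2(\rho)$, giving $\chi_3(\rho)\leq\chi_2(\rho)$; on the other hand $(S_{-1},S_0,S_1,S_2)$ with $S_2^\Gamma\geq 0$ is itself feasible for the SDP~\eqref{kappa_q_1_SM} defining $\kappa_3(\rho)$, giving $\kappa_3(\rho)\leq\chi_2(\rho)$. Either way \eqref{conj_reduction} holds. More generally, if $S_2^\Gamma\not\geq 0$ one can correct it by adding a PPT operator $M$ exactly as in Lemma~\ref{chi_to_kappa_lemma}, at the price $\Tr M=d_{\max}\bigl(\bigl(S_2^\Gamma\bigr)_-\,\big\|\,\PPT\bigr)$; the conjecture is thus equivalent to the assertion that an optimal $\chi_2$-point exists for which this price vanishes. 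The natural plan to establish this would be to study the structure of the minimiser through the dual SDP of Proposition~\ref{dual_SDP_chi_prop} and complementary slackness, aiming to show that at optimality $S_1^\Gamma$ is forced into a bi-negativity-free ``checkerboard'' form analogous to the one appearing in the punch-card analysis.

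\emph{The main obstacle.} The last step is exactly the open point, and the reason the statement is only conjectural: there is no evident reason why \emph{some} $\chi_2$-optimal $S_1$ must have zero bi-negativity, and a priori the minimiser could carry residual bi-negativity ``down the chain''. A proof would most plausibly proceed either by a self-improvement argument --- perturbing any minimiser with $\bigl|S_1^\Gamma\bigr|^\Gamma\not\geq 0$ along the negative eigenvectors of its bi-negativity to produce a strictly better feasible point, contradicting optimality --- or by a fixed-point argument showing that iterating the map $X\mapsto\bigl|X^\Gamma\bigr|$ twice always lands in the zero-bi-negativity cone. The numerical evidence (states with $E_{\chi,0}<E_{\chi,1}<E_{\chi,2}$ do occur, but we never observed $E_{\chi,2}<E_{\chi,3}$) is consistent with such behaviour, yet a rigorous argument --- or, conversely, a counterexample exhibiting $E_{\chi,2}(\rho)<E_{\chi,3}(\rho)$ --- remains to be found, and we leave it open.
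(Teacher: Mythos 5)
This statement is posed in the paper as an open \emph{conjecture}, supported only by numerical evidence; the paper offers no proof of it, and you rightly do not claim one either. Everything you actually establish is correct and consistent with the paper's own discussion of the open problem. Your reduction of the full collapse to the single inequality $\kappa_3(\rho)\leq\chi_2(\rho)$ is sound, since $\chi_2\leq\chi_3\leq 2^{\ecost}\leq\kappa_3$ holds by Proposition~\ref{relation_hierarchies_prop} and Theorem~\ref{hierarchies_chi_kappa_infty_thm}; the converse direction via Proposition~\ref{convergence_primitive_prop} with $p=3$ is exactly the observation the paper itself makes (namely that $E_{\chi,p}=E_{\chi,p-1}$ forces $E_{\kappa,p}\leq E_{\chi,p}$ and hence a complete collapse). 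Your candidate construction is also correct as far as it goes: if some $\chi_2$-optimal tuple $(S_{-1},S_0,S_1,S_2)$ can be chosen with $S_2^\Gamma\geq 0$, then it is feasible for the SDP defining $\kappa_3$ and yields $\kappa_3(\rho)\leq\Tr S_2=\chi_2(\rho)$; and conversely, if $\kappa_3=\chi_2$ then the $\kappa_3$-optimiser is itself such a point, so the equivalence you assert holds. The correction term $\Tr M=d_{\max}\bigl(\bigl(S_2^\Gamma\bigr)_-\,\big\|\,\PPT\bigr)$ is the right price, in line with Lemma~\ref{chi_to_kappa_lemma}.

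The obstacle you name --- showing that some $\chi_2$-optimal point has vanishing residual bi-negativity at the last level --- is precisely the open problem; neither a self-improvement argument nor a counterexample is known, which is why the paper (and you) can only leave the statement as a conjecture. In short: there is no error in your proposal, but there is also no proof here, nor in the paper; the identified gap is genuine and coincides with the question the authors explicitly leave open.
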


In fact, we were not even able to find a gap between $E_{\chi,2}$ and $E_{\kappa,2}$, so one could even make the stronger conjecture that $\ecost(\rho) = E_{\chi,2}(\rho) = E_{\kappa,2}(\rho)$ holds for all states, which would entail that both hierarchies collapse at the second level. For the sake of obtaining a single-letter, limit-free formula for $\ecost$, 
it would be very interesting to resolve this question in future work. However, as remarked in the main text, doing so would only amount to a $\operatorname{poly}(d)$ reduction of the computational complexity of calculating the zero-error PPT entanglement cost $\ecost$.

\section{From zero error to very small error}

Zero-error entanglement manipulation tasks may seem on the surface un-physical, because nothing in nature happens with zero error, and small, undetectable errors should therefore always be included into the picture. Here we will do precisely that, and we will find that \emph{if those errors are assumed to be sufficiently small (but non-zero), the resulting entanglement dilution rates are the same as in the zero-error case.} In this context, `sufficiently small' could mean for example `going to $0$ super-exponentially in the number of copies of the given state'.

This is no different from what happens in classical information theory, where the capacity for communicating on a noisy classical channel with error probability going to $0$ super-exponentially equals its zero-error capacity, defined by Shannon in his landmark 1956 paper~\cite{Shannon-zero-error}. The argument to prove this claim seems to be part of the folklore in information theory, and it was brought to our attention by Andreas Winter. It goes as follows. A classical channel $\NN$ with input alphabet $\XX$ and output alphabet $\pazocal{Y}$ can be thought of as a transition matrix $N(y|x)\geq 0$, with $\sum_y N(y|x)=1$ for all $x\in \XX$. A code to communicate on $\NN^{\otimes n}$ is a list of words $x^n_1,\ldots, x^n_M$, where $M=2^{\ceil{nR}}$, and $R$ is the rate of the code. Suppose that for a fixed $n$ a word in $\{1,\ldots, M\}$ is drawn at random. Whatever the decoder is, the probability of making an error can be estimated from below as follows. Pick two distinct $i,j\in\{1,\ldots, M\}$. For a fixed $y^n\in \pazocal{Y}^n$, call $p_i \coloneqq \NN^{\otimes n}(y^n|x_i^n)$ and $p_j \coloneqq \NN^{\otimes n}(y^n|x_j^n)$ the probabilities that these two words get transformed into the same output word $y^n\in \pazocal{Y}^n$. When that happens, even if the decoding party knows that the message was either $i$ or $j$ all they can do is to guess $i$ or $j$ with a maximum likelihood rule, i.e.\ $i$ with probability $\frac{p_i}{p_i+p_j}$ and $j$ with probability $\frac{p_j}{p_i+p_j}$. The total probability of error is thus at least
\bb
\frac1M\, p_i\, \frac{p_j}{p_i+p_j} + \frac1M\, p_j\, \frac{p_i}{p_i+p_j} = \frac1M\, \frac{2p_ip_j}{p_i+p_j} \geq \frac1M \min\{p_i, p_j\}\, .
\ee
Now, if the right-hand side is not zero for some $y^n$, then it must be at least $t^n/M$, where $t\coloneqq \min_{x,y:\, N(y|x)>0} N(y|x)$. Therefore the probability of error of the whole process (encoding, transmission, and decoding) is at least $t^n 2^{-\ceil{nR}}$, which cannot go to $0$ super-exponentially. This completes the summary of the state of affairs in classical communication over noisy channels.

To arrive at analogous conclusions in the case of PPT entanglement dilution, we first need to fix some terminology. Our first task is to design a way to control the errors incurred in an arbitrary entanglement manipulation protocol. This can be done by introducing a rate-error pair achievability region for entanglement manipulation under a given set of free operations $\FF$, defined as follows.

\begin{Def}
For two bipartite states $\rho=\rho_{AB}$ and $\omega=\omega_{AB}$ and a class of free operations $\FF$, we say that the rate-error pair $(r,s)$ is achievable for the transformation $\rho \to \omega$ with operations in $\FF$ if there exists a sequence of protocols $(\Lambda_n)_{n\in \N_+}$, with $\Lambda_n\in \FF\left(A^nB^n\to {A'}^n{B'}^n\right)$ for all $n\in \N_+$, such that
\bb
\liminf_{n\to\infty} \left\{-\frac1n \log_2 \frac12 \left\|\Lambda_n\big(\rho^{\otimes \ceil{rn}}\big) - \omega^{\otimes n}\right\|_1\right\} \geq s\, .
\ee
\end{Def}

In our case, we care about the case where $\FF=\ppt$ is the set of PPT operations, the initial state is the ebit $\Phi_2$, and the final state is an arbitrary $\rho$. In this case, a rate-error pair that is achievable for $\Phi_2\to \rho$ under PPT operations is also called an achievable rate-error pair for PPT entanglement dilution to $\rho$. The main result of this section is as follows.

\begin{thm} \label{small_error_thm}
Let $\rho=\rho_{AB}$ be an arbitrary finite-dimensional bipartite state. If $(r,s)$ is achievable for PPT entanglement dilution to $\rho$, then
\bb
\max\big\{r,\, \log_2 d-s\big\} \geq \ecost(\rho)\, .
\ee
In particular, for all $s > \log_2 d - \ecost(\rho)$ the rate-error pair $(r,s)$ is achievable if and only if $r\geq \ecost(\rho)$.
\end{thm}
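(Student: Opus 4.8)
The plan is to prove the contrapositive-flavoured inequality $\max\{r,\log_2 d - s\} \geq \ecost(\rho)$ directly, by a ``smoothing and rounding'' argument that converts a low-error dilution protocol into a zero-error one at essentially the same rate, and then invoking the definition of $\ecost$. Suppose $(r,s)$ is achievable for PPT entanglement dilution to $\rho$, witnessed by a sequence $(\Lambda_n)$ with $\Lambda_n\big(\Phi_2^{\otimes \ceil{rn}}\big) =: \sigma_n \approx \rho^{\otimes n}$ in the sense that $\tfrac12\|\sigma_n - \rho^{\otimes n}\|_1 \leq 2^{-n(s-\delta)}$ for all large $n$ and any fixed $\delta>0$. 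If $r \geq \ecost(\rho)$ there is nothing to prove, so assume $r < \ecost(\rho)$; the goal is then to show $\log_2 d - s \geq \ecost(\rho)$, i.e.\ $s \leq \log_2 d - \ecost(\rho)$.

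The key step is a \emph{correction lemma}: from the approximate output $\sigma_n$ one can, using only PPT operations and a small additional supply of ebits, produce $\rho^{\otimes n}$ exactly. Concretely, I would write $\rho^{\otimes n} = \sigma_n + (\rho^{\otimes n} - \sigma_n)$ and split the Hermitian perturbation $\Delta_n := \rho^{\otimes n}-\sigma_n$ into positive and negative parts $\Delta_n = \Delta_n^+ - \Delta_n^-$ with $\Tr\Delta_n^+ = \Tr\Delta_n^- = \epsilon_n \leq 2^{-n(s-\delta)}$. The idea is to build a PPT channel that, with probability $1-c\epsilon_n$ acts as the identity and with the remaining probability discards the system and reprepares an appropriate state from a fresh maximally entangled resource: since \emph{every} state on $A^nB^n$ can be generated by teleportation (hence by PPT operations) from $\Phi_d^{\otimes n}$ at cost $\log_2 d$ per copy (used already in~\eqref{efficient_algorithm_first_approx} and Corollary~\ref{max_values_cor}), the correction step costs at most $n\log_2 d$ ebits but is invoked only on an event of vanishing weight. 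Combining the original protocol on $\lceil rn\rceil$ ebits with the correction protocol, and using $\tfrac1n\log_2\big(\ceil{rn} + n\log_2 d\big) \to \max\{r,\log_2 d\}$ is \emph{not} quite what we want --- instead the resources must be \emph{mixed}, so the effective rate is a convex-combination-type quantity. This is where care is needed: one should realise the exact transformation $\Phi_2^{\otimes m_n}\to\rho^{\otimes n}$ with $m_n = \ceil{rn} + \lceil n \log_2 d \cdot (\text{something like } \epsilon_n \cdot 2^{\,O(n)})\rceil$, and the point is precisely the classical folklore bound recalled just before the theorem: the error $\epsilon_n$ cannot decay faster than $t^n 2^{-\ceil{rn}}$-type behaviour unless the correction is genuinely exact, which forces a quantitative trade-off. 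I would formalise the dilution analogue of that folklore argument: because PPT operations form a convex set whose extreme points have a matrix-entry structure bounded below by a dimension-dependent constant, an achievable error sequence $\epsilon_n$ that beats the threshold $2^{-n(\log_2 d - r)}$ would allow exact conversion at rate $r$, contradicting $r<\ecost(\rho) = E_\kappa^\infty(\rho)$ by Theorem~\ref{hierarchies_chi_kappa_infty_thm}; hence $s \leq \log_2 d - r$ is forced whenever $r < \ecost(\rho)$, but we actually need the stronger $s \leq \log_2 d - \ecost(\rho)$.

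To get the sharp bound I would instead run the argument against the lower bounds $E_{\chi,p}$ rather than against $\ecost$ itself, exploiting their \emph{continuity} (Proposition~\ref{continuity_properties_prop}) and \emph{additivity} (Proposition~\ref{Prop_additivity}). From $\tfrac12\|\sigma_n - \rho^{\otimes n}\|_1 \leq \epsilon_n$ and PPT-monotonicity of $E_{\chi,p}$ (Proposition~\ref{strong_monotonicity_prop}) applied to $\Lambda_n$, we get $E_{\chi,p}(\sigma_n) \leq E_{\chi,p}\big(\Phi_2^{\otimes\ceil{rn}}\big) = \ceil{rn}$, using $E_{\chi,p}(\Phi_2) = \log_2 2 = 1$ from Lemma~\ref{normalisation_lemma}. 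By continuity, $E_{\chi,p}(\rho^{\otimes n}) \leq E_{\chi,p}(\sigma_n) + \log_2(1 + 2 d^n \epsilon_n) \leq \ceil{rn} + \log_2(1 + 2 d^n \epsilon_n)$, where the dimension factor is $d^n = d_{AB}^{\,n}$ is bounded by $D^n$; and by additivity the left side equals $n\,E_{\chi,p}(\rho)$. Dividing by $n$ and letting $n\to\infty$: if $\epsilon_n = 2^{-n(s-\delta)}$ then $\tfrac1n\log_2(1+2d^n\epsilon_n) \to \max\{0,\ \log_2 d - s + \delta\}$, so we obtain $E_{\chi,p}(\rho) \leq r + \max\{0, \log_2 d - s\} = \max\{r,\ r + \log_2 d - s\} \leq \max\{r,\ \log_2 d - s\}$ (the last step since $r \leq \log_2 d$). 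Letting $p\to\infty$ and invoking Theorem~\ref{convergence_thm} (Eq.~\eqref{convergence_simplified_SM}) gives $\ecost(\rho) \leq \max\{r,\ \log_2 d - s\}$, which is exactly the claim; the final ``in particular'' is then immediate since $s > \log_2 d - \ecost(\rho)$ forces $r \geq \ecost(\rho)$, and the converse direction ($r\geq\ecost(\rho)$ implies achievability of $(r,s)$ for every $s$, in particular exact dilution has error $0$, hence $s=+\infty$ is achievable trivially) follows directly from the definition~\eqref{ecost} of $\ecost$.

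The main obstacle I anticipate is pinning down the continuity step with the \emph{correct dimensional constant}: Proposition~\ref{continuity_properties_prop} gives a Lipschitz constant proportional to the minimal local dimension $d$ of the \emph{bipartition in question}, and for $\rho^{\otimes n}$ that bipartition has minimal local dimension $d^n$, so one must be careful that the factor entering is $d^n$ (equivalently the relevant $\log_2$ of it is $n\log_2 d$) and not something larger like $D^n$ --- getting $\log_2 d$ rather than $\log_2 D$ in the exponent is what makes the bound match $\ecost(\rho) \leq \log_2 d$ tightly and yields the clean statement $\max\{r,\log_2 d - s\}\geq \ecost(\rho)$. A secondary subtlety is justifying that $E_{\chi,p}$ of $\Phi_2^{\otimes \ceil{rn}}$ equals $\ceil{rn}$ exactly (not merely $\leq$), which follows from additivity plus $E_{\chi,p}(\Phi_2)=1$, both already established; and that no rounding issue in $\ceil{rn}$ survives the $\tfrac1n$ limit, which is routine.
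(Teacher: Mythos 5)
Your final argument (third paragraph) is structurally the same as the paper's: the paper runs exactly this chain with the quantity $\kappa_1 = 2^{E_\kappa}$ --- PPT monotonicity (Proposition~\ref{strong_monotonicity_prop}), then continuity on $A^nB^n$ with Lipschitz constant $2d^n$ (Proposition~\ref{continuity_properties_prop}), then $\ecost = E_\kappa^\infty$ --- whereas you use $E_{\chi,p}$ together with its additivity and the limit $p\to\infty$. Either monotone would work. The exploratory ``correction lemma'' of your first two paragraphs is not needed and you rightly abandon it.

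However, the execution contains a genuine error in the last algebraic step. Applying the continuity bound in the logarithmic form~\eqref{continuity_f} gives $n\,E_{\chi,p}(\rho) \leq \ceil{rn} + \log_2\!\big(1+2d^n\e_n\big)$, and after dividing by $n$ and taking $n\to\infty$ this yields the \emph{additive} bound $E_{\chi,p}(\rho) \leq r + \max\{0,\,\log_2 d - s\}$, not the maximum. Your claimed step $\max\{r,\ r+\log_2 d - s\}\leq \max\{r,\ \log_2 d - s\}$ is false whenever $r>0$ and $s<\log_2 d$ (e.g.\ $r=1$, $\log_2 d=2$, $s=1$ gives $2\not\leq 1$); the hypothesis $r\leq\log_2 d$ does not rescue it. The weaker additive bound is genuinely insufficient: for $\log_2 d - \ecost(\rho) < s < \log_2 d$ it only yields $r>0$, not $r\geq \ecost(\rho)$, so the ``in particular'' clause does not follow. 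The fix is to stay in the linear domain, as the paper does: use~\eqref{continuity_F} to get $\chi_p(\rho)^n = \chi_p\big(\rho^{\otimes n}\big) \leq \chi_p(\sigma_n) + 2d^n\e_n \leq 2^{\ceil{rn}} + 2d^n\e_n \leq 2\max\big\{2^{\ceil{rn}},\,2d^n\e_n\big\}$, and only then take $\frac1n\log_2$ and the limit, under which the sum of two exponentials correctly becomes the maximum of their rates, giving $E_{\chi,p}(\rho)\leq\max\{r,\,\log_2 d - s\}$ and, after $p\to\infty$, the claim. Your identification of the correct dimensional factor $d^n$ (rather than $D^n$) and the treatment of $E_{\chi,p}\big(\Phi_2^{\otimes\ceil{rn}}\big)=\ceil{rn}$ are both fine.
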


Before proving the above result, let us discuss its implications. What this shows is that if the error is required to decay \emph{sufficiently fast}, i.e.\ faster than $2^{-n \big(\log_2 d - \ecost(\rho)\big)}$ asymptotically, then the optimal rate of entanglement dilution coincides with its zero-error value. In other words, once a certain threshold in $s$ is passed, it does not make a difference for the rate $r$ whether $s$ increases further, even if it goes all the way to infinity. Note that Theorem~\ref{small_error_thm} applies to two suggestive special cases, namely where (i)~we require the error to decay faster than $d^{-n}$; and therefore also when (ii)~we require a super-exponential decay law. In both of these cases, the relevant optimal rate of PPT entanglement dilution coincides with the zero-error PPT entanglement cost $\ecost(\rho)$.

\begin{proof}[Proof of Theorem~\ref{small_error_thm}]
%
%
Consider a sequence of PPT operations $(\Lambda_n)_{n\in \N_+}$ such that
\bb
\e_n \coloneqq \frac12 \left\|\Lambda_n\big(\Phi_2^{\otimes \floor{rn}}\big) - \rho^{\otimes n} \right\|_1
\ee
exhibits the asymptotic decay rate
\bb
\liminf_{n\to\infty} \left\{-\frac1n \log_2 \e_n\right\} \geq s\, .
\ee
Then,
\bb
2^{\floor{rn}} &= \kappa_1\left( \Phi_2^{\otimes \floor{rn}} \right) \\
&\geqt{(i)} \kappa_1\left( \Lambda_n\left(\Phi_2^{\otimes \floor{rn}} \right)\right) \\
&\geqt{(ii)} \kappa_1\big(\rho^{\otimes n}\big) - 2d^n \e_n \\
\label{small_error_proof_chain}
\ee
where (i)~is a consequence of the monotonicity of $\kappa_1 = 2^{E_\kappa}$ under PPT operations (Proposition~\ref{strong_monotonicity_prop}), while (ii)~is an application of Proposition~\ref{continuity_properties_prop}, and in particular of~\eqref{continuity_F} with $F=\kappa_1$. Therefore,
\bb
1 + \max\big\{\floor{rn},\, 1 + n \log_2 d + \log_2 \e_n\big\} &= \log_2 \left(2 \max\left\{2^{\floor{rn}},\, 2d^n\e_n\right\} \right) \\
&\geq \log_2\left(2^{\floor{rn}} + 2d^n \e_n \right) \\
&\geq \log_2 \kappa_1\big(\rho^{\otimes n}\big) \\
&= E_\kappa\big(\rho^{\otimes n}\big)\, .
\ee
Dividing by $n$ and taking the $\limsup$ of both sides as $n\to\infty$ yields precisely
\bb
\max\{r,\, \log_2 d -s\} &\geq \limsup_{n\to\infty} \frac1n \left( 1 + \max\big\{\floor{rn},\, 1 + n \log_2 d + \log_2 \e_n\big\} \right) \\
&\geq \limsup_{n\to\infty} \frac1n\, E_\kappa\big(\rho^{\otimes n}\big) \\
&= \lim_{n\to\infty} \frac1n\, E_\kappa\big(\rho^{\otimes n}\big) \\
&= E_\kappa^\infty(\rho)\, ,
\ee
concluding the proof.
\end{proof}

\end{document}